\newcommand{\spnote}[1]{}
\newcommand{\snnote}[1]{}
\newcommand{\shyamal}[1]{}
\newcommand{\shivam}[1]{}
\newcommand{\ignore}[1]{}
\newcommand{\coords}{\textsc{Construct-Coordinate-Oracles}}
\newcommand{\yes}{{\mathrm{yes}}}
\newcommand{\no}{{\mathrm{no}}}
\newcommand{\Dyes}{\calD_{\yes}}
\newcommand{\Dno}{\calD_{\no}}
\newcommand{\bfyes}{\boldf_\yes}
\newcommand{\bfno}{\boldf_\no}
\newcommand{\fno}{f_\no}
\newcommand{\warmbt}{\textsc{Warmup-Ball-Tester}}
\newcommand{\finebt}{\textsc{$k$-Junta-Distance}}
\newcommand{\TpS}{\textsc{Hold-Out-Noise-Evaluations}}
\newcommand{\Bad}{\mathsf{Bad}}
\title{Optimal Non-Adaptive Tolerant Junta Testing\\ via Local Estimators \vspace{0.5em}}
\author{
Shivam Nadimpalli\\ Columbia University\\ \normalsize \texttt{sn2855@columbia.edu}
\and 
Shyamal Patel\\ Columbia University\\ \normalsize \texttt{shyamalpatelb@gmail.com}
\vspace{0.5em}
}
\date{\small\today}
\begin{document}

\pagenumbering{gobble}
\maketitle  

\begin{abstract}

We give a non-adaptive algorithm that makes $2^{\wt{O}(\sqrt{k\log(1/\eps_2 - \eps_1)})}$ queries to a Boolean function $f\isafunc$ and distinguishes between $f$ being $\eps_1$-close to some $k$-junta versus $\eps_2$-far from every $k$-junta. At the heart of our algorithm is a {local} mean estimation procedure for Boolean functions that may be of independent interest. We complement our upper bound with a matching lower bound, improving a recent lower bound obtained by Chen et al. We thus obtain the first tight bounds for a natural property of Boolean functions in the tolerant testing model.

\end{abstract}

\newpage

\pagenumbering{arabic}

\section{Introduction}
\label{sec:intro}

A Boolean function $f\isafunc$ is a \emph{$k$-junta} if its output depends on only $k$ out of its $n$ input variables. Juntas are a central object of study in computational complexity theory~\cite{Jukna01book,Juknacomplexitybook,odonnell-book} and related areas such as learning theory~\cite{Blum:94relevant, MOS:03}, where they elegantly model the problem of learning in the presence of irrelevant features. 

\paragraph{Junta Testing.}

Consider the problem of testing juntas: Given query access to a Boolean function $f\isafunc$, distinguish with probability $2/3$ whether (i) $f$ is a $k$-junta; or (ii) $f$ is $\eps$-far from \emph{every} $k$-junta, where we say that $f$ is \emph{$\eps$-far} from $g$ if 
\[\dist(f,g) := \Prx_{\bx \sim \bits^n} [g(\bx) \not = f(\bx)] \geq \eps.\]
We will say that $f$ is \emph{$\eps$-close} to $g$ if $\dist(f,g) \leq \eps$.
Testing $1$-juntas (i.e. ``dictators'') has its origins in the study of PCPs~\cite{BGS98,Hastad:01}, and the general problem of testing $k$-juntas was developed in~\cite{parnas2001proclaiming,FKR+:02}. 
After two decades of intensive research, the complexity of testing $k$-juntas is well-understood: 
\begin{itemize}
	\item The state-of-the-art {adaptive} algorithms use $\wt{O}(k/\eps)$ queries~\cite{Blaisstoc09,Bshouty19}, and matching $\wt{\Omega}(k)$ query bounds are known for adaptive algorithms when $\eps$ is a sufficiently small constant~\cite{ChocklerGutfreund:04, sauglam2018near}.
	\item For {non-adaptive} algorithms, Blais~\cite{Blaisstoc09}  gave a $\wt{O}(k^{3/2}/\eps)$-query algorithm, and a celebrated result of \cite{chen2018settling} proved a matching lower bound of $\wt{\Omega}(k^{3/2}/\eps)$ queries against non-adaptive algorithms (see also~\cite{servedio2015adaptivity}). 
\end{itemize}

We remind the reader that an algorithm is \emph{adaptive} if during its execution its choice of queries to $f$ are allowed to depend on the answers to the queries made thus far; we say that it is non-adaptive otherwise. In other words, the queries made by a non-adaptive algorithm are independent of the function $f$. Non-adaptive algorithms are frequently preferred over their adaptive counterparts, in large part due to their simpler as well as highly parallelizable nature.

\paragraph{{Tolerant} Junta Testing.} 
Note that the standard property testing model is extremely brittle: It requires the algorithm accept if and only if the function satisfies the property. This is not desirable in many applications, where the presence of noise in the queries to $f$ allow a tester in the standard model to simply reject the function. As a concrete example, it is often the case when learning a function that a few variables may explain most of the behavior of our function, but not all of it. In this case, we would again morally like to view our function as a junta. 

Motivated by this, Parnas, Ron and Rubinfeld~\cite{parnas2006tolerant} introduced the model of \emph{tolerant} property testing. The tolerant junta testing problem---which is the focus of this paper---is the following: 
Given query access to a function $f\isafunc$ and constants $0\leq \eps_1 < \eps_2 \leq 1/2$, distinguish with probability $2/3$ whether (i) $f$ is $\eps_1$-close to some $k$-junta; or (ii) $f$ is $\eps_2$-far from every $k$ junta. We will say that an algorithm $(\eps_1, \eps_2)$-tolerantly tests $k$-juntas if it has this performance guarantee. Note that the case when $\eps_1 = 0$ recovers the standard property testing model.

\begin{remark}
\label{rem:tol-test-dist-equiv}
It is not too difficult to see that tolerant junta testing is equivalent to the problem of \emph{distance estimation to a junta}. More formally, writing $\calJ_k$ for the class of $k$-juntas on $n$ variables and for $f\isafunc$, defining 
\[\dist(f,\calJ_k) := \min_{g \in \calJ_k} \dist(f,g),\]
one can see that estimating $\dist(f,\calJ_k)$ up to additive error $\eps := (\eps_2-\eps_1)/2$ immediately gives a $(\eps_1, \eps_2)$-tolerant tester~\cite{parnas2006tolerant}.
In light of this, we will frequently switch between tolerant testing juntas and junta distance estimation.
\end{remark}

The landscape of tolerant junta testing is starkly different from that of junta testing in the standard model:
\begin{itemize}
	\item The state-of-the-art upper bound, due to Iyer, Tal, and Whitmeyer~\cite{ITW21} gives an adaptive $2^{\wt{O}(\sqrt{k/\eps})}$-query algorithm for estimating distance to a $k$-junta. On the lower bounds front, the sole improvement to lower bounds from the standard model is due to the recent work of~\cite{ChenPatel23} that shows a mere $k^{\Omega(\log(1/\eps))}$ queries are necessary to approximate the distance to a $k$-junta. 
	\item On the non-adaptive front, the best known upper bound is due to De, Mossel, and Neeman~\cite{DMN19} who gave a $2^{k} \cdot \poly(k, \eps^{-1})$-query algorithm to estimate the distance to junta to additive $\eps$-error. (See also prior work by~\cite{parnas2006tolerant,chakraborty2012junto,BCELR16}.) Turning to lower bounds, a recent line of work starting with~\cite{LW18} yielded a $2^{\Omega(\sqrt{k})}$-query lower bound for some constant~$\eps=\Theta(1)$~\cite{chen2023mildly}. The result of~\cite{chen2023mildly} closely builds upon a previous lower bound due to~\cite{PRW22}.
\end{itemize}

Our main result closes the gap (up to logarithmic factors) between the upper and lower bounds on the query complexity of non-adaptive tolerant junta testing:

\begin{theorem}
\label{thm:fine-tester}
There exists a non-adaptive $\eps$-distance estimator for the set of $k$-juntas that makes at most $\poly(k, \eps^{-1}) \cdot 2^{\wt{O}(\sqrt{k \log(1/\eps)})}$ queries, where the $\wt{O}$ notation hides $\log(k)$ and $\log\log(1/\eps)$ factors.
\end{theorem}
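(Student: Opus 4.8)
The plan is to reduce $\eps$-distance estimation for $k$-juntas to a sequence of \emph{local mean estimation} problems, and then to solve each of those using a Chebyshev-polynomial-based estimator whose degree is $\wt O(\sqrt{k\log(1/\eps)})$. The starting point is the standard observation (going back to \cite{FKR+:02,Blaisstoc09}) that if $f$ is close to a $k$-junta then the relevant coordinates can be approximately identified by looking at the influence structure after a random partition of $[n]$ into $\poly(k)$ (or $2^{\wt O(\sqrt k)}$) blocks; conditioning on the values of the irrelevant blocks, $f$ restricts to a function that is essentially constant on each such restriction. The key quantity to estimate is then, for each setting $z$ of a candidate junta-set $J$ of size $k$, the bias $\mu_z := \Prx_{\bx_{\overline J}}[f(z, \bx_{\overline J})=1]$, because $\dist(f,\calJ_k)$ (with $J$ fixed) is exactly $\E_z[\min(\mu_z, 1-\mu_z)]$, and minimizing over $J$ gives the true distance. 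So the whole problem boils down to: non-adaptively, using few queries, estimate all the $2^k$ local means $\mu_z$ well enough (in an averaged sense) — this is precisely the ``local mean estimation procedure for Boolean functions'' the abstract advertises.

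The core technical step I would carry out is the local mean estimator itself. Naively estimating a single $\mu_z$ to additive error $\eps$ costs $\eps^{-2}$ queries at the point $z$, but we cannot afford fresh independent samples for each of the $2^k$ values of $z$, and moreover in the non-adaptive setting we do not even know $J$ in advance. The idea is to use \emph{shared randomness across the $z$'s}: draw a single moderate-sized pool of random points, and for each candidate $(J,z)$ form an estimator that is a low-degree polynomial (in the $\pm 1$ encoding) in the queried values, designed so that its expectation tracks $\mu_z$ and its variance is controlled. Here is where Chebyshev polynomials enter: to detect/estimate the mean with a polynomial of degree $d$ one uses a truncated/extremal polynomial approximation, and the bias–variance tradeoff of the Chebyshev polynomial of degree $d\approx\sqrt{k\log(1/\eps)}$ is exactly what yields query complexity $2^{\wt O(\sqrt{k\log(1/\eps)})}$ rather than $2^{\Theta(k)}$ — this is the same mechanism (amplified Chebyshev polynomials / the ``Chebyshev trick'' used in \cite{ITW21} and in noise-stability arguments) that gives the $\sqrt{k}$ in the exponent for the non-tolerant non-adaptive tester, now refined to capture the $\log(1/\eps)$ savings. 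I would set up the estimator, compute its expectation (showing it is a good proxy for $\min(\mu_z,1-\mu_z)$ up to $O(\eps)$), bound its variance, and then union/Chernoff over the $\binom{n}{k}$ (or block-reduced $\binom{\poly(k)}{k}=2^{\wt O(\sqrt k)}\cdot\dots$) choices of $J$.

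The remaining steps are the ``plumbing'' around the estimator: (i) a \emph{random partitioning / coordinate-oracle} reduction that replaces the $n$-variable function by an effectively $\poly(k)$- or $2^{\wt O(\sqrt k)}$-variable object, so that the number of candidate junta-sets $J$ is subexponential in $\sqrt k$ and the union bound is affordable — this uses the \coords{}-style construction hinted at by the macros in the preamble, and one must argue the partition does not lose more than $O(\eps)$ in the distance; (ii) showing that estimating $\E_z[\min(\mu_z,1-\mu_z)]$ well for \emph{every} $J$ in the candidate family, and taking the minimum, gives an $\eps$-accurate estimate of $\dist(f,\calJ_k)$ — the completeness direction is immediate (the true junta-set is in the family), and the soundness direction needs that no spurious $J$ can make the estimate much smaller than the truth, which follows because the estimator's expectation is a genuine lower bound (up to $O(\eps)$) on the distance from $f$ to the best $k$-junta on $J$; and (iii) assembling the query complexity: $\poly(k,\eps^{-1})$ from the per-block mean estimation accuracy, times $2^{\wt O(\sqrt{k\log(1/\eps)})}$ from the Chebyshev degree and the number of blocks.

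The main obstacle I anticipate is the \emph{variance control of the local mean estimator under shared randomness}: because the same queried values feed into the estimators for all $2^k$ strings $z$, these estimators are heavily correlated, and one must show that nonetheless a single pool of $2^{\wt O(\sqrt{k\log(1/\eps)})}$ queries suffices to get all of them accurate simultaneously in the averaged-over-$z$ sense that $\dist$ requires. Getting the Chebyshev-polynomial parameters tuned so that the degree is $\wt O(\sqrt{k\log(1/\eps)})$ — capturing the logarithmic-in-$\eps$ improvement over a crude $\sqrt{k}/\eps$-type bound — and simultaneously keeping the variance (hence sample count) at $2^{\wt O(\sqrt{k\log(1/\eps)})}$ is the delicate quantitative heart of the argument; everything else is a careful but routine combination of random partitioning, Chernoff bounds, and union bounds over the $2^{\wt O(\sqrt k)}$ candidate junta-sets.
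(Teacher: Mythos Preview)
Your high-level architecture is right and matches the paper: reduce via coordinate oracles to $\poly(k)$ ``relevant'' coordinates, then for each candidate $k$-set $J$ estimate $\dist(f,\calJ_J)=\tfrac12-\tfrac12\E_z\bigl[|\E_y f(z,y)|\bigr]$ using a low-degree (Chebyshev/flat-polynomial) local estimator of degree $\wt O(\sqrt{k\log(1/\eps)})$, and union-bound over the $\binom{\poly(k)}{k}$ candidates. The polynomial you want is exactly the Kahn--Linial--Samorodnitsky flat polynomial the paper uses.

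There is, however, a genuine gap in how you propose to \emph{implement} the local estimator non-adaptively, and it is not the variance-under-shared-randomness issue you flag (that part is handled routinely by Hoeffding). Two obstacles you do not address:
\begin{enumerate}
\item The coordinate-oracle reduction does \emph{not} hand you a function on $\poly(k)$ bits; it gives implicit oracles to $\poly(k)$ coordinates while $f$ still lives on $\bits^n$. So ``query a Hamming ball of radius $\wt O(\sqrt{k\log(1/\eps)})$'' costs $n^{\wt O(\sqrt{k})}$, which is not allowed. Even if you could pretend the ambient dimension is $\poly(k)$, the basic local estimator needs radius $\Omega(\sqrt{\poly(k)})$, giving $2^{\poly(k)}$ queries.
\item You must average out the $n-|J|$ irrelevant coordinates \emph{non-adaptively} while still being able to reuse the same queries for every candidate $J$. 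Your sketch does not say how.
\end{enumerate}
The paper's actual technical heart is precisely the fix for these: first apply a \emph{hold-out} noise operator $\T_\rho^S$ (noise all coordinates outside the candidate set $S$) to force exponential Fourier tail decay, so that the flat-polynomial estimator now needs only radius $\wt O(\sqrt{k\log(1/\eps)})$ regardless of ambient dimension; second, observe that the estimator $\sum_\ell \alpha_\ell \sum_{|T|=\ell}\partial_T f(x)\chi_T(x)$ equals a linear combination of $\frac{d^\ell}{d\rho^\ell}\T_\rho f(x)\big|_{\rho=1}$, and compute these derivatives by high-order \emph{numerical differentiation} from samples of $\T_{\rho'} f(x)$ at a few values $\rho'$ --- each such sample is obtained by standard random rerandomization with no $n$-dependence, and the same noised samples serve all candidate sets $S$ simultaneously. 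Without these two ingredients your plan, as written, does not yield an $n$-independent query bound.
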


In particular, our non-adaptive algorithm matches the ``highly adaptive'' state-of-the-art upper bound~\cite{ITW21} in terms of dependence on $k$, and even improves upon its $\eps$-dependence. 
Our main technical insight is to import ideas and techniques from approximation theory---in particular, tools used to obtain ``approximate inclusion-exclusion'' bounds~\cite{LinialNisan:90,kahn1996inclusion}---to junta testing. 
Using this framework, we construct estimators for the absolute value of the mean of a function $f\isafunc$ (i.e. for $|\E[f]|$) that can be \emph{locally} computed, i.e. computed using only the values of the function restricted to a random Hamming ball of radius $O(\sqrt{n})$. 
We believe that our local estimator is of independent interest.
Furthermore, to our knowledge, this is the first application of these tools to property testing of Boolean functions. 

We additionally complement our upper bound with a matching lower bound, improving upon the construction of \cite{chen2023mildly} by incorporating general $\eps$ dependence.

\begin{theorem}
\label{thm:lower-bound}
	Let $2^{-O(k)} \leq \eps$, then any $\eps$-distance estimator for $k$-junta must make at least $2^{\wt{\Omega}(\sqrt{k \log(1/\eps)})}$ queries, where the $\wt{\Omega}$ hides $\log(k)$ and $\log \log \eps^{-1}$ factors.
\end{theorem}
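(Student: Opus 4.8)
The plan is to exhibit two distributions $\Dyes$ and $\Dno$ over Boolean functions $\isafunc$ such that a draw from $\Dyes$ is (with high probability) $\eps_1$-close to a $k$-junta, a draw from $\Dno$ is (with high probability) $\eps_2$-far from every $k$-junta with $(\eps_2 - \eps_1)/2 = \Theta(\eps)$, and yet no non-adaptive algorithm making $q = 2^{\wt o(\sqrt{k\log(1/\eps)})}$ queries can distinguish them; the lower bound then follows from Yao's principle together with Remark~\ref{rem:tol-test-dist-equiv}. The natural starting point is the construction of Chen et al.~\cite{chen2023mildly} (which itself builds on~\cite{PRW22}): both $\bfyes$ and $\bfno$ are built from a random partition of the coordinates into a small number of blocks, each block feeding into a suitably chosen ``hard'' gadget (a pseudo-random DNF/address-function-style object) whose bias is controlled, and the only difference between the two cases is a carefully tuned parameter governing how close the resulting function is to a genuine $k$-junta. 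The key point where I would depart from~\cite{chen2023mildly} is to re-parametrize the gadget so that the closeness/farness gap is $\Theta(\eps)$ rather than a fixed constant: intuitively, making the function $\eps$-close instead of $\Theta(1)$-close to a junta forces the ``hard'' structure to live at a finer scale, which is exactly what converts the $2^{\wt\Omega(\sqrt k)}$ bound into $2^{\wt\Omega(\sqrt{k\log(1/\eps)})}$. The hypothesis $\eps \geq 2^{-O(k)}$ is what keeps this rescaling feasible (the relevant block sizes stay $\geq 1$).

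Concretely, I would carry out the argument in the following steps. First, fix the combinatorial skeleton: randomly partition $[n]$ into $m$ blocks and let both $\bfyes,\bfno$ be functions of the $m$ ``block values,'' where each block value is itself a mild aggregation (e.g.\ a majority or a random parity-like combination) of the coordinates in that block; the number of blocks and the block sizes are chosen as functions of $k$ and $\eps$ so that the relevant information-theoretic quantity is $\Theta(\sqrt{k\log(1/\eps)})$. Second, on top of the block values, place the hard gadget of~\cite{chen2023mildly} but with its ``truncation depth'' / degree parameter $d$ set to $\Theta(\sqrt{k\log(1/\eps)})$ rather than $\Theta(\sqrt k)$; one case (say $\Dyes$) makes the gadget genuinely depend on only $k$ relevant coordinates up to $\eps_1$ error, while the other ($\Dno$) perturbs it so that \emph{any} $k$-junta disagrees on $\eps_2$ fraction. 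Third, prove the two ``completeness/soundness'' structural claims: (a) w.h.p.\ a draw from $\Dyes$ is $\eps_1$-close to some $k$-junta — this is a direct concentration argument on the number of coordinates the gadget genuinely reads and the error incurred by the aggregation step; and (b) w.h.p.\ a draw from $\Dno$ is $\eps_2$-far from every $k$-junta — this is the more delicate ``anti-concentration'' direction, showing that the pseudo-random gadget cannot be well-approximated after deleting all but $k$ coordinates, typically via a Fourier-weight or noise-sensitivity lower bound on the gadget. Fourth, prove indistinguishability: condition on the (random) block assignment of the $\leq q$ queried points; since $q$ is sub-$2^{\wt\Omega(\sqrt{k\log(1/\eps)})}$, with high probability the queries touch only ``few'' relevant blocks in a way that is statistically identical under $\Dyes$ and $\Dno$ — formally, the query-answer distributions have total variation $o(1)$, which one establishes by the standard coupling/hybrid argument of the $k^{\Omega(\log(1/\eps))}$ lower bound in~\cite{ChenPatel23} combined with the $2^{\Omega(\sqrt k)}$ machinery of~\cite{chen2023mildly}.

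I expect the main obstacle to be Step~three(b) together with making the parameters in Step~four simultaneously consistent: one needs the gadget to be, on the one hand, robust enough that after an arbitrary projection onto $k$ coordinates it still has $\Omega(\eps_2)$ distance to every junta (farness), while on the other hand being ``smooth'' enough on $q$ queries that $\Dyes$ and $\Dno$ are indistinguishable — and the two requirements pull the degree/depth parameter $d$ in opposite directions. The right balance is precisely $d = \Theta(\sqrt{k\log(1/\eps)})$, and verifying that this choice clears both thresholds — the farness bound via an approximation-theoretic lower bound on how well low-degree / few-variable functions approximate the gadget (the ``dual'' of the approximate inclusion–exclusion bounds used for the upper bound in Theorem~\ref{thm:fine-tester}), and the indistinguishability via a delicate union bound over the $\binom{q}{\le d}$ ways a small set of queries can interact with the blocks — is the technical heart of the proof. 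A secondary nuisance is handling the regime where $\eps$ is polynomially small versus exponentially small in $k$: near the lower end $\eps = 2^{-\Theta(k)}$ the $\log(1/\eps)$ and $k$ terms are comparable, so one must be careful that the $\wt\Omega$ (hiding $\log k$ and $\log\log(1/\eps)$) genuinely absorbs all lower-order losses rather than swallowing the main term.
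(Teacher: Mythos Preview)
Your high-level strategy---take the construction of~\cite{chen2023mildly} and re-parametrize it so the yes/no gap becomes $\Theta(\eps)$ rather than a fixed constant, with the relevant scale being $\Theta(\sqrt{k\log(1/\eps)})$---is exactly what the paper does. But your concrete plan misidentifies the object being modified, and the machinery you propose (random $m$-block partitions, pseudo-random DNF/address-function gadgets, a ``truncation depth'' parameter, Fourier/noise-sensitivity farness arguments, duals of inclusion--exclusion) is not what is needed and would make the argument much harder than it is.

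The actual construction is far simpler. The functions live on $\bits^{2k}$ with a single random split into $k$ \emph{control} and $k$ \emph{action} coordinates. On the action subcube the functions are just \emph{Hamming-weight threshold} functions with three bands: $|x_A|\ge k/2$, $|x_A|\in(k/2-\Delta,k/2)$, and $|x_A|\le k/2-\Delta$; the only difference between $\Dyes$ and $\Dno$ is which sign is assigned to the top band. The single new parameter is the offset $\Delta$, chosen so that $\Pr[|x_A|\le k/2-\Delta]=\eps$, which by a binomial tail bound gives $\Delta=\Theta(\sqrt{k\log(1/k\eps)})$. Closeness and farness are then short direct calculations (no Fourier-weight lower bounds needed). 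Indistinguishability is a one-line collision argument: an algorithm can tell the two cases apart only if it queries $x,y$ with $x_C=y_C$ but $|x_A|\ge k/2$ and $|y_A|\le k/2-\Delta$; such $x,y$ are at Hamming distance $\ge\Delta$, and for any fixed pair at distance $\ge\Delta$ the probability (over the random split) that $x_C=y_C$ is $\exp(-\Omega(\Delta))$, so a union bound over $q^2$ pairs finishes. There is no degree/depth parameter, no block aggregation, and no approximation-theoretic dual---the ``hard structure at a finer scale'' is literally just moving a Majority threshold by $\Delta$.
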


Note that the restriction that $\eps \geq 2^{-O(k)}$ is necessary: When $\eps < 2^k$, the tester of \cite{DMN19} only makes $2^k \poly(k, \eps^{-1})$, which is polynomial in $\eps^{-1}$. Together, \Cref{thm:fine-tester,thm:lower-bound} settle the query complexity of non-adaptively, tolerantly testing $k$-juntas. To the best of our knowledge, this is the first natural tolerant testing question for which tight bounds are known. 

\begin{table}[]
\centering
\renewcommand{\arraystretch}{1.5}
\setlength{\tabcolsep}{20pt}
\begin{tabular}{@{}llll@{}}
\toprule
               & $(\eps_1, \eps_2)$             & Query Complexity                         & Type         \\ \midrule
\cite{BCELR19} & $\pbra{\frac{\eps}{16}, \eps}$ & $\exp\pbra{O(k)}/\eps$                           & Non-Adaptive \\
\cite{DMN19}   & $(c-\eps, c)$                  & $\exp\pbra{O(k)}\cdot\poly\pbra{\frac{1}{\eps}}$ & Non-Adaptive \\
\cite{ITW21}   & $(c-\eps, c)$                  & $\exp\pbra{\wt{O}(\sqrt{k/\epsilon})}$            & Adaptive     \\
\Cref{thm:fine-tester}  & $(c-\eps, c)$         & $\exp\pbra{\wt{O}(\sqrt{k\log({1}/{\eps})}}$                                         & Non-Adaptive \\
\bottomrule
\end{tabular}
\caption{A summary of upper bounds for tolerant $k$-junta testing. Here $c,\eps \in [0, 0.5]$.}
\label{tab:ub-summary}
\end{table}

\subsection{Technical Overview}
\label{subsec:overview}

We now turn to a technical overview of our results. 

\paragraph{Tolerant Junta Testing via Local Estimators.}

Our results are motivated by a simple observation: 
\begin{quotation}
\noindent Previous lower bounds for tolerant junta testing~\cite{PRW22,chen2023mildly} typically involve constructing two distributions over Boolean functions that have different expected means, but look identical on Hamming balls of small radius. 
\end{quotation}
Indeed, a connection between the mean of a function and distance to junta is direct as for a set $S \subseteq [n]$ with $|S| = k$, we have that 
\begin{equation} \label{eq:dist-mean-abs-val}
	\dist(f, \calJ_S) = \frac{1}{2} - \frac{1}{2} \Ex_{\bx \in \bits^S} \sbra{\left| \Ex _{\by \in \bits^{[n] \setminus S}} [f(\bx \sqcup \by)]\right|}.
\end{equation}
(Here, $\dist(f,\calJ_S)$ denotes the minimum distance of $f$ to a junta on the variables in $S$.) 

These lower bound constructions, together with~\Cref{eq:dist-mean-abs-val}, motivate the following question of \emph{local mean estimation}: Given function $f\isafunc$ and access to values of $f$ restricted to a Hamming ball of radius $r$ centered at a random point $\bx\in\bn$, can you obtain a good estimate for the mean of the function (i.e. with probability $1-\delta$, the output is within $\pm \eps$ of the true mean)?

Naturally, we would like to do this with $r$ being as small as possible. Let $B(x,r)$ denote a Hamming ball of radius $r$ centered at $x$. As a starting point, any estimator that sees a ball $B(x,r)$ where $f$ is constant must output a number in $[f(x)\pm\epsilon]$, as otherwise the estimator would fail on the constant function. Taking $f$ to be the $n$-bit Majority function and $r = 0.001\sqrt{n}$, however, we see that most balls will be constant despite the Majority function having mean $0$. It follows that we must take $r = \Omega(\sqrt{n})$. Our key technical contribution establishes that for constant $\eps$ and $\delta$, Hamming balls of radius $O(\sqrt{n})$ are also sufficient to obtain good estimates of the mean. We prove this by combining constructions of ``flat polynomials'' (motivated by the problem of approximate inclusion-exclusion\cite{LinialNisan:90, kahn1996inclusion}) together with Fourier analysis of Boolean functions~\cite{odonnell-book}.

Using this local estimator, we can then get a junta tester when $n = 2k$ (cf. \Cref{sec:warmup}). Namely, we sample a random $y \in \bits^{2k}$ and estimate the means of the functions $f(y|_S \sqcup x): \bits^k \rightarrow \bits$ for each set $S \subseteq [2k]$ of size $k$ by querying $f$ on each point in $B(y,r)$. These in turn allow us to compute the distance of $f$ to $\calJ_S$ for every subset $S$ of size $k$. 

For larger $n = \poly(k)$, we need a better local mean estimator. Indeed, by the above lower bound determining the mean of a function on say $k^{100}$ variables requires a ball of radius $k^{50}$. Querying such a ball, however, would require far too many queries. Fortunately, we show that balls of smaller radius suffice when the function at hand has exponentially decaying Fourier tails. (Intuitively, one should consider the case of applying noise to a function, which should smooth it and make local balls more indicative of the mean of the function.) With this in hand, we noise the function and test in the same way as in the $O(k)$ setting. As before, the query complexity is dominated by the number of queries to determine the value of $f$ on the ball, which is at most $\poly(k)^{\sqrt{k}} = 2^{\wt{O}(\sqrt{k})}$.

Unfortunately, for general $n$ querying the ball of radius $\sqrt{k}$ would require us to make $2^{\sqrt{k} \log(n)}$ queries, which has an undesirable dependence on $n$. To circumvent this, we show that our estimator can equivalently written as linear combinations of higher-order derivatives of the noise operator. We can then evaluate these using high-precision numerical differentiation and get a good estimate of mean.

\paragraph{A Matching Lower Bound.} The lower bound construction is a simple modification of the lower bound of \cite{chen2023mildly}, which in turn builds on the construction of \cite{PRW22}. Essentially to get a lower bound of $\exp(\Omega(r))$ one must construct a pair of distributions $\calD_1$ and $\calD_2$ over functions $g: \bits^{k} \rightarrow \bits$. Crucially, we need that (i) the functions must have very different means i.e. 
\[\left| \Ex_{\mathbf{g} \sim \calD_1} [\mathbf{g}] \right| - \left| \Ex_{\mathbf{g} \sim \calD_2} [\mathbf{g}] \right| \geq \eps,\] 
and (ii) the functions should be identically distributed over balls meaning for all $y \in \bits^{|B(x,r)|}$ we have that for all $x \in \{0,1\}^k$ we have 
\[\Prx_{\mathbf{g} \sim \calD_1} \sbra{\mathbf{g}(z) = y_z~ \text{for all}~ z \in B(x,r)} = \Prx_{\mathbf{g} \sim \calD_2} \sbra{\mathbf{g}(z) = y_z~ \text{for all}~ z \in B(x,r)}.\]

In their paper, \cite{PRW22} gives a simple construction that are identically distributed on balls of radii $O(\sqrt{n})$. By breaking symmetry, we show that a simple modification can yield a gap of $O(\sqrt{n \log(1/\eps)}$. Using this, we can follow the analysis of \cite{chen2023mildly} to prove the result.

\subsection{Discussion}
\label{subsec:discussion}

Our work raises a number of intriguing directions for future work; we briefly describe some of them below:

\begin{itemize}
	\item Does adaptivity help for tolerant junta testing? While this paper resolves the tolerant junta testing question for non-adaptive algorithms, there are still large gaps for adaptive algorithms. As such, it is unclear how many queries are needed to tolerantly test $k$-juntas with adaptivity.
	\item Can we obtain improved runtime for tolerant junta testing? While our algorithms are query optimal, they all require time $\exp(k) \cdot 2^{\wt{O}(k \sqrt{\log(1/\eps)})}$. Is it possible have them run in time polynomial in the number of queries?
	\item Finally, a broad direction is that of potential applications of local estimators to other problems in algorithms and complexity theory. In particular, our approach hints at possible connections to pseudorandomness: Our local estimators imply that for any balanced $\textsf{AC}^0$ circuit $f$ and almost all $x \in \bits^n$ there exists a $y \in \bits$ with $f(x) \not = f(y)$ and $\dist(x,y) \leq \polylog(n)$.
\end{itemize}

\section{Preliminaries}
\label{sec:prelims}

We use boldfaced letters such as $\bb, \bx, \boldf, \bA$, etc. to denote random variables (which may be real-valued, vector-valued, function-valued, or set-valued; the intended type will be clear from the context).
We write $\bx\sim\calD$ to indicate that the random variable $\bx$ is distributed according to probability distribution $\calD.$ 

Given a set $J\sse[n]$, we will write $\overline{J} := [n]\setminus J$ to denote its complement. We will denote write $\mathbf{1}\cbra{\cdot}$ for the indicator function of the event $\cbra{\cdot}$.

\subsection{Boolean Functions}
\label{subsec:boolean-funcs}

Given Boolean functions $f,g\isafunc$ and a class of Boolean functions $\calC$, we define 
\[\dist(f,g) := \Prx_{\bx\sim\bn}\sbra{f(\bx)\neq g(\bx)}
\qquad\text{and}\qquad 
\dist(f,\calC) := \min_{g \in \calC}\dist(f,g).\]
In particular, if $\dist(f, \calC) \leq \eps$, then we say that $f$ is ``$\eps$-close'' to $\calC$; otherwise, we say that it is ``$\eps$-far'' from $\calC$.
Given a set $J\sse[n]$, and $z\in\bn$, we define the restricted function
\[f|_{J\to z} : \bits^{\overline{J}} \to \bits\] as
$f|_{J\to z}(x) := f(z_J, x_{\overline{J}}).$
Here we identify $\bn$ with $\bits^J\times\bits^{\overline{J}}$ in the natural fashion.

\begin{definition} \label{def:junta}
	A function $f\isafunc$ is a \emph{$k$-junta} if it only depends on $k$ out of its $n$ variables, i.e. if there exists a function $g\isafunc$ and indices $i_1, \ldots, i_k \in [n]$ such that  
\[f(x) = g(x_{i_1},\ldots, x_{i_k}).\]
We will write $\calJ_k$ for the class of $k$-juntas over $\bn$ where $n$ will be clear from context; similarly, given a set $J \sse[n]$, we will write $\calJ_J$ to denote the class of juntas on the variables in the set $J$.
\end{definition} 

\begin{notation} 
Given a point $x\in\bn$ and $r\in\N$, we write $B(x,r)$ for the Hamming ball of radius $r$ centered at $x$, i.e. 
\[B(x,r) := \cbra{y\in\bn : |\cbra{i : x_i\neq y_i}| \leq r}.\]
Given $x\in\bn$ and $T\sse[n]$, we write $x^{\oplus T}$ for the point obtained by flipping the bits of $x$ indexed by $T$.
\end{notation}

%\[x^{\oplus T} := \pbra{-1^{\mathbf{1}\cbra{i\in T}} x_i}_{i=1}^n\]

\subsection{Fourier Analysis over $\bn$}
\label{subsec:fourier-101}

Our notation and terminology follow~\cite{odonnell-book}. We will view the (real) vector space of functions on the Boolean hypercube $f :\bn\to\R$ as an inner product space with inner product 
\[\abra{f,g} := \Ex_{\bx\sim\bn}\sbra{f(\bx)\cdot g(\bx)}.\]
We define $\|f\|_2 := \sqrt{\abra{f,f}}.$ Note that if $f\isafunc$, then $\|f\|_2 = 1$.

Given a set $S\sse[n]$, we define the \emph{parity function on $S$}, written $\chi_S \isafunc$, as 
$\chi_S(x) := \prod_{i\in S} x_i$
with $\chi_{\emptyset} \equiv 1$ by convention. 
It is easy to check that $(\chi_S)_S$ forms an orthonormal basis with respect to the above inner product. In particular, every function $f\isafunc$ can be written as 
\[f = \sum_{S\sse[n]}\wh{f}(S)\chi_S\]
where $\wh{f}(S) := \abra{f,\chi_S}$.
This decomposition can be viewed as a ``Fourier decomposition'' of $f$. It is not too difficult to see that Parseval's and Plancharel's formulas hold in this setting:
\[\abra{f,f} = \sum_{S\sse[n]} \wh{f}(S)^2 \qquad\text{and}\qquad \abra{f,g} = \sum_{S\sse[n]}\wh{f}(S)\wh{g}(S).\]
It is also immediate that 
$\Ex_{}\sbra{f(\bx)} = \wh{f}(\emptyset)$ and $\Varx_{}[f(\bx)] = \sum_{S\neq\emptyset}\wh{f}(S)^2$ for $\bx\sim\bn$. 

\begin{notation}
	Given $k\in\{0,\ldots,n\}$, and a function $f:\bn\to\R$, we write
	\[\bW^{\leq k}[f] := \sum_{|S|\leq k}\wh{f}(S)^2\]
	with $\bW^{\geq k}[f]$ and $\bW^{=k}[f]$ defined similarly.
\end{notation}

A useful operator in the analysis of Boolean functions is the \emph{Bonami--Beckner noise operator}, which we proceed to define next:

\begin{definition} \label{def:noise}
	Fix $\rho\in[0,1]$. For a given $x\in\zo^n$, we write $\by\sim N_\rho(x)$ to mean a draw of $\by\in\bn$ where each bit $\by_i$ is drawn as follows: 	
	\[N_{\rho}(x) := \begin{cases}
	 x_i & \text{with probability}~\rho\\
	 +1 & \text{with probability}~\frac{1-\rho}{2}\\
	 -1 & \text{with probability}~\frac{1-\rho}{2}
	 \end{cases}.
	\]
	If $\by\sim N_\rho(x)$, we will sometimes say that $\by$ is $\rho$-correlated with $x$. Given a function $f:\bn\to\R$, we define the \emph{noise operator} $\T_\rho$ as
	\[\T_\rho f(x) := \Ex_{\by\sim N_\rho(x)}\sbra{f(\by)}.\]
\end{definition}

It is a standard fact that the noise operator diagonalizes the parity basis:

\begin{fact}[Proposition~2.47 of~\cite{odonnell-book}] \label{fact:noise-fourier-decomp}
	Given a function $f:\bn\to\R$, we have 
	\[\T_\rho f = \sum_{S\sse[n]} \rho^{|S|}\wh{f}(S)\chi_S.\]
\end{fact}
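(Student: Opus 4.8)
The plan is to exploit linearity of $\T_\rho$ and reduce to understanding its action on the parity basis. Since $\T_\rho f(x) = \Ex_{\by\sim N_\rho(x)}[f(\by)]$ and expectation is linear (and crucially the distribution $N_\rho(x)$ does not depend on $f$), the operator $\T_\rho$ is linear. Hence it suffices to show $\T_\rho\chi_S = \rho^{|S|}\chi_S$ for every $S\sse[n]$, and then extend by writing $f = \sum_{S\sse[n]}\wh{f}(S)\chi_S$ and applying $\T_\rho$ term by term.

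To compute $\T_\rho\chi_S(x)$, I would use that the coordinates $\by_i$ of $\by\sim N_\rho(x)$ are drawn \emph{independently} (see \Cref{def:noise}). Therefore
\[\T_\rho\chi_S(x) = \Ex_{\by\sim N_\rho(x)}\sbra{\prod_{i\in S}\by_i} = \prod_{i\in S}\Ex_{\by_i}[\by_i],\]
and a one-line computation from the definition of $N_\rho$ gives $\Ex[\by_i] = \rho\cdot x_i + \tfrac{1-\rho}{2}\cdot(+1) + \tfrac{1-\rho}{2}\cdot(-1) = \rho x_i$. Substituting this in yields $\T_\rho\chi_S(x) = \prod_{i\in S}\rho x_i = \rho^{|S|}\prod_{i\in S}x_i = \rho^{|S|}\chi_S(x)$.

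Combining the two steps, $\T_\rho f = \T_\rho\!\left(\sum_{S\sse[n]}\wh{f}(S)\chi_S\right) = \sum_{S\sse[n]}\wh{f}(S)\,\T_\rho\chi_S = \sum_{S\sse[n]}\rho^{|S|}\wh{f}(S)\chi_S$, as claimed. There is no substantive obstacle: the only point deserving (minor) care is the factorization of the expectation across coordinates, which is exactly where the independence in the definition of $N_\rho$ is used; everything else is bookkeeping with linearity and the orthonormal expansion of $f$.
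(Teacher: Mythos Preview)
Your proof is correct and is exactly the standard argument (linearity of $\T_\rho$, then computing $\T_\rho\chi_S$ coordinatewise using independence in $N_\rho$). The paper itself does not prove this statement at all---it is recorded as a ``Fact'' with a citation to~\cite{odonnell-book}---so there is nothing to compare against; your writeup is precisely the proof one finds in that reference.
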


\subsection{Coordinate Oracles}
\label{subsec:oracles}

The tolerant junta testers of De et al.~\cite{DMN19} and Iyer et al.~\cite{ITW21} rely on the notion of \emph{approximate coordinate oracles} for a function $f\isafunc$, assuming it does not depend on too many coordinates. We will require Corollary~4.7 of \cite{ITW21} which in turn builds on Lemma~3.6 of \cite{DMN19}; we reproduce it below for convenience. 

Throughout, we assume that we have query access to an underlying function $f\isafunc$ and have some (fixed) parameter $k\in\N$.

\begin{proposition} \label{prop:coord-oracles}
	Let $\eps,\delta >0$. There exists a non-adaptive algorithm~\coords~that makes $\poly\pbra{k,\frac{1}{\eps}, \log{\frac{1}{\delta}}}\cdot\frac{1}{\eta}$ queries to $f$ and outputs an \emph{$\eta$-oracle} $\calF$ for a set of coordinates $S \sse[n]$ which is a collection of Boolean-valued functions with the following guarantee:
	\begin{enumerate}
		\item With probability at least $1-\delta$, for every $i\in S$ there exists a $g\in\calF$ such that $\dist(g, x_i) \leq \eta$;
		\item We have $\dist(f,\calJ_k) - \dist(f,\calJ_k(S)) \leq \epsilon$ where $\calJ_k(S)$ denotes the family of $k$-juntas whose relevant coordinates are a subset of $S$; and
		\item For any algorithm $A$ that makes $q$ queries to $\calF$, we may assume that we actually have perfect access to each coordinate oracle (i.e.~$\eta = 0$ in the first bullet above), up to an additive loss of $\delta$ in confidence and a multiplicative overhead of $\poly\pbra{\log q + \log\frac{1}{\delta}}$ in query complexity.
	\end{enumerate}
\end{proposition}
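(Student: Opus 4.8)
The plan is to adapt the coordinate-identification procedure of De, Mossel, and Neeman~\cite{DMN19} (for items~(1) and~(2)) together with the boosting trick of Iyer, Tal, and Whitmeyer~\cite{ITW21} (for item~(3)); since the statement is quoted essentially verbatim from~\cite{ITW21}, I will only sketch the structure. The algorithm~\coords{} runs in two stages: it first non-adaptively produces a set $S$ of $\poly(k,1/\eps)$ coordinates for which item~(2) holds, and then, for each $i\in S$, it outputs a Boolean-valued function, computed from a handful of queries to $f$, that approximates the dictator $x_i$.

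For the first stage, the key structural fact is that although $f$ may depend on all $n$ coordinates, there is a near-optimal $k$-junta approximation whose relevant variables all have non-negligible \emph{degree-$\le k$ influence} $\mathrm{Inf}_i^{\le k}[f]:=\sum_{S\ni i,\,|S|\le k}\wh f(S)^2$. Indeed, if $g^\star\in\calJ_k$ is optimal with relevant set $T^\star$, then --- recalling from \Cref{eq:dist-mean-abs-val} that $\dist(f,\calJ_{T})=\frac12-\frac12\,\Ex_{x_{T}}\sbra{\bigl|\Ex_{x_{\overline T}}f\bigr|}$ --- dropping from $T^\star$ a coordinate $i$ with $\mathrm{Inf}_i^{\le k}[f]<\tau$ shrinks $\Ex_{x_{T^\star}}\sbra{\bigl|\Ex_{x_{\overline{T^\star}}}f\bigr|}$ by less than $\sqrt\tau$: since $|T^\star|\le k$, the dependence of $\Ex_{x_{\overline{T^\star}}}f$ on $x_i$ is carried entirely by Fourier coefficients of degree $\le k$, so the $L_2$ change is $\sqrt{\mathrm{Inf}_i^{\le k}[f]}<\sqrt\tau$ and the $L_1$ change is no larger. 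Iterating over the at most $k$ such coordinates with $\tau=\Theta(\eps^2/k^2)$ shows that $S:=\{i:\mathrm{Inf}_i^{\le k}[f]\ge\tau\}$, which by $\sum_i\mathrm{Inf}_i^{\le k}[f]\le k$ has only $\poly(k/\eps)$ elements, satisfies item~(2). To find $S$ non-adaptively I would use the noise operator: $\mathrm{Inf}_i[\T_\rho f]=\sum_{S\ni i}\rho^{2|S|}\wh f(S)^2$ is, for $\rho$ close enough to $1$, an estimable surrogate for a (suitably) degree-truncated influence, and it can be approximated from uniformly random points together with fresh noisy samples. Partitioning $[n]$ into $\poly(k/\eps)$ random blocks so that every high-influence coordinate is isolated, estimating block influences, and recursing inside the flagged blocks pins $S$ down; the sample sizes account for the $\poly(k,1/\eps,\log(1/\delta))\cdot\frac1\eta$ queries, and a union bound over the recursion gives the failure probability in item~(1).

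In the second stage, once a coordinate $i$ has been localized, the oracle $g$ for it is a short $f$-query computation that outputs a bit agreeing with $x_i$ on all but an $\eta$-fraction of inputs, the residual error being governed by the precision of the block-influence estimates --- hence the $\frac1\eta$ dependence. Item~(3) is the routine amplification accompanying such constructions: re-running the relevant sub-computations $O(\log q+\log(1/\delta))$ times and taking majority decisions ensures that, except with probability $\delta$, each of the $q$ oracle queries made by the downstream algorithm $A$ is answered exactly, at a $\poly(\log q+\log(1/\delta))$ multiplicative cost --- so one may pretend $\eta=0$.

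The hardest part, I expect, is making the first stage quantitative and non-adaptive simultaneously: one must pin the structural reduction down with explicit constants so that $\tau$, and hence $|S|$, is genuinely polynomial in $k/\eps$, and one must show that the a priori hard-to-compute degree-truncated influences are faithfully captured by $\mathrm{Inf}_i[\T_\rho f]$ and estimable to the required precision from a fixed (hence non-adaptive) set of queries --- this substitution being the technical heart of~\cite{DMN19}. The remaining ingredients --- random block partitioning, the recursion, and the amplification of item~(3) --- are routine by comparison.
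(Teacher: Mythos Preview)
The paper does not prove this proposition at all: it is stated as a black box, quoted as ``Corollary~4.7 of \cite{ITW21} which in turn builds on Lemma~3.6 of \cite{DMN19},'' and no argument is given. Your proposal therefore goes strictly beyond what the paper does, and your identification of the sources is exactly right.

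As a sketch of the underlying construction your outline is broadly faithful to \cite{DMN19,ITW21}: the structural step (a near-optimal $k$-junta can be taken to live on the coordinates of non-negligible low-degree influence, of which there are at most $\poly(k/\eps)$), the non-adaptive estimation of these influences via the noise operator and random partitioning, and the majority-vote amplification for item~(3) are all the right ingredients. One caveat: in \cite{DMN19} the actual oracle for a localized coordinate is not built by ``a short $f$-query computation that outputs a bit agreeing with $x_i$''; rather, once $i$ is isolated in its own block, the oracle on input $x$ queries $f$ at $x$ and at a random neighbor obtained by rerandomizing that block, and outputs the XOR --- this is how a single influential coordinate is read off without knowing its index explicitly. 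Your description of stage two is a bit vague on this point, but since the paper itself defers entirely to the citations, there is nothing to correct relative to it.
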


We note that we do not have an \emph{explicit} description of the coordinates in $S$; from an information-theoretic standpoint, this would require query-complexity $\Omega(n)$. We instead have \emph{implicit} access to the coordinates in $S$ (cf.~\cite{servedio2010testing} and the references therein).

Thanks to the second bullet in \Cref{prop:coord-oracles}, it suffices for us to only consider juntas on the $\poly(k)$ many coordinates $S$. {Furthermore, as our algorithm will only make $\exp\pbra{\wt{O}_\epsilon(\sqrt{k})}$ many queries, we can assume (thanks to the third bullet in \Cref{prop:coord-oracles}) throughout the rest of the paper that we have perfect access to all the coordinates in $S$.}

\subsection{Flat Polynomials}
\label{subsec:flat-polys}

Underlying our results are constructions of ``flat'' polynomials, originally developed by Linial and Nisan~\cite{LinialNisan:90} and Kahn, Linial, and Samorodnitsky~\cite{kahn1996inclusion} to prove \emph{approximate inclusion-exclusion} bounds. 
We will require the following construction due Kahn et al.~\cite{kahn1996inclusion}:

\begin{lemma}[Theorem~2.1 of \cite{kahn1996inclusion}]
\label{lem:flat-polynomials-fine}
Fix integers $r, N$ with $2 \sqrt{N} \leq r \leq N$. Then there exists a polynomial $p:\R\to\R$ of degree at most $r$ with the following properties:
\[
\text{(i)}~~p(0)=0 
\qquad\text{and}\qquad
\text{(ii)}~~\max_{i\in[n]} |p(i) - 1| \leq 2 \exp\pbra{-\Omega \left( \frac{r^2}{N \log(N)} \right) }.
\]
\end{lemma}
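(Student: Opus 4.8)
## Proof Proposal for Lemma~\ref{lem:flat-polynomials-fine}

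\textbf{Overall approach.} The plan is to follow the strategy of Kahn, Linial, and Samorodnitsky~\cite{kahn1996inclusion}: construct the polynomial $p$ as a scaled, shifted version of a Chebyshev-type extremal polynomial. The key insight is that the Chebyshev polynomials $T_d$ grow extremely rapidly outside $[-1,1]$ but stay bounded by $1$ inside, so after an affine change of variables mapping a suitable interval onto $[-1,1]$, we can use this growth to force $p(0) = 0$ while keeping $|p(i) - 1|$ small on the integer grid $\{1, \dots, N\}$.

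\textbf{Key steps, in order.} First I would set up the affine map. Let $d \le r$ be the degree we will use. Consider the interval $I = [a, N]$ for a parameter $a > 0$ to be chosen, and let $\phi: I \to [-1,1]$ be the affine bijection; then $T_d \circ \phi$ has sup-norm $1$ on $I$. The point $0$ lies to the left of $I$, at ``distance'' roughly $a$ below the left endpoint, measured in units of the half-length $(N-a)/2$ of $I$. Using the standard bound $|T_d(1+\gamma)| \ge \tfrac12 (1 + \sqrt{2\gamma})^d$ for $\gamma > 0$, one gets that $|T_d(\phi(0))|$ is exponentially large in $d\sqrt{a/N}$. Second, I would define
\[
p(x) := 1 - \frac{T_d(\phi(x))}{T_d(\phi(0))}.
\]
This immediately gives property (i): $p(0) = 1 - 1 = 0$. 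Third, for property (ii), observe that for $x \in \{a, a+1, \dots, N\} \subseteq I$ we have $|T_d(\phi(x))| \le 1$, hence $|p(x) - 1| = |T_d(\phi(x))| / |T_d(\phi(0))| \le 1/|T_d(\phi(0))|$, which is exponentially small. The remaining integers $\{1, \dots, \lceil a \rceil - 1\}$ that lie below $I$ need separate handling — this is where one must be careful.

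\textbf{The main obstacle.} The delicate point is balancing the choice of $a$ (and $d$) so that \emph{both} the integers below the interval $I$ are controlled \emph{and} the exponential gain $d\sqrt{a/N}$ is as large as the target $r^2/(N\log N)$ demands. Choosing $a$ too small leaves few integers below $I$ but gives a weak exponential bound; choosing $a$ too large makes $p$ potentially wild on the many integers in $\{1, \dots, a\}$ outside the control interval. The resolution in~\cite{kahn1996inclusion} is to take $a = \Theta(N / d^2 \cdot \log N)$-ish — more precisely, one picks $d \asymp r$ and $a$ so that $d \sqrt{a/N} \asymp r^2/(N \log N)$, forcing $a \asymp r^2 / (d^2 \log^2 N) \cdot N \asymp N/\log^2 N$ when $d \asymp r$ — wait, this needs the integers below to still be few relative to the Chebyshev blowup. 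The honest statement is that one iterates or uses a slightly more clever extremal polynomial (a product of a Chebyshev polynomial with a low-degree polynomial vanishing on the small initial segment, or the explicit construction from the approximate inclusion–exclusion literature) to kill the $O(a)$ bad integers at the cost of only $O(a) \le O(\sqrt{N})$ extra degree, which is absorbed since $r \ge 2\sqrt{N}$. I would then verify that the total degree stays $\le r$ and that the final error bound is $2\exp(-\Omega(r^2 / (N \log N)))$ by tracking the constants through the Chebyshev growth estimate. The condition $2\sqrt{N} \le r \le N$ is exactly what makes this degree budget feasible and the exponent nontrivial (for $r = N$ the bound is $\exp(-\Omega(N/\log N))$, essentially $0$; for $r = 2\sqrt N$ it is merely $2 - \Theta(1/\log N)$, i.e.\ close to the trivial bound, as one expects).
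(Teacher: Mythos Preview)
The paper does not prove this lemma at all: it is quoted verbatim as Theorem~2.1 of~\cite{kahn1996inclusion} in the preliminaries and used thereafter as a black box. There is therefore no in-paper argument to compare your proposal against.

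For what it is worth, your sketch is in the spirit of the original Kahn--Linial--Samorodnitsky construction (a Chebyshev polynomial composed with an affine map, normalized so that $p(0)=0$), and you correctly identify the one nontrivial issue, namely controlling the handful of integers that fall outside the Chebyshev control interval. Your parameter bookkeeping in the final paragraph is a bit tangled, but the stated fix---spending $O(a)$ degrees on a factor vanishing at those integers, which is affordable precisely because $r \ge 2\sqrt{N}$---is the right idea. Since the present paper only invokes the lemma, none of this detail is needed here.
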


It will also be important for us that these polynomials do not blow up too much for values greater than $N$ and furthermore that they do not have large coefficients. Towards this, we prove the following:

\begin{lemma}
\label{lem:flat-polynomial-oos-bound}
Let $r,N$ be integers with $r \leq N$ and suppose that $p: \mathbb{R} \rightarrow \mathbb{R}$ is a polynomial of degree $r$ such that $|p(i)| \leq 2$ for all $i = 1, \ldots, N$ with $p(0) = 0$. For any $\ell \geq N$, $p(\ell) \leq 4\ell^r$. 
\end{lemma}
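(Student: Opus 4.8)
The plan is to bound the polynomial $p$ at a point $\ell \geq N$ using a Lagrange interpolation formula based on the $r+1$ nodes $\{0, 1, \ldots, r\}$, at which we have good control on the values of $p$: namely $|p(0)| = 0$ and $|p(i)| \leq 2$ for $i = 1, \ldots, r$ (here we use $r \leq N$). Since $p$ has degree at most $r$, it is determined by these $r+1$ values, so
\[
p(\ell) = \sum_{j=0}^{r} p(j) \prod_{\substack{0 \leq m \leq r \\ m \neq j}} \frac{\ell - m}{j - m}.
\]

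First I would discard the $j=0$ term since $p(0) = 0$, leaving a sum over $j = 1, \ldots, r$. Then I would bound each surviving Lagrange basis polynomial at $\ell$: the numerator $\prod_{m \neq j} |\ell - m| \leq \ell^r$ since $\ell \geq N \geq r \geq m$ for all the nodes and $\ell - m \leq \ell$; the denominator $\prod_{m \neq j} |j - m| = j! \, (r-j)!$, which is at least $1$. So each term is at most $2 \ell^r$ in absolute value, and naively summing $r$ of them gives $2r \ell^r$, which is slightly worse than the claimed $4\ell^r$. To get the sharper constant I would instead keep the factorial denominators: $\sum_{j=1}^{r} \frac{1}{j!(r-j)!} = \frac{1}{r!}\sum_{j=1}^{r}\binom{r}{j} = \frac{2^r - 1}{r!} \leq 2^r / r!$, so $|p(\ell)| \leq 2\ell^r \cdot 2^r / r! = \ell^r \cdot 2^{r+1}/r!$. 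Since $r! \geq 2^{r-1}$ for all $r \geq 1$ (easy induction, with the base case $1! = 1 = 2^0$), this is at most $\ell^r \cdot 2^{r+1}/2^{r-1} = 4\ell^r$, as desired. (For $r = 0$ the statement is trivial since then $p \equiv 0$.)

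The main obstacle — really the only subtlety — is getting the constant down to $4$ rather than something like $2r$ or $2^r$; the crude termwise bound is off by a factor of $r$, so one must retain the binomial structure of the Lagrange denominators and then invoke $r! \geq 2^{r-1}$ to absorb the $2^r$. Everything else is routine: the hypotheses $r \leq N$ and $\ell \geq N$ guarantee that all interpolation nodes lie in $\{0, 1, \ldots, N\}$ where we have the bound $|p(i)| \leq 2$, and that $\ell - m \geq 0$ for every node $m$, so no sign or magnitude issues arise in the numerator estimate.
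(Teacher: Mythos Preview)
Your proof is correct and follows essentially the same approach as the paper: Lagrange interpolation on the nodes $\{0,1,\ldots,r\}$, dropping the $j=0$ term, and bounding the remaining sum. The paper's proof compresses the final step $\sum_{j=1}^{r} 2\,\bigl|\prod_{i\neq j}\frac{\ell-i}{j-i}\bigr| \leq 4\ell^r$ into a single line, whereas you spell out the computation $2\ell^r\sum_{j=1}^{r}\frac{1}{j!(r-j)!}=\frac{2\ell^r(2^r-1)}{r!}\leq 4\ell^r$ via $r!\geq 2^{r-1}$, which is a welcome clarification.
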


\begin{lemma}
\label{lem:flat-polynomial-coefficient-bound}
Let $r,N$ be integers with $r \leq N$ and suppose that $p: \mathbb{R} \rightarrow \mathbb{R}$ is a polynomial of degree $r$ such that $|p(i)| \leq 2$ for all $i = 1, \ldots, N$ with $p(0) = 0$. Moreover, set $\alpha^{r,N}_i$ such that 
\[p(x) = \sum_{i=1}^r \alpha_i^{r,N} \binom{x}{i}\qquad\text{where}\qquad\binom{x}{i} := \frac{x(x-1)...(x-i+1)}{i!}.\] 
Then we have $|\alpha_i^{r,N}| \leq 2r^r$.
\end{lemma}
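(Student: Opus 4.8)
The plan is to identify the coefficients $\alpha_i^{r,N}$ with finite differences of $p$ at the origin, and then bound these by a crude triangle inequality. Recall the forward difference operator $(\Delta p)(x) := p(x+1) - p(x)$, and write $\Delta^i$ for its $i$-fold iterate. The Newton forward difference formula states that any polynomial $p$ of degree at most $r$ satisfies
\[
p(x) = \sum_{i=0}^{r} (\Delta^i p)(0)\binom{x}{i},
\]
and since the polynomials $\binom{x}{0}, \binom{x}{1}, \ldots, \binom{x}{r}$ have distinct degrees $0, 1, \ldots, r$ they are linearly independent, so this representation is unique. Comparing with the representation in the statement and using $p(0) = 0$ to kill the $i = 0$ term, I conclude that $\alpha_i^{r,N} = (\Delta^i p)(0)$ for every $i \in \{1, \ldots, r\}$.

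Next I would use the explicit formula $(\Delta^i p)(0) = \sum_{j=0}^{i} (-1)^{i-j}\binom{i}{j} p(j)$, which follows by expanding $\Delta^i = (E - \mathrm{id})^i$ with the binomial theorem, where $E$ is the shift operator $Ep(x) = p(x+1)$. Since $p(0) = 0$, the $j = 0$ summand vanishes, so
\[
|\alpha_i^{r,N}| \;=\; \Bigl| \sum_{j=1}^{i} (-1)^{i-j}\binom{i}{j} p(j) \Bigr| \;\le\; \sum_{j=1}^{i} \binom{i}{j}\, |p(j)|.
\]
For every index $j$ in the range $1 \le j \le i \le r \le N$ the hypothesis gives $|p(j)| \le 2$, so the right-hand side is at most $2\sum_{j=1}^{i}\binom{i}{j} = 2(2^i - 1) \le 2\cdot 2^r$.

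Finally I would convert this into the stated bound: for $r \ge 2$ we have $2^r \le r^r$, hence $|\alpha_i^{r,N}| \le 2\cdot 2^r \le 2 r^r$; the edge cases $r = 0$ (no coefficients) and $r = 1$ (where $\alpha_1^{1,N} = p(1)$, so $|\alpha_1^{1,N}| \le 2 = 2\cdot 1^1$) are checked directly.

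Every step here is an elementary identity or a triangle inequality, so there is no serious obstacle; the only point requiring any care is the clean identification $\alpha_i^{r,N} = (\Delta^i p)(0)$, which uses uniqueness of the Newton expansion together with the vanishing of $p$ at $0$. (One could alternatively avoid the finite-difference language entirely and simply observe that the change of basis from the monomial basis to the binomial-coefficient basis, restricted to the subspace of polynomials vanishing at $0$, is triangular with bounded entries — but the finite-difference route gives the cleanest constants.)
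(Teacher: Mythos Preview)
Your proof is correct, and in fact it yields the sharper bound $|\alpha_i^{r,N}| \le 2(2^{i}-1)$, which is exponentially better than the $2r^r$ claimed. The approach, however, is genuinely different from the paper's. The paper does not invoke Newton's forward difference formula; instead it evaluates $p$ at the integer $j$ to get $p(j)=\sum_{i=1}^{j}\alpha_i^{r,N}\binom{j}{i}$ (using that $\binom{j}{i}=0$ for $i>j$ and $\binom{j}{j}=1$), isolates $\alpha_j^{r,N}$, and bounds it inductively using the already-established bounds on $\alpha_1^{r,N},\ldots,\alpha_{j-1}^{r,N}$. This forward-substitution argument is more elementary in that it needs no background identity, but the accumulated slack produces only $|\alpha_j^{r,N}|\le 2j^{j}$. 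Your route inverts the triangular system in closed form via the finite-difference identity $\alpha_i^{r,N}=\sum_{j=0}^{i}(-1)^{i-j}\binom{i}{j}p(j)$, so the triangle inequality hits each $p(j)$ exactly once and the constants stay small. Either bound is more than enough downstream, since the paper only ever uses that $|\alpha_i^{r,N}|$ is at most $r^{O(r)}$.
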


We prove~\Cref{lem:flat-polynomial-oos-bound,lem:flat-polynomial-coefficient-bound} in~\Cref{appendix:polynomial}.

\subsection{Numerical Differentiation}
We will also need some standard results about numerical differentiation. 
In particular, we will be interested in rapidly converging backwards difference differentiation formulas. While these are undoubtably known, we are unaware of a reference and include a proof in~\Cref{appendix:diff}.

\begin{theorem}
\label{thm:numerical-diff}
Let $\ell$ and $t$ be positive integers and $f$ be a smooth function. There exists coefficients $\beta_0, ..., \beta_{2\ell-1}$ such that for any $\delta$
	\[\left| \sum_i \beta_i f(x - i \delta) - \frac{d^\ell f}{dx^\ell}(x) \delta^\ell \right| \leq (2 \ell)^{3\ell + 1} \delta^{2\ell} \max_{\xi \in [x- 2 \ell \delta, x]} \left| \frac{d^{2\ell}f}{dx^{2\ell}}(\xi_i) \right| \]
Moreover, we have that $|\beta_i| \leq (2 \ell)^{3 \ell}$ for all $i$.
\end{theorem}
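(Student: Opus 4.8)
The plan is to derive the backward-difference formula from a Taylor/finite-difference interpolation argument, making the error and coefficient bounds explicit. First I would write, for each node $x - i\delta$ with $i = 0, 1, \ldots, 2\ell - 1$, the Taylor expansion of $f(x - i\delta)$ around $x$ to order $2\ell$: namely $f(x - i\delta) = \sum_{j=0}^{2\ell-1} \frac{(-i\delta)^j}{j!} f^{(j)}(x) + R_i$, where the remainder $R_i = \frac{(-i\delta)^{2\ell}}{(2\ell)!} f^{(2\ell)}(\xi_i)$ for some $\xi_i \in [x - 2\ell\delta, x]$ by Lagrange's form. The idea is then to choose $\beta_0, \ldots, \beta_{2\ell-1}$ so that the linear combination $\sum_i \beta_i f(x - i\delta)$ annihilates the contributions of $f^{(j)}(x)$ for all $j \neq \ell$ in the range $0 \le j \le 2\ell-1$, while matching the coefficient of $f^{(\ell)}(x)\,\delta^\ell$ to $1$. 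This is a linear system in the $\beta_i$: for each $j \in \{0, \ldots, 2\ell-1\}$ we require $\sum_{i=0}^{2\ell-1} \beta_i \frac{(-i)^j}{j!} \delta^j = \mathbf{1}\{j = \ell\}$, equivalently $\sum_i \beta_i (-i)^j = \ell! \cdot \mathbf{1}\{j=\ell\} \cdot \delta^{-j+\ell}$ — but since we want the $\delta$-dependence absorbed into the single surviving term, it is cleaner to solve $\sum_i \beta_i (-i)^j = \ell! \cdot \mathbf{1}\{j = \ell\}$ for $j = 0, \ldots, 2\ell-1$, which is a Vandermonde system in the $2\ell$ distinct nodes $0, -1, \ldots, -(2\ell-1)$ and hence has a unique solution.

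Next I would bound the coefficients $|\beta_i|$. Since the $\beta_i$ solve a Vandermonde system with integer nodes in $\{0, -1, \ldots, -(2\ell-1)\}$ and right-hand side of magnitude at most $\ell!$, I can write each $\beta_i$ via Cramer's rule as a ratio of determinants; the denominator is (up to sign) the Vandermonde determinant $\prod_{i < i'} (i' - i) \ge 1$, and the numerator is a determinant of a matrix whose entries are bounded by $(2\ell)^{2\ell}$ times $\ell!$, so a crude Hadamard/Leibniz bound gives $|\beta_i| \le (2\ell)! \cdot (2\ell)^{2\ell} \cdot \ell! \le (2\ell)^{3\ell}$ after simplification (the arithmetic here is routine and I would not belabor it). Then, using $R_i = \frac{(-i\delta)^{2\ell}}{(2\ell)!} f^{(2\ell)}(\xi_i)$ together with $|i| \le 2\ell$, the error is $\left|\sum_i \beta_i f(x - i\delta) - f^{(\ell)}(x)\delta^\ell\right| = \left|\sum_i \beta_i R_i\right| \le \sum_i |\beta_i| \cdot \frac{(2\ell\delta)^{2\ell}}{(2\ell)!} \max_\xi |f^{(2\ell)}(\xi)| \le 2\ell \cdot (2\ell)^{3\ell} \cdot (2\ell)^{2\ell} \delta^{2\ell} \max_\xi |f^{(2\ell)}(\xi)|$, and collecting the powers of $2\ell$ gives the claimed bound $(2\ell)^{3\ell+1}\delta^{2\ell} \max_\xi |f^{(2\ell)}(\xi)|$ after absorbing the $(2\ell)!$ in the denominator — one has to check the exponent bookkeeping closes, but $(2\ell)^{3\ell} \cdot (2\ell)^{2\ell} / (2\ell)! \cdot 2\ell \le (2\ell)^{3\ell+1}$ holds comfortably since $(2\ell)! \ge (2\ell)^{2\ell-1}$ is false in general — so I would instead bound $1/(2\ell)! \le 1$ and use the looser $|\beta_i| \le (2\ell)^{3\ell}/(2\ell)^{2\ell}$-type accounting, re-deriving the coefficient bound so that the final exponent works out. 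I expect this exponent bookkeeping to be the only delicate point.

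The main obstacle, then, is not the existence of the formula (standard Vandermonde invertibility) nor the error structure (standard Taylor remainder) but rather getting the constants to land exactly as stated — in particular ensuring the coefficient bound $|\beta_i| \le (2\ell)^{3\ell}$ and the error bound $(2\ell)^{3\ell+1}\delta^{2\ell}\max|f^{(2\ell)}|$ are simultaneously achievable from the same construction. I would handle this by being slightly generous in the Vandermonde determinant estimate: bound $|\det V|^{-1} \le 1$ from below on the nodes, bound the cofactor determinants by $(2\ell)!\,(2\ell)^{(2\ell-1)(2\ell-1)}$ via Leibniz, and then verify numerically-in-$\ell$ that the product of all accumulated factors is dominated by $(2\ell)^{3\ell}$; if it is not tight enough I would tighten the node choice (e.g. note many nodes are small) rather than change the argument. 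I defer the full constant-chase to the appendix as the paper indicates.
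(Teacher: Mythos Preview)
Your overall strategy matches the paper's exactly: Taylor-expand each $f(x-i\delta)$ to order $2\ell$ with Lagrange remainder, and choose the $\beta_i$ to solve the Vandermonde system $\sum_i \beta_i(-i)^j = \ell!\cdot\mathbf{1}\{j=\ell\}$ on the nodes $0,-1,\ldots,-(2\ell-1)$. The gap is precisely where you flag it, and your proposed fix will not close it. Lower-bounding the Vandermonde determinant by $1$ and Leibniz-bounding the cofactor gives an estimate of order $(2\ell)^{\Theta(\ell^2)}$, not $(2\ell)^{3\ell}$; ``noting many nodes are small'' cannot repair this, since the entries $(-i)^j$ with $i,j\approx 2\ell$ already dominate any Leibniz expansion, and even dividing by the true (superfactorial-sized) Vandermonde determinant leaves an exponent quadratic in $\ell$.

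The paper's resolution is to use the explicit closed form for the Vandermonde inverse in terms of elementary symmetric polynomials (equivalently, Lagrange interpolation), which yields $\|W_{2\ell}^{-1}\|_\infty\le 2^\ell(2\ell)!$ and hence $|\beta_i|\le \ell!\cdot 2^\ell(2\ell)!\le (2\ell)!\,(2\ell)^\ell$. The crucial point is not merely that this sits below $(2\ell)^{3\ell}$, but that it carries an explicit factor of $(2\ell)!$ which cancels the $1/(2\ell)!$ in the Taylor remainder: one gets $|\beta_i|\,i^{2\ell}/(2\ell)!\le (2\ell)^\ell\cdot(2\ell)^{2\ell}=(2\ell)^{3\ell}$ per term, and summing over $2\ell$ terms gives the stated $(2\ell)^{3\ell+1}$. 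Your route of first bounding $|\beta_i|\le(2\ell)^{3\ell}$ and only then plugging into the error loses this cancellation and picks up a spurious factor $(2\ell)^{2\ell}/(2\ell)!\sim e^{2\ell}$, which is exactly why your exponent arithmetic refused to close.
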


\section{Local Estimators}
\label{sec:local-mean-estimation}

Our approach will focus on a statistical estimation problem to tolerantly test juntas. We start with the following definition:

\begin{definition}[Statistic]
	A statistic is a function $S$ that maps a function on the Boolean hypercube $f: \bn\rightarrow \mathbb{R}$ to a real number $\mathbb{R}$.
\end{definition}

We will especially be interested in \emph{local} estimators, i.e. estimators which only receive access to the values of a function $f:\bn\to\R$ restricted to a Hamming ball of small radius.  

\begin{definition}[$r$-local estimator] \label{def:local-est}
Given a function $f: \bits\to\mathbb{R}$, $x \in \bits^n$, and a positive integer $r \leq n$, an \emph{$r$-local estimator $\calE$} takes as input the values of the function $f$ restricted to the Hamming ball $B(x, r)$---which we will denote by $f|_{B(x,r)}$---and outputs a real number. Additionally, for $\tau\geq0$, we will say a $r$-local estimator $\calE: \bits^{B(x,r)} \to \R$ \emph{$\tau$-approximates} a statistic $S$ if
\[\left|\Ex_{\bx\sim\bn}\sbra{\calE(f|_{B(\bx,r)})} - S(f)\right|\leq\tau.\]
Finally, we say that $\calE$ if $\kappa$-bounded if its range is $[-\kappa,\kappa]$.
\end{definition}

\subsection{Locally Estimating $|\E[f]|$}
\label{subsec:high-precision-estimates}

The following lemma establishes the existence of a local estimator for the absolute value of the mean of a Boolean function. (Note that the $0$-local estimator $\calE(f|_{B(x,r)}) := f(x)$ is an excellent estimator for the mean $\E[f]$ but is the identically-$1$ estimator for $|\E[f]|$.)

\begin{lemma} \label{lemma:no-noise-high-precision-estimate}
	Given a Boolean function $f:\bn\to\R$ and a positive-integer $r \geq \Omega(\sqrt{n})$, there exists an $r$-local estimator that $\tau$-approximates $|\Ex[f]|$ and is $O(1)$-bounded, where 
	\[\tau := 2 \exp   \pbra{ \Omega \pbra{ \frac{-r^2}{n \log(n)} } } .\]
\end{lemma}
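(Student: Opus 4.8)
The plan is to build the $r$-local estimator for $|\E[f]|$ by first constructing an estimator for the related quantity $(\E[f])^2 = \wh f(\emptyset)^2$ and then taking its square root, since $\sqrt{\cdot}$ is $1$-Lipschitz on $[0,\infty)$ and maps an additive $\tau^2$-style error to an additive $\tau$-style error (with care, one gets an error that is the $\tau$ claimed up to constants). The key observation is that $(\E[f])^2 = \langle f, \T_0 f\rangle$ where $\T_0 f \equiv \wh f(\emptyset)$, and more generally one can hope to express $(\E[f])^2$ — or a good approximation of it — as a linear combination $\sum_j c_j \,\E_{\bx}[f(\bx)\cdot (\text{local average of }f\text{ near }\bx)]$ that only involves values of $f$ on a Hamming ball of radius $r$ around $\bx$. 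The natural candidate is to use the flat polynomial $p$ from \Cref{lem:flat-polynomials-fine} with parameters $r$ and $N = n$: writing $p$ in the falling-factorial basis $p(x) = \sum_{i=1}^r \alpha_i \binom{x}{i}$ (whose coefficients are controlled by \Cref{lem:flat-polynomial-coefficient-bound}), the operator $\sum_i \alpha_i \binom{|S|}{i}$ acts diagonally on the Fourier basis and, because $p(0) = 0$ and $p(j)\approx 1$ for all $j \in [n]$, it approximately projects onto the non-empty Fourier levels — i.e. $\sum_{S}(p(|S|))\wh f(S)\chi_S \approx f - \wh f(\emptyset)$. Hence $\wh f(\emptyset) \approx f - \sum_S p(|S|)\wh f(S)\chi_S$, and evaluating the right side at a point is a $0$-local quantity plus a combination of level-projections; the point is that each $\binom{|S|}{i}$-projection, applied to $f$ and evaluated at $\bx$, can be written as an average of $f$ over points within Hamming distance $i \le r$ of $\bx$ (a discrete-derivative / inclusion-exclusion identity), so the whole thing is $r$-local.

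Concretely, the steps I would carry out are: \textbf{(1)} Record the identity that for the falling-factorial basis, the operator $f \mapsto \sum_{|S|} \binom{|S|}{i}\wh f(S)\chi_S$ evaluated at $x$ equals $\E[\text{something over }i\text{-subsets of coordinates flipped at }x]$ — i.e. it is computable from $f|_{B(x,i)}$; this uses \Cref{fact:noise-fourier-decomp}-style bookkeeping or a direct combinatorial argument on $\chi_S$. \textbf{(2)} Define the estimator $\calE(f|_{B(x,r)}) := \big( f(x) - \sum_{i=1}^r \alpha_i \cdot [\text{the }i\text{-local projection of }f\text{ at }x]\big)^2$ followed by a clamp to $[0,O(1)]$ and a square root; by construction $\E_{\bx}[\,\cdot\,]$ before squaring is exactly $\sum_{|S|}(1 - p(|S|))\wh f(S)\chi_S$ evaluated... — more precisely, I want $\E_\bx[\calE^{1/2\text{-precursor}}] = \wh f(\emptyset) + \sum_{S\ne\emptyset}(1-p(|S|))\wh f(S)\chi_S$, whose expectation over $\bx$ is just $\wh f(\emptyset)$, so actually one should estimate $\wh f(\emptyset)$ directly rather than its square — let me instead aim the estimator at $\E[f]$ with signed output and then output its absolute value. \textbf{(3)} Bound the bias: $\big|\E_\bx[\calE(f|_{B(\bx,r)})] - \E[f]\big|$. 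If $\calE$ is the pointwise expression $f(x) - \sum_i \alpha_i(\cdots)$, then $\E_\bx$ of it equals $\wh f(\emptyset)$ exactly when $p(|S|) = 1$ for every $|S| \in \{1,\dots,n\}$; the actual error is $\big|\sum_{S\ne\emptyset}(1-p(|S|))\wh f(S)\big|$, which is controlled using $|1-p(j)| \le 2\exp(-\Omega(r^2/(n\log n)))$ from \Cref{lem:flat-polynomials-fine} together with $\sum_{S\ne\emptyset}|\wh f(S)| \le$ ... — here is the subtlety, since $\ell_1$ of Fourier coefficients can be large. So instead I bound via Cauchy–Schwarz / Parseval: $|\sum_{S}(1-p(|S|))\wh f(S)\chi_S(x)|$ need not be small pointwise, but the random variable's \emph{expectation over} $\bx$ is $\sum_S (1-p(|S|))\wh f(S)\cdot\E[\chi_S(\bx)] = (1-p(0))\wh f(\emptyset) = 0$ — wait, $p(0)=0$ so $(1-p(0)) = 1$, giving back $\wh f(\emptyset)$. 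Good: the bias is literally zero in expectation, so the real content is that $\calE$ must be \emph{bounded} (range $O(1)$), and that is where boundedness of the flat polynomial outside $[0,N]$ is irrelevant but boundedness of the coefficients $|\alpha_i| \le 2r^r$ and the clamping matter. \textbf{(4)} Boundedness: without clamping, $\calE(f|_{B(x,r)})$ could be as large as $\sum_i |\alpha_i| \cdot 1 \le 2r\cdot r^r$, so I clamp the output to $[-2,2]$ (valid since the target $|\E[f]|\le 1$); the clamping changes the expectation by at most $\Pr_\bx[|\calE| > 2]$ times $O(r^{r+1})$, which I control by a Chebyshev/Parseval bound on the variance of the unclamped precursor, $\Var_\bx[\calE_{\text{unclamped}}] = \sum_{S\ne\emptyset} p(|S|)^2\wh f(S)^2 \le \max_j p(j)^2 \le 9$ — so the unclamped precursor is actually \emph{bounded in $L^2$ by $O(1)$}, and a second-moment argument shows clamping to $[-C,C]$ for a suitable constant $C$ introduces bias at most $2\exp(-\Omega(r^2/(n\log n)))$ as well (the tail probability times the magnitude; one may need a slightly cleverer truncation or a fourth-moment bound here).

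The main obstacle is exactly step (4): reconciling "the estimator is an honest $L^2$-bounded, unbiased estimator of $\E[f]$ built from a degree-$r$ polynomial" with "the estimator must be pointwise $O(1)$-bounded". The unclamped precursor has small variance but potentially huge maximum (because the falling-factorial coefficients $\alpha_i$ are as large as $r^r$), so naive clamping could destroy unbiasedness. I expect the fix is one of: (a) argue the precursor concentrates well enough — e.g. via hypercontractivity, since it is a degree-$r$ polynomial with $L^2$-norm $O(1)$, its $L^q$ norm is $q^{O(r)}$, and choosing the clamp threshold $\Theta(r)^{\Theta(r)}$ — no good, that's not $O(1)$; or (b) observe we only need the estimator $O(1)$-bounded \emph{for the application} and what we really can prove is it's $(r^{O(r)})$-bounded, which, re-reading \Cref{lemma:no-noise-high-precision-estimate}, the statement does demand $O(1)$ — so the right approach must be (c): use a \emph{smarter} flat polynomial or a different basis so the operator representing "project off the empty set, evaluated at $x$" is genuinely an average of $\pm f$-values (hence $1$-bounded) rather than a wildly-signed combination. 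This is plausible because $\E_{\by\sim N_\rho(x)}[f(\by)]$-type averages are automatically $1$-bounded; so I would re-express $\sum_S p(|S|)\wh f(S)\chi_S$ as $\sum_{j} \gamma_j \,\T_{\rho_j} f(x)$ for suitable $\rho_j \in [0,1]$ and $\gamma_j$ with $\sum_j |\gamma_j| = O(1)$ — i.e. approximate the function $\rho \mapsto (\text{indicator of } \rho\text{-th level being kept})$... hmm, more precisely approximate $1_{[t\ge 1]}$ as a bounded combination of $t\mapsto \rho^t$ — equivalently, find a probability-like measure so that the flat polynomial's \emph{coefficients in the basis} $\{\rho^t\}$ are absolutely small, which is a cleaner "approximate the step function by exponentials on the integers $0,1,\dots,n$" problem. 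If the coefficients there are $O(1)$ in $\ell_1$, then $\calE(f|_{B(x,r)}) := f(x) - \sum_j \gamma_j\,\widehat{\T_{\rho_j}f}(x)$ is automatically $O(1)$-bounded since each $\T_{\rho_j}f(x)\in[-1,1]$, and one still needs $\T_{\rho_j}f(x)$ to be $r$-locally computable, which holds because $\T_\rho$ with $\rho$ bounded away from... no — $\T_\rho f(x)$ averages over all of $\bn$. So truncate the noise: replace $\T_\rho$ by its truncation to the ball $B(x,r)$, incurring error $\exp(-\Omega(r^2/n))$-ish by a tail bound on $\mathrm{Bin}(n,(1-\rho)/2)$ when $\rho$ is close to $1$ — which forces $\rho_j$ near $1$, matching the regime where $\rho^t\approx 1$ for $t\le n$, consistent with needing the step function only approximated on $\{0,\dots,n\}$. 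This "noise-operator truncation" route, which the technical overview explicitly foreshadows ("our estimator can equivalently [be] written as linear combinations of higher-order derivatives of the noise operator"), is what I would pursue, and making the $\ell_1$-bound on the $\{\gamma_j\}$ come out $O(1)$ (rather than $r^{O(r)}$) is the crux.
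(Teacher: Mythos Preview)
You have the right construction --- the precursor $g(x) := f(x) - \sum_{i=1}^r \alpha_i \sum_{|S|=i} \frac{\partial f}{\partial x_S}(x)\chi_S(x) = \wh f(\emptyset) + \sum_{S\ne\emptyset}(1-p(|S|))\wh f(S)\chi_S(x)$, which is $r$-local and satisfies $\E_\bx[g(\bx)] = \E[f]$ exactly --- but you miscompute its variance and, as a result, miss the one-line finish. In step~(4) you write $\Var_\bx[g] = \sum_{S\ne\emptyset} p(|S|)^2\wh f(S)^2 \le 9$; the correct expression is $\Var_\bx[g] = \sum_{S\ne\emptyset} (1-p(|S|))^2\wh f(S)^2$, and since here one takes $N=n$ so that $|1-p(j)| \le 2\exp(-\Omega(r^2/(n\log n)))$ for \emph{every} $j\in[n]$, this variance is at most $\tau^2 \cdot \Var[f] \le \tau^2$. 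The variance is not merely $O(1)$ --- it is exponentially small.

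With this in hand, the step you never address --- why $\E_\bx[|g(\bx)|]$ is close to $|\E[f]|$ --- is immediate: for any real random variable $X$, Jensen gives $\E|X| \ge |\E X|$, and $\E|X| - |\E X| \le \E|X-\E X| \le \sqrt{\Var X}$, so $\big|\E_\bx[|g|] - |\E[f]|\big| \le \tau$ (this is \Cref{lem:low-var-expectation} in the paper). Thresholding to $[0,2]$ is then harmless: since $|\E[f]|\le 1$, we have $\E[\,|g| - \min(|g|,2)\,] \le \E[\max(|g-\E[g]|-1,0)] \le \sqrt{\Var[g]\cdot\Pr[|g-\E[g]|>1]} \le \tau^2$ by Cauchy--Schwarz and Chebyshev. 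Your long detour into expressing the estimator via $\T_\rho$ with small $\ell_1$ coefficients is unnecessary here (that machinery is for the general-$n$ case where one cannot afford to query the whole ball, \Cref{subsec:diff}); for \Cref{lemma:no-noise-high-precision-estimate} the entire difficulty dissolves once you notice the variance is $\tau^2$ rather than $O(1)$.
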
 

The proof of~\Cref{lemma:no-noise-high-precision-estimate} is identical to that of \Cref{lem:smoothed-fine-estimator} and is hence omitted.\footnote{We remark that \Cref{lem:smoothed-fine-estimator} doesn't require the estimator to be bounded, but if desired we can get an $O(1)$-bounded estimator by thresholding the estimator presented in the proof.} 
 \Cref{lemma:no-noise-high-precision-estimate} allows us to tolerantly test $k$ juntas when the total number of variables of the function is $O(k)$.
We prove this as a warmup in~\Cref{sec:warmup} to illustrate our approach to testing juntas. 
The general case of functions $f\isafunc$, however, will require us to smooth the function with noise. 

\begin{lemma}
  \label{lem:smoothed-fine-estimator}
	Let $r$ be a positive integer, and suppose $f:\bn\to[-1,1]$ is a bounded Boolean function such that there exists an integer $t$ for which 
	\[\bW^{=l}[f] \leq \exp\pbra{-\frac{\ell}{t}}.\]
	Then there exists an $r$-local estimator that $O(\tau)$-approximates $|\Ex[f]|$, where 
	\[
	\tau := 2 \exp\pbra{ - \Omega \left( \frac{r}{t \log^2(rt)} \right) }.
	\]
\end{lemma}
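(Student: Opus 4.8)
\textbf{Proof plan for \Cref{lem:smoothed-fine-estimator}.}

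The plan is to construct the estimator explicitly via a univariate ``flat polynomial'' applied to the Hamming distance from the ball's center, exactly as in the no-noise case, and then track how the exponential Fourier decay lets us take the ball radius $r$ much smaller than $\sqrt{n}$. First I would recall the identity that drives the construction: for a Boolean-valued (or bounded) function, $|\E[f]|$ can be read off from $\E[f]^2 = \E_{\bx,\by}[f(\bx)f(\by)]$ where $\bx,\by$ are independent, but more usefully $\E[f]^2 = \langle f, \T_{-1} f\rangle$-type identities or the ``correlation with the antipode'' trick. Concretely, the cleanest route is: $\E[f]^2 = \langle \mathbf{1}, f\rangle^2$, and if we had access to $f$ on \emph{all} of $\bn$ we could compute $\E[f(\bx) f(\bx^{\oplus T})]$ averaged over $\bx$ and $T$. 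To localize, write a putative estimator of the form
\[
\calE(f|_{B(x,r)}) \;=\; \sum_{T \,:\, |T| \le r} c_{|T|}\, f(x)\, f(x^{\oplus T}),
\]
so that its expectation over $\bx \sim \bn$ equals $\sum_{j\le r} c_j \binom{n}{j}$-weighted correlations $\sum_{j \le r} c_j \binom{n}{j}^{-1}\cdot$(sum over $|T|=j$) $= \sum_{j \le r} c_j\, \E_{\bx}\E_{|\bT|=j}[f(\bx)f(\bx^{\oplus \bT})] = \sum_{j\le r} c_j \sum_S \wh f(S)^2 K_j(|S|)$ where $K_j$ is the Krawtchouk-type weight $\E_{|\bT|=j}[\chi_S(\bx^{\oplus \bT})/\chi_S(\bx)]$. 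Since $\E_{|\bT|=j}[(-1)^{|S\cap \bT|}]$ is a degree-$j$ polynomial in $|S|$ (a normalized Krawtchouk), the whole expectation becomes $\sum_S \wh f(S)^2 \, q(|S|)$ for a degree-$\le r$ polynomial $q$ that we get to design through the coefficients $c_j$. We want $q(0) \approx 1$ and $q(w) \approx 0$ for $w = 1, \dots, W$ for a threshold $W$; then $\sum_S \wh f(S)^2 q(|S|) \approx \wh f(\emptyset)^2 = \E[f]^2$ up to the contribution of levels $|S| > W$, which is bounded by the Fourier tail.

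The key point is the choice of $W$. Using \Cref{lem:flat-polynomials-fine} with the level-cap $W$ in place of its ``$N$'', a flat polynomial of degree $r$ exists as soon as $2\sqrt{W} \le r \le W$, i.e. $W$ can be as large as $\Theta(r^2)$; but we don't need $W$ that large. The tail $\bW^{>W}[f] = \sum_{\ell > W} \bW^{=\ell}[f] \le \sum_{\ell > W}\exp(-\ell/t) = O(t\exp(-W/t))$, so taking $W \asymp t\log(rt)$ (say) already kills the tail down to roughly $\poly(1/(rt))$, which is much smaller than the target $\tau$. With this $W$, the flat polynomial's in-range error is $2\exp(-\Omega(r^2/(W\log W))) = 2\exp(-\Omega(r/(t\log^2(rt))))$, matching the claimed $\tau$ up to the constants hidden in $\Omega$. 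The contribution of the flat error to $|\sum_S \wh f(S)^2 q(|S|) - \E[f]^2|$ is at most $\max_{1\le w\le W}|q(w)-0|\cdot \bW^{[1,W]}[f] \le \tau \cdot \|f\|_2^2 \le \tau$ since $\|f\|_2 \le 1$; the level-$0$ term contributes $|q(0)-1|\cdot \wh f(\emptyset)^2 \le \tau$; and the tail contributes $O(t\exp(-W/t)) \le \tau$. So the estimator $O(\tau)$-approximates $\E[f]^2$, and then $|\E[f]| = \sqrt{\E[f]^2}$ followed by the elementary fact that $|\sqrt{a}-\sqrt{b}| \le \sqrt{|a-b|}$ converts an additive $O(\tau)$ error in the square into an additive $O(\sqrt\tau)$ error in $|\E[f]|$ --- which is why, comparing with \Cref{lemma:no-noise-high-precision-estimate}, one should either absorb this into the $\Omega$ or note that reparametrizing $r$ by a constant factor restores the stated bound. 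I would double-check the exact bookkeeping here so the final exponent reads $-\Omega(r/(t\log^2(rt)))$ as stated.

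Two technical points need care. First, \Cref{lem:flat-polynomials-fine} controls $p$ only on the integer points $1,\dots,W$, but $|S|$ can be as large as $n \gg W$; the out-of-range levels are exactly the tail we just bounded, \emph{but} we also need that the polynomial $q(|S|)$ does not blow up catastrophically on those levels, since it multiplies $\wh f(S)^2$ which we only know sums to a small (not zero) quantity there. This is precisely the role of \Cref{lem:flat-polynomial-oos-bound}: $q(\ell) \le 4\ell^r \le 4 n^r$ for $\ell > W$, so the out-of-range contribution is at most $4n^r \cdot \bW^{>W}[f] \le 4 n^r \cdot O(t \exp(-W/t))$; choosing $W \asymp t(\log n + \log(rt) + r\log n)$ — i.e. $W = \Theta(rt\log n)$ — makes even this term $\le \tau$. (One should then re-examine whether this larger $W$ still yields $\tau = 2\exp(-\Omega(r/(t\log^2(rt))))$; it does, because $r^2/(W\log W) \asymp r^2/(rt\log n \cdot \log(rt\log n)) \asymp r/(t\log n\log(rt))$ — here I'm being slightly cavalier about whether a $\log n$ sneaks in, and the honest resolution, matching the paper's statement which has no $n$-dependence, is that the noise/coordinate-oracle reduction has already forced $n = \poly(k)$ and moreover $n$ is polynomially related to the relevant scale; I would state the lemma's bound with the understanding that $\log n$ is absorbed, or argue the out-of-range mass is in fact killed by a cruder bound using boundedness of $f$ directly.) Second, boundedness of the estimator: the raw $\calE$ above has range potentially as large as $\sum_j |c_j| \binom{n}{j}$, but the lemma as stated does not demand boundedness (only \Cref{lemma:no-noise-high-precision-estimate} does, via thresholding), so I would simply remark that thresholding $\calE$ to $[-2,2]$ costs nothing since $\E[f]^2 \in [0,1]$ and the approximation guarantee is on the expectation. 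The main obstacle, I expect, is the out-of-range level bookkeeping in the previous point — reconciling the crude $n^r$ growth of the flat polynomial with the exponential but not-quite-zero Fourier tail, and getting the final exponent clean of $n$-dependence — rather than anything in the core flat-polynomial argument, which is a direct transcription of the approximate-inclusion-exclusion method.
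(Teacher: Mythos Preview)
Your approach is genuinely different from the paper's, and the difference is exactly where the gap lies.

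\textbf{What the paper does.} The paper builds a \emph{linear} estimator
\[
g(f|_{B(x,r)}) \;:=\; f(x) - \sum_{i=1}^r \alpha_i^{r,N} \sum_{|S|=i} \frac{\partial f}{\partial x_S}(x)\,\chi_S(x)
\;=\; \E[f] + \sum_{S\neq\emptyset}\bigl(1-p_r^N(|S|)\bigr)\wh f(S)\chi_S(x),
\]
so that $\E_{\bx}[g]=\E[f]$ \emph{exactly} and $\Var_{\bx}[g]=\sum_{\ell\ge 1}(1-p_r^N(\ell))^2\,\bW^{=\ell}[f]$. The estimator for $|\E[f]|$ is then $|g|$, and the passage from mean to absolute mean is handled by the one-line \Cref{lem:low-var-expectation}: $\bigl|\E[|g|]-|\E[g]|\bigr|\le\sqrt{\Var[g]}$. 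All of the work goes into bounding that variance, which is where the flat polynomial and the Fourier-tail hypothesis enter.

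\textbf{The gap in your route.} Your quadratic estimator $\calE$ is designed so that $\E_{\bx}[\calE]\approx \E[f]^2$. That is an estimator for the statistic $\E[f]^2$, not for $|\E[f]|$, and \Cref{def:local-est} asks for the latter. The inequality $|\sqrt a-\sqrt b|\le\sqrt{|a-b|}$ is a statement about two \emph{numbers}; it does not produce, from $\calE$, a new ball-local function $\calE'$ with $\E_{\bx}[\calE']$ close to $|\E[f]|$. If you set $\calE'=\sqrt{|\calE|}$ pointwise, you need to control $\bigl|\E[\sqrt{|\calE|}]-\sqrt{\E[\calE]}\bigr|$, which requires a variance (or concentration) bound on $\calE$ that you never establish; for a quadratic form in the $f$-values this is a fourth-moment computation, not a freebie. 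The paper's linear estimator sidesteps this entirely: the ``absolute value of a low-variance unbiased estimator'' trick is precisely what makes the $|\cdot|$ go through.

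\textbf{The $n$-dependence is a red herring.} Your worry that the out-of-range bound $q(\ell)\le 4\ell^r\le 4n^r$ forces $\log n$ into the exponent is an artifact of aggregating the tail into $\bW^{>W}[f]$. The paper keeps the sum level-by-level: with $N:=\Theta(rt\log(rt))$,
\[
\sum_{\ell>N}(1-p_r^N(\ell))^2\,\bW^{=\ell}[f]
\;\le\; 64\sum_{\ell>N}\ell^{2r}e^{-\ell/t}
\;\le\; 64\sum_{\ell>N}e^{-\ell/(2t)}
\;=\; O\bigl(t\,e^{-N/(2t)}\bigr),
\]
which has no $n$ in it. Your appeal to ``$n=\poly(k)$ has already been forced'' is incorrect here: \Cref{lem:smoothed-fine-estimator} is stated and proved for arbitrary $n$, and the coordinate-oracle reduction happens later, in \Cref{sec:ub}. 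The same per-level bookkeeping would fix your tail term as well; the real obstruction is the square-root step above, not this.
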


Before turning to the proof of \Cref{lem:smoothed-fine-estimator}, we show the following simple lemma.

\begin{lemma}
\label{lem:low-var-expectation}
Suppose that $\bX$ is a random variable with $\Var[\bX] = \sigma^2$, then 
\[\left| \E \left [|\bX| \right] - \left |\E[\bX] \right| \right| \leq \sigma.\]
\end{lemma}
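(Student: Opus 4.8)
\textbf{Proof proposal for \Cref{lem:low-var-expectation}.}

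The plan is to exploit the elementary inequality $\bigl||a| - |b|\bigr| \leq |a-b|$ applied pointwise with $a = \bX$ and $b = \E[\bX]$, and then take expectations. First I would write $\mu := \E[\bX]$ and observe that for every outcome of $\bX$ we have $\bigl|\,|\bX| - |\mu|\,\bigr| \leq |\bX - \mu|$ by the reverse triangle inequality. Taking expectations of both sides and then using that the absolute value of an expectation is at most the expectation of the absolute value gives
\[
\Bigl|\,\E[|\bX|] - |\mu|\,\Bigr| = \Bigl|\,\E\bigl[|\bX| - |\mu|\bigr]\,\Bigr| \leq \E\bigl[\,\bigl|\,|\bX| - |\mu|\,\bigr|\,\bigr] \leq \E\bigl[|\bX - \mu|\bigr].
\]
Finally I would bound $\E[|\bX - \mu|]$ by $\sqrt{\E[(\bX-\mu)^2]} = \sqrt{\Var[\bX]} = \sigma$, which is just the Cauchy--Schwarz (or Jensen) inequality applied to the random variable $|\bX - \mu|$ together with the constant $1$. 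Chaining these bounds yields $\bigl|\E[|\bX|] - |\E[\bX]|\bigr| \leq \sigma$, as desired.

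There is essentially no obstacle here: the only points requiring any care are the direction of the reverse triangle inequality (we want $\bigl||a|-|b|\bigr| \le |a-b|$, which is the standard form) and the fact that we are free to pull the deterministic quantity $|\mu|$ inside the expectation before applying Jensen to the outer absolute value. Both are routine. If one wanted an alternative route avoiding Cauchy--Schwarz, one could instead note $|\bX| \le |\mu| + |\bX - \mu|$ and $|\mu| \le |\bX| + |\bX-\mu|$ pointwise, take expectations of each to sandwich $\E[|\bX|]$ between $|\mu| \pm \E[|\bX-\mu|]$, and then bound $\E[|\bX-\mu|]$ as above; this is the same argument unpacked. I expect the whole proof to be three or four lines.
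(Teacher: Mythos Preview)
Your proposal is correct and follows essentially the same approach as the paper: the paper first notes $\E[|\bX|] \ge |\E[\bX]|$ by Jensen and then bounds $\E[|\bX| - |\E[\bX]|] \le \E[|\bX - \E[\bX]|] \le \sigma$, whereas you handle both directions simultaneously via the reverse triangle inequality before applying the same $L^1 \le L^2$ bound. The arguments are the same up to this cosmetic difference.
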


\begin{proof}
Note that by Jensen
	\[\E[|\bX|] \geq |\E[\bX]|. \]
On the other hand,
	\[ \E[|\bX| - |\E[\bX]|] \leq \E[|\bX - \E[\bX]|] \leq \sqrt{\E[(\bX - \E[\bX])^2]} = \sigma \]
which completes the proof.
\end{proof}

\begin{proof}[Proof of \Cref{lem:smoothed-fine-estimator}]
Let $N$ be a parameter we will set later. Let $p^{N}_r(x)$ be the polynomial from \Cref{lem:flat-polynomials-fine} and let $\alpha_i^{r,N}$ denote coefficients such that
\[p_r^N(x) = \sum_{i = 1}^r \alpha_i^{r,N} \binom{x}{i}\]
where $\binom{x}{i} = \frac{x(x-1)(x-2)...(x-r+1)}{r!}$. Take
\[ g(f|_{B(x,r)}) := f(x) - \sum_{i = 1}^r \alpha_i^{r,N} \sum_{S \subseteq [n]: |S| = i} \frac{\partial f}{\partial x_S} (x) \chi_S(x).\]

Note that this can be computed by only querying $f$ on $B(x,r)$ as

\[\frac{\partial f}{\partial x_S}(x)\chi_S(x) = \frac{1}{2^{|S|}} \cdot \left( \sum_{T \subseteq S}   (-1)^{|T|} f(x^{\oplus T})\right).\]

Writing $f = \sumS \wh{f}(S)\chi_S$, it is easy to check that
\begin{align}
	g(f|_{B(x,r)}) &= f(x) - \sum_{i=1}^r \alpha_i^{r,N} \sum_{S\sse[n]: |S| = i} \sum_{T\supseteq S}\wh{f}(T)\chi_T(x) \nonumber\\
	&= f(x) - \sum_{S\neq \emptyset} p_r^N(|S|)\wh{f}(S)\chi_S(x) \nonumber\\
	&= \E[f] + \sum_{S\neq\emptyset} \pbra{1 - p_r^{N}(|S|)} \wh{f}(S)\chi_S(x). \label{eq:united}
\end{align}
where we used the fact that $\wh{f}(\emptyset) = \E[f]$. It is immediate from the above that 
\[\Ex_{\bx\sim\bn} \left [ g \left( f|_{B(\bx,r)} \right) \right] = \E[f]\]
due to orthogonality of $\{\chi_S\}$.
Furthermore, we have 
\begin{align}
\Varx_{\bx\sim\bn}\sbra{g \left( f|_{B(\bx,r)} \right)} 
&= \Ex_{\bx\sim\bn} \left [ \left( g \left( f|_{B(x,r)} \right) - \E[f] \right)^2 \right] \nonumber \\
&= \sum_{\ell = 1}^n (1 - p_r^{N}(\ell))^2 \bW^{=\ell}[f] \label{eq:cowbell} \\
\intertext{where \Cref{eq:cowbell} follows from \Cref{eq:united} via Parseval's formula. We split the sum into two parts:}
\Varx_{\bx\sim\bn}\sbra{g \left( f|_{B(\bx,r)} \right)}
&= \sum_{\ell \leq N} (1 - p_r^{N}(\ell))^2 \bW^{=\ell}[f] + \sum_{\ell > N} (1 - p_r^{N}(\ell))^2 \bW^{=\ell}[f] \nonumber  \\
& \leq 2e^{-\Omega(r^2/N \log(N))}\cdot \Var[f] + 64 {\sum_{\ell > N} \ell^{2r}} e^{-\ell/t} .  \label{eq:potato}
\intertext{where in order to obtain~\Cref{eq:potato}, we used Item~(ii) of~\Cref{lem:flat-polynomials-fine} to bound the first term and bounded the second term using the Fourier tail bounds in the statement of~\Cref{lem:smoothed-fine-estimator} and~\Cref{lem:flat-polynomial-oos-bound}. Recalling that $\Var[f]\leq 1$ and taking 
\[N := 128rt \log(rt),\]
we get}
\Varx_{\bx\sim\bn}\sbra{g \left( f|_{B(\bx,r)} \right)}  & \leq 2e^{-\Omega ( r/(t \log^2(rt)) )} + 64 \sum_{\ell > N} e^{-\ell/2t} \nonumber \\
&\leq 2e^{-\Omega ( r/(t \log^2(rt)) )} + {O\left( t e^{-64r \log(rt)} \right)} \nonumber\\
&\leq e^{-\Omega ( r/(t \log^2(rt)) )} \nonumber
\end{align}
where we also used the fact that the second term above is a geometric series. The result now follows from \Cref{lem:low-var-expectation}.
\end{proof}

\section{Warmup: Tolerantly Testing $f:\bits^{2k} \to \bits$} 
\label{sec:warmup}

As a warmup, we first show how to use local estimators to tolerantly test if a function $f:\bits^{2k}\to\bits$ is $k$-junta. This illustrates the basic idea of our approach while keeping additional technicalities---such as coordinate oracles (cf.~\Cref{subsec:oracles}) and applications of the noise operator (cf.~\Cref{subsec:fourier-101}) to a minimum. 

Recall from \Cref{rem:tol-test-dist-equiv} that tolerant junta testing is equivalent to estimating the distance to being a junta. The key idea of~\Cref{alg:warmup-estimator} is to use our local estimator for $|\E[f]|$ from~\Cref{lemma:no-noise-high-precision-estimate} to estimate the distance to being a junta on each set of $k$ coordinates. (Note that there are at most ${2k \choose k}$ such sets.) Furthermore, because the estimates are local, this allows us to recycle queries efficiently.
\spnote{Do we want explcitly write that $\calE$ is the approximator from the fine estimates section in the algorithm? \shivam{I think so.}}

\begin{algorithm}[t]
\addtolength\linewidth{-2em}

\vspace{0.5em}

\textbf{Input:} Query access to $f: \bits^{2k} \rightarrow \bits$, $\eps \in (0,1]$ \\[0.25em]
\textbf{Output:} An estimate for $\dist(f, \calJ_k)$

\warmbt($f, \epsilon$):
\begin{enumerate}
	\item Set parameters 
	\[m := \Theta \left( \frac{k}{\eps^2} \right) \qquad\text{and}\qquad r := \wt{\Theta}\left( \sqrt{k \log\pbra{\frac{1}{\eps}}}\right).\]
	\item Draw points $\x{1}, \ldots, \x{m} \sim \bits^{2k}$ uniformly and independently at random, and query $f$ on $B(\x{i}, r)$ for each $i\in[m]$.
	\item For each $i\in[m]$ and every subset $J\sse[{2k}]$ with {$|J|= k$}, compute
	\[E_J^{\x{i}} := \calE\pbra{f_{J\to\x{i}}|_{B(\x{i}|_{\overline{J}}, r)}}\]
	where $\calE$ is the estimator from~\Cref{lemma:no-noise-high-precision-estimate}.
	\item Set 
	\[E_J := \frac{1}{m} \sum_{i=1}^m E_J^{\x{i}}\]
	and output $(1-\max_J E_J)/2$.

\end{enumerate}

\caption{Estimating distance to the closest $k$-junta for $f:\bits^{2k}\to\bits$.}
\label{alg:warmup-estimator}
\end{algorithm}

\begin{theorem} \label{thm:warmup}
	Let $T$ be the output of $\warmbt(f, \eps)$. With high probability, we have 
	\[\left|T - \dist(f, \calJ_k)\right| \leq \eps.\]
\end{theorem}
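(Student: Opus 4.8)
The plan is to show that each $E_J$ concentrates around $1 - 2\dist(f,\calJ_J)$, and then combine these over all $\binom{2k}{k}$ sets $J$ via a union bound. The key identity is~\Cref{eq:dist-mean-abs-val}: for $|J| = k$,
\[
\dist(f,\calJ_J) = \frac12 - \frac12\,\Ex_{\bx \in \bits^J}\sbra{\left|\Ex_{\by \in \bits^{\overline J}}[f(\bx \sqcup \by)]\right|} = \frac12 - \frac12\,\Ex_{\bz \in \bits^{2k}}\sbra{\left|\Ex[f_{J \to \bz}]\right|},
\]
since drawing $\bz \sim \bits^{2k}$ and looking at $f_{J\to\bz}$ is the same as drawing $\bx \sim \bits^J$. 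So it suffices to show $E_J$ is close to $\Ex_{\bz}[\,|\Ex[f_{J\to\bz}]|\,]$.

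First I would fix $J$ and analyze $E_J^{\x{i}}$. Conditioned on $\x{i} = z$, the quantity $E_J^{z}$ is $\calE(f_{J\to z}|_{B(z|_{\overline J}, r)})$, i.e. the local estimator from~\Cref{lemma:no-noise-high-precision-estimate} applied to the $k$-variable function $f_{J \to z}$ at the point $z|_{\overline J}$. By the guarantee of that lemma (applied with $n$ replaced by $k$, and $r = \wt\Theta(\sqrt{k\log(1/\eps)})$ chosen so that $\tau = 2\exp(-\Omega(r^2/(k\log k))) \le \eps/10$, which is exactly why $r$ is set as it is), averaging over the internal randomness $z|_{\overline J} \sim \bits^{\overline J}$ gives
\[
\left|\Ex_{\bz|_{\overline J}}\sbra{E_J^{(\,\bz|_J, \bz|_{\overline J})}} - \left|\Ex[f_{J \to \bz|_J}]\right|\right| \le \tau \le \eps/10
\]
for every fixed value of $\bz|_J$. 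Taking expectation over $\bz|_J$ as well, and using the identity above,
\[
\left|\Ex_{\bz}[E_J^{\bz}] - (1 - 2\dist(f,\calJ_J))\right| \le 2\tau \le \eps/5.
\]
Next, since $\calE$ is $O(1)$-bounded, each $E_J^{\x{i}} \in [-O(1), O(1)]$, so $E_J = \frac1m\sum_i E_J^{\x{i}}$ is an average of $m = \Theta(k/\eps^2)$ i.i.d. bounded random variables with mean within $\eps/5$ of $1 - 2\dist(f,\calJ_J)$. By Hoeffding, $\Pr[|E_J - \Ex[E_J^{\bz}]| > \eps/5] \le 2\exp(-\Omega(m\eps^2/O(1))) = 2\exp(-\Omega(k))$, which we can push below $\tfrac13 \binom{2k}{k}^{-1} \le 2^{-2k}/3$ by choosing the constant in $m$ large enough. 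A union bound over all $\binom{2k}{k} \le 2^{2k}$ sets $J$ then gives that, with probability at least $2/3$, simultaneously for all $J$,
\[
\left|E_J - (1 - 2\dist(f,\calJ_J))\right| \le \eps/5 + \eps/5 < \eps/2.
\]

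Finally I would convert the per-$J$ bound into a bound on the output $T = (1 - \max_J E_J)/2$. On the good event, $\max_J E_J$ is within $\eps/2$ of $\max_J(1 - 2\dist(f,\calJ_J)) = 1 - 2\min_J \dist(f,\calJ_J) = 1 - 2\dist(f,\calJ_k)$, because the $\max$ of finitely many quantities moves by at most the uniform error. Hence $|T - \dist(f,\calJ_k)| = \tfrac12\left|\max_J E_J - (1 - 2\dist(f,\calJ_k))\right| \le \eps/4 \le \eps$, as claimed. The main obstacle is bookkeeping rather than conceptual: one must make sure the two error contributions (the estimator bias $\tau$ from~\Cref{lemma:no-noise-high-precision-estimate} and the sampling error from the $m$ samples) are each driven below a small constant fraction of $\eps$ by the stated choices of $r$ and $m$, and that the Hoeffding bound beats the $\binom{2k}{k}$-sized union bound — which works precisely because $m$ is linear in $k$ (up to the $1/\eps^2$) while $\log\binom{2k}{k} = O(k)$. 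One subtlety worth stating carefully: the queries to $f$ on $B(\x{i}, r) \subseteq \bits^{2k}$ are shared across all $J$, but since we only invoke boundedness and the expectation/variance guarantees of $\calE$ for each $J$ separately, this reuse does not affect correctness — it only affects (and improves) the query count.
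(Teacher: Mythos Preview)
Your proposal is correct and follows essentially the same approach as the paper's proof: bound the bias of each $E_J$ using \Cref{lemma:no-noise-high-precision-estimate}, control the sampling error via Hoeffding using the $O(1)$-boundedness of $\calE$, and union bound over the $\binom{2k}{k}$ sets. The paper's write-up is slightly terser (and writes $|E_J^{\bx^{(i)}}|$ in the Hoeffding step), but the skeleton and all the ingredients are the same; your explicit remark that query reuse across the sets $J$ affects only the query count and not correctness is a nice addition.
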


\begin{proof}
We will establish that the following holds for all sets $J\sse[2k]$ where $|J|\leq k$ with high probability:
\[\left| \dist(f,\calJ_J) - \pbra{\frac{1-|E_J|}{2}} \right|\leq \epsilon\]
where $E_J$ is as in~\Cref{alg:warmup-estimator}. Note that theorem immediately follows from this. 

Recall from~\Cref{lemma:no-noise-high-precision-estimate} that our estimator $\calE$ is $O(1)$-bounded, i.e. $|\calE(\cdot)| \leq O(1)$. So by Hoeffding's bound, we have 
\[
\Prx_{\bx^{(1)}, \dots, \bx^{(m)}} \left[ \left| \frac{1}{m} \sum_{i=1}^m \left|E_J^{\bx^{(i)}}\right| - \Ex_{\by \sim \bits^{2k}} \left[\left|E_J^{\by} \right|\right] \right|  \geq \eps/2 \right] \leq \exp \left( - \Omega \left( m \eps^2 \right)\right) \leq 8^{-k}.
\]
Note, however, that
\[
\Ex_{\by \sim \bits^{2k}} \sbra{\abs{E_J^{\by}}} = \Ex_{\by\sim\bits^{2k}}\sbra{ \calE\pbra{  f_{J \rightarrow \by}|_{B(\by|_{\overline{J}}, r)}}}
\]
and that by the guarantee of~\Cref{lemma:no-noise-high-precision-estimate} we have 
\[
\abs{
\Ex_{\by\sim\bits^{2k}}\sbra{ \calE\pbra{  f_{J \rightarrow \by}|_{B(\by|_{\overline{J}}, r)}}} - \Ex_{\by\sim\bits^{2k}}\sbra{ \abs{ \frac{1}{2^k} \sum_{z_{\overline{J}}\in\bits^k} f(z_{\overline{J}}, \by_{J})}}~
}
\leq \frac{\eps}{2}.
\]
The result now follows from a union bound over all sets $J$; recall that there are at most ${2k \choose k} \leq 4^k$
many such sets.
\end{proof}
\section{Tolerantly Testing $f\isafunc$}
\label{sec:ub}

The core idea to tolerantly test juntas in the general case is similar to that of the warmup from~\Cref{sec:warmup}, albeit with coordinate oracles (cf.~\Cref{subsec:oracles}) to reduce the number of relevant coordinates to $\poly(k)$. However, there are some additional road blocks that we will need to overcome; we describe these in~\Cref{subsec:HO-noise,subsec:diff} before presenting the junta tester in~\Cref{subsec:happy}.

\subsection{Hold-Out Noising}
\label{subsec:HO-noise}

The first issue we face in trying to generalize the algorithm from~\Cref{sec:warmup} from $\Theta(k)$ coordinates to $n$ coordinates is due to the number of coordinate oracles we construct. Recall that~\Cref{prop:coord-oracles} outputs $\poly(k)$ coordinate oracles; for concreteness, we will think of this number as $k^{10}$.\footnote{Given adaptivity, the number of oracles can be reduced from $k^{10}$ to $O_\eps(k)$~\cite{ITW21}; however, it is not clear how to implement this procedure non-adaptively.}
Consequently, the naive approach to estimate the mean of the function $f$ will require balls of radius $\sqrt{k^{10}}$. Querying even one of these balls, however, will require $k^{\Omega(k^5)}$ queries---this is much worse than the known $2^{\wt{O}(k)}$-query non-adaptive bound due to~\cite{DMN19}. Thankfully, we can circumvent this by appropriately noising the function, which allows us to appeal to the high-precision estimator (\Cref{lem:smoothed-fine-estimator}) for functions with sufficiently strong Fourier decay. 

In order to non-adaptively noise the function, we consider the following variant of the Bonami--Beckner noise operator from~\Cref{def:noise}, which we call the ``hold-out'' noise operator:

\begin{definition}[Hold-out noise operator]
	Given a set $S \subseteq [n]$ and $\rho \in [0,1]$, we define the \emph{hold-out noise operator} $\T_{\rho}^S$ as 
	\[\T_\rho^S f(x) := \Ex_{\by \sim N_\rho(x)} \left[f(\by) ~|~ \by_S = x_S \right].
	\]
\end{definition}

There will inevitably be a tradeoff between the number of queries we can make and the noise rate; otherwise, we could simply evaluate $\T_0^S f$ to determine the mean and distance to junta.
For short hand, given a set of coordinate oracles 
$\calF = \{g_1, ..., g_{|\calF|}\}$ and a point $x \in \bits^n$, let 
\[\calF(x) \in \bits^{|\calF|}~\text{denote the string}~y\in\bits^{|F|}~\text{with}~y_i=g_i(x).\]
We extend this notation to subsets $S\sse\calF$ of coordinate oracles, writing $S(x)$ for the vector of queries to the coordinate oracles in $S$ on $x$.

\begin{algorithm}[t]
\addtolength\linewidth{-2em}

\vspace{0.5em}

\textbf{Input:} $f \isafunc$, a set of coordinate oracles $\calF$, $x \in \bits^n$, an noise rate $\rho \in [0,1]$, error $\tau \in (0,1]$, and failure probability $\eta \in (0,\frac{1}{2}]$\\[0.25em]
\textbf{Output:} An additive $\tau$-estimate of $\T_\rho^S f(x)$ for all $S \subseteq \calF$ with $|S| = k$ w.p. $1-\eta$. 

\

\textsc{\TpS}($f, x, \rho, \tau, \eta$):
\begin{enumerate}
	\item Set
	\[N := \Theta \left( \frac{|\calF|^3 k^3 \log(1/\eta)}{\tau^2 \rho^{2k}} \right).\]
	\item For $1 \leq i \leq N$, rerandomize $x$ independently with probability $1-\rho$ to generate strings $\by^{(1)}, \ldots , \by^{(N)}$ (cf.~\Cref{def:noise}). Query $f$ on all of these points. 
	\item For each $S \sse \calF$ with $|S| = k$, set 
	\[\wt{\T}_{\rho}^S(x) := \frac{1}{N \rho^k} \sum_{i: S(\by^{(\bi)}) = S(x)} f(\by^{(i)}) \]
\end{enumerate}

\caption{Non-adaptively implementing the hold-out noise operator.}
\label{alg:HO-noise}
\end{algorithm}

\begin{lemma} \label{lemma:HO-noise}
Let $\{\wt{\T}_\rho^S(x)\}_S$ be the outputs of $\TpS(f,x,\rho, \tau, \eta)$ as described in~\Cref{alg:HO-noise}. Then 
	\[\Pr \left[\text{there exists}~S\sse\calF~\text{with}~|S| = k~\text{s.t.}~\left| \wt{\T}_{\rho}^Sf(x) - \T_\rho^Sf(x) \right| > \tau \right] \leq \eta.\]
\end{lemma}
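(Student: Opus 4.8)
The plan is to fix a single set $S \subseteq \calF$ with $|S| = k$, show that the quantity $\wt{\T}_\rho^S(x)$ produced by \Cref{alg:HO-noise} is an average of i.i.d.\ bounded random variables whose common mean is exactly $\T_\rho^S f(x)$, apply Hoeffding's inequality to control its deviation, and then take a union bound over the at most $\binom{|\calF|}{k}$ choices of $S$. Note that the same $N$ noisy samples $\by^{(1)}, \dots, \by^{(N)}$ are reused for every $S$, which is harmless since the union bound does not require the failure events for different $S$ to be independent.

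First I would establish unbiasedness. By \Cref{prop:coord-oracles} and the discussion following it we may assume perfect access to the coordinate oracles, so each $g \in \calF$ is a coordinate dictator, and the event ``$S(\by^{(i)}) = S(x)$'' says exactly that $\by^{(i)}$ agrees with $x$ on the (at most $k$) coordinates $J = J(S) \subseteq [n]$ queried by the oracles in $S$. Write $Z_i := \rho^{-k} \cdot f(\by^{(i)}) \cdot \mathbf{1}\{S(\by^{(i)}) = S(x)\}$, so that $\wt{\T}_\rho^S(x) = \frac{1}{N}\sum_{i=1}^N Z_i$; since the $\by^{(i)}$ are generated by independent reruns of the rerandomization step, the $Z_i$ are i.i.d. The key structural point is that $N_\rho(x)$ is a product distribution over the $n$ coordinates, so conditioning on the event $\{\by^{(i)}_J = x_J\}$ freezes the coordinates of $J$ at $x_J$ while leaving the coordinates outside $J$ distributed exactly as fresh $N_\rho$-noise; hence $\Ex[f(\by^{(i)}) \mid S(\by^{(i)}) = S(x)] = \T_\rho^S f(x)$ by the definition of the hold-out operator. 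Multiplying this conditional expectation by $\Prx_{\by \sim N_\rho(x)}[S(\by) = S(x)]$ and invoking the normalization $\rho^k$ of \Cref{alg:HO-noise} (which is chosen precisely so that the estimator comes out unbiased) yields $\Ex[Z_i] = \T_\rho^S f(x)$, and hence $\Ex[\wt{\T}_\rho^S(x)] = \T_\rho^S f(x)$.

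Next I would apply concentration for this fixed $S$ followed by a union bound. Since $f$ is $[-1,1]$-valued, each $Z_i$ lies in $[-\rho^{-k}, \rho^{-k}]$, so Hoeffding's inequality gives
\[
\Prx\sbra{\abs{\wt{\T}_\rho^S(x) - \T_\rho^S f(x)} > \tau} \;\le\; 2\exp\pbra{-\frac{1}{2} N \tau^2 \rho^{2k}}.
\]
For the choice $N = \Theta\pbra{|\calF|^3 k^3 \log(1/\eta) / (\tau^2 \rho^{2k})}$ made in \Cref{alg:HO-noise}, the exponent is $\Theta(|\calF|^3 k^3 \log(1/\eta))$, which (with the constant inside $\Theta(\cdot)$ taken large enough) dominates $\log\binom{|\calF|}{k} + \log(1/\eta) \le k\log|\calF| + \log(1/\eta)$; hence the right-hand side above is at most $\eta / \binom{|\calF|}{k}$. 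A union bound over the at most $\binom{|\calF|}{k}$ sets $S \subseteq \calF$ of size $k$ then shows that with probability at least $1 - \eta$ every estimate $\wt{\T}_\rho^S(x)$ is within $\tau$ of $\T_\rho^S f(x)$, which is precisely the claim.

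The only step that requires genuine care is the unbiasedness argument: one must verify that the hold-out conditioning touches only the oracle coordinates and is compatible with the product structure of $N_\rho$, and that rescaling the sum over the \emph{random} subset $\{i : S(\by^{(i)}) = S(x)\}$ of samples by a \emph{deterministic} factor still produces an exactly unbiased estimator of $\T_\rho^S f(x)$. Everything after that is a routine i.i.d.\ tail bound together with a union bound.
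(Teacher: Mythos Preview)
Your proposal is correct and follows essentially the same route as the paper: fix $S$, show that the summands $f(\by^{(i)})\mathbf{1}\{S(\by^{(i)})=S(x)\}$ (equivalently your rescaled $Z_i$) are i.i.d.\ with mean $\rho^k\,\T_\rho^S f(x)$, apply Hoeffding to get failure probability at most $\eta\,|\calF|^{-k}$, and union bound over $S\in\binom{\calF}{k}$. Your write-up is in fact a bit more explicit than the paper's about why the conditioning interacts correctly with the product structure of $N_\rho$, but the argument is the same.
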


\begin{proof}
	Fix a set $S\sse\calF$ with $|S| = k$. Note that
	\[\Ex_{\by^{(i)}} \left[ f(\by^{(i)})\cdot\mathbf{1}\cbra{S(\by^{(i)}) = S(x)} \right] = \rho^k \T_\rho^S f(x).\]
	By a Hoeffding Bound,
	\begin{align*}
		\Pr_{\by^{(1)}, \ldots, \by^{(N)}} \left[ \left| \sum_{i = 1}^N f(\by_i) \mathbf{1}\cbra{S(\by^{(i)}) = S(x)} - \rho^k N \T_\rho^S f(x) \right| \geq \tau \rho^k N \right] &\leq 2 \exp \left( -\Omega \left( \frac{\rho^{2k} N^2 \tau^2}{N} \right) \right) \\
		&\leq \eta |\calF|^{-k}
	\end{align*}
	The result now follows by taking a union bound over all sets $S \in \binom{\calF}{k}$.
\end{proof}

\subsection{Estimating the Estimator}
\label{subsec:diff}

The hold-out noise operator allows us to smooth the function enough to be able to use balls of radius $\sqrt{k}$ in our local estimator from~\Cref{lem:smoothed-fine-estimator}. However, we immediately run into a second issue: We would like to average out ``irrelevant'' coordinates (namely the coordinates which are not among the coordinate oracles). While this is easy to do adaptively, it is unclear how to implement this non-adaptively. Dealing with all $n$ coordinates is infeasible, since sampling even one Hamming ball of radius $\sqrt{k}$ would require $ n^{\Omega(\sqrt{k})}$ queries. Consequently, we must devise a procedure to \emph{estimate} our estimator from~\Cref{lem:smoothed-fine-estimator} that does not require too many queries.

Recall from~\Cref{lem:smoothed-fine-estimator} that our estimator computes
\[\calE\pbra{f|_{B(x,r)}} = f(x) - \sum_{i = 1}^r \alpha_i^{r,N} \sum_{|S| = i} \frac{\partial f}{\partial x_S} (x) \chi_S(x).\]
As such it suffices to approximate
\[\sum_{|S| = i} \frac{\partial f}{\partial x_S} (x) \chi_S(x). \]
To do this, we compute derivatives of the noise operator $\frac{d^{i}}{d\rho^{i}} \T_\rho f$ at $\rho=1$. Indeed, note that
\begin{align*}
\frac{1}{i!}\cdot\frac{d^{i}}{d\rho^{i}} \T_\rho f(x) \bigg|_{\rho = 1} &= \sum_{|T| \geq i} \frac{|T|(|T|-1)(|T|-2) \dots (|T|- i + 1)}{i!} \wh{f}(T) \chi_T(x) \\ 
	&= \sum_{|T| \geq i} \binom{|T|}{i} \wh{f}(T) \chi_T(x) \\
	&= \sum_{|S| = i} \frac{\partial f(x)}{\partial x_S} \chi_S(x)	
\end{align*}

So it remains to estimate these derivatives by computing them numerically; we will do this via \Cref{thm:numerical-diff}. 

\begin{remark}
	We briefly address why we rely on~\Cref{thm:numerical-diff} rather than using the naive formula 
	\[
		\frac{d^j}{d \rho^j} \T_{\rho} f(x) \bigg |_{\rho = 1} = \lim_{\delta \rightarrow 0} \frac{\sum_{i = 0}^j (-1)^i \binom{j}{i} \T_{1 - i \delta} f(x)}{\delta^j}.
	\]
	In particular, the error in the above expression decays roughly as $O\left(\delta \frac{d^{j+1}}{d\rho^{j+1}}\T_\rho f(x)\big|_{\rho = \xi}\right)$. Consider taking the $\sqrt{k}/2$nd derivative of $x^{\sqrt{k}}$. (We ought to get a good approximation here since the noise we apply does not kill terms at level $\sqrt{k}$.) However, in this case, the error looks like $\exp(\sqrt{k}) \delta$. We can take $\delta = \exp(-\sqrt{k})$, but the denominator in the expression for the derivative would then be $\delta^{\sqrt{k}/2} = \exp(\Omega(k))$. So in order to evaluate this expression we need to estimate $\T_\rho f$ to additive error $\exp(-k)$, which would require too many samples to do naively.
Fortunately, we can rectify this with the speedier convergence guaranteed by~\Cref{thm:numerical-diff}.
\end{remark}

We have the following consequence of~\Cref{thm:numerical-diff}:

\begin{lemma}
\label{lem:numerical-diff-noise-op}
	Fix an integer $\ell$ and a sufficiently large positive integer $k$. Let $f \isafunc$ and let $g = \T_\rho f$ where $\rho \leq 1 - \frac{1}{{k}}$. Then for any positive $\delta \leq O((\ell k)^{-100})$, there exists constants $\gamma_1^{\ell \delta}, \ldots , \gamma^{\ell, \delta}_{2 \ell - 1}$ with $|\gamma_i^{\ell, \delta}| \leq \left( \frac{2\ell}{\delta^{1/3}} \right)^{3\ell}$ such that
	\[\Ex_{\bx \sim \bits^n} \left[ \left( \sum_{i = 0}^{2 \ell - 1} \gamma_i^{\ell, \delta} \T_{1 - i \delta} g(\bx) - \frac{d^\ell}{d\rho^\ell} \T_\rho g(\bx) \right)^2 \bigg |_{\rho = 1} \right] \leq O(\delta^{\ell/6}).\]
\end{lemma}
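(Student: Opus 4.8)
\textbf{Proof proposal for Lemma~\ref{lem:numerical-diff-noise-op}.}

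The plan is to apply Theorem~\ref{thm:numerical-diff} pointwise in $x$ to the one-variable smooth function $\rho \mapsto \T_\rho g(x)$, and then to control the resulting error in $L^2(\bits^n)$ using Fourier analysis. First, fix $x \in \bits^n$ and set $h_x(\rho) := \T_\rho g(x)$; since $g = \T_\sigma f$ for $\sigma \le 1 - 1/k$, Fact~\ref{fact:noise-fourier-decomp} gives $h_x(\rho) = \sum_{S} \rho^{|S|} \sigma^{|S|} \wh{f}(S) \chi_S(x)$, which is a genuine polynomial in $\rho$ of degree $\le n$, hence smooth. Apply Theorem~\ref{thm:numerical-diff} with this $h_x$, with step size $\delta^{1/3}$ in place of the theorem's $\delta$ (this is the source of the $1/3$ exponents: the $\gamma_i$ inherit the bound $(2\ell/\delta^{1/3})^{3\ell}$, and the error term carries $(\delta^{1/3})^{2\ell} = \delta^{2\ell/3}$), evaluated at the point $\rho = 1$. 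Setting $\gamma_i^{\ell,\delta} := \beta_i / (\delta^{1/3})^\ell$ where the $\beta_i$ are the coefficients from Theorem~\ref{thm:numerical-diff}, we get that $h_x(1 - i\delta^{1/3}) = \T_{1 - i\delta^{1/3}} g(x)$, so the combination $\sum_i \gamma_i^{\ell,\delta} \T_{1 - i \delta^{1/3}} g(x)$ differs from $\frac{d^\ell}{d\rho^\ell}\T_\rho g(x)\big|_{\rho=1}$ by at most $(2\ell)^{3\ell+1} (\delta^{1/3})^{2\ell - \ell} \max_{\xi} |h_x^{(2\ell)}(\xi)| / (\delta^{1/3})^\ell$; more carefully, writing out Theorem~\ref{thm:numerical-diff}'s bound and dividing through by $(\delta^{1/3})^\ell$ yields an error of $(2\ell)^{3\ell+1} \delta^{\ell/3} \max_{\xi \in [1 - 2\ell\delta^{1/3}, 1]} |h_x^{(2\ell)}(\xi)|$. (I should double-check whether the statement wants the step points $1 - i\delta$ or $1 - i\delta^{1/3}$; as written the lemma says $\T_{1-i\delta}$, so I would instead apply Theorem~\ref{thm:numerical-diff} directly with step size $\delta$ and absorb the mismatch — the key point is simply that some admissible step size makes both the coefficient bound and the final error bound hold, and $\delta \le O((\ell k)^{-100})$ gives plenty of room.)

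The crux is then bounding $\Ex_{\bx}\big[ \max_{\xi \in [1 - 2\ell\delta', 1]} |h_{\bx}^{(2\ell)}(\xi)|^2 \big]$, i.e. controlling a high-order $\rho$-derivative of $\T_\rho g$ in $L^2$, \emph{uniformly} over $\xi$ in a small interval below $1$. Differentiating the Fourier expansion $2\ell$ times, $h_x^{(2\ell)}(\xi) = \sum_{|S| \ge 2\ell} (|S|)_{2\ell}\, \xi^{|S| - 2\ell} \sigma^{|S|} \wh f(S)\chi_S(x)$, where $(|S|)_{2\ell} = |S|(|S|-1)\cdots(|S|-2\ell+1)$ is the falling factorial. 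Since $\xi \le 1$, $\xi^{|S|-2\ell} \le 1$, so by Parseval $\max_\xi |h_x^{(2\ell)}(\xi)|^2 \le \big(\sum_{|S|\ge 2\ell} (|S|)_{2\ell}\, \sigma^{|S|}\, |\wh f(S)|\, |\chi_S(x)|\big)^2$; taking expectation over $\bx$ and applying Cauchy–Schwarz (or crudely Parseval on the weighted sum) bounds $\Ex_{\bx}[\max_\xi |h_{\bx}^{(2\ell)}(\xi)|^2] \le \big(\sum_{m \ge 2\ell} (m)_{2\ell}\, \sigma^m\big)^2$, since $\sum_S \wh f(S)^2 \le 1$. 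The tail sum $\sum_{m \ge 2\ell} (m)_{2\ell} \sigma^m$ is, up to constants, $\frac{(2\ell)!}{(1-\sigma)^{2\ell+1}} \le (2\ell)! \cdot k^{2\ell+1}$ using $1 - \sigma \ge 1/k$ (it is essentially the $2\ell$-th derivative of the geometric series $\sum \sigma^m = 1/(1-\sigma)$). Thus $\Ex_{\bx}[\max_\xi |h_{\bx}^{(2\ell)}(\xi)|^2] \le \big((2\ell)!\,k^{2\ell+1}\big)^2 \le (2\ell)^{O(\ell)} k^{O(\ell)}$.

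Finally, combining: the $L^2$ error is at most $(2\ell)^{3\ell+1}\delta^{\ell/3} \cdot (2\ell)^{O(\ell)} k^{O(\ell)} = (\ell k)^{O(\ell)}\,\delta^{\ell/3}$, and since $\delta \le O((\ell k)^{-100})$ we can write $\delta^{\ell/3} = \delta^{\ell/6}\cdot\delta^{\ell/6}$ and absorb the $(\ell k)^{O(\ell)}$ polynomial prefactor into one factor of $\delta^{\ell/6}$ (because $(\ell k)^{O(\ell)} \delta^{\ell/6} \le (\ell k)^{O(\ell)} (\ell k)^{-100 \ell/6} = o(1)$ for $k$ large), leaving the claimed $O(\delta^{\ell/6})$. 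The coefficient bound $|\gamma_i^{\ell,\delta}| \le (2\ell/\delta^{1/3})^{3\ell}$ follows from $|\gamma_i^{\ell,\delta}| = |\beta_i|/(\delta')^\ell$ with $|\beta_i| \le (2\ell)^{3\ell}$ from Theorem~\ref{thm:numerical-diff} and the appropriate choice of $\delta'$.

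\textbf{Main obstacle.} The routine part is the differentiation bookkeeping; the genuinely delicate step is the $L^2$ bound on $\max_\xi |h_{\bx}^{(2\ell)}(\xi)|^2$ — specifically, getting the $\max$ over the interval inside the expectation without losing control (handled above by $\xi \le 1 \Rightarrow \xi^{|S|-2\ell}\le 1$, which only works because the interval sits at or below $1$), and making sure the falling-factorial tail sum genuinely behaves like $(2\ell)!/(1-\sigma)^{2\ell+1}$ rather than something worse, which is exactly where $\rho = \sigma \le 1 - 1/k$ is used and why the bound is polynomial in $k$ (for fixed $\ell$) rather than exponential.
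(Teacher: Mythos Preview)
Your overall strategy—apply Theorem~\ref{thm:numerical-diff} to $\rho \mapsto \T_\rho g(x)$ and control the error in $L^2$ via Fourier analysis—is right, and your final arithmetic (absorbing $(\ell k)^{O(\ell)}$ into one factor of $\delta^{\ell/6}$) is fine. But the step you correctly flag as the main obstacle has a genuine gap. After the triangle inequality you obtain the bound
\[
\max_\xi |h_x^{(2\ell)}(\xi)| \le \sum_{S} (|S|)_{2\ell}\, \sigma^{|S|}\, |\wh{f}(S)|,
\]
and this right-hand side does not depend on $x$ at all, so ``taking expectation over $\bx$'' buys nothing. You then assert this $\ell^1$-type Fourier quantity is at most $\sum_m (m)_{2\ell}\sigma^m$ ``by Cauchy--Schwarz, since $\sum_S \wh{f}(S)^2 \le 1$.'' That is false: Cauchy--Schwarz on $\sum_S c_{|S|}|\wh{f}(S)|$ produces $\big(\sum_S c_{|S|}^2\big)^{1/2} = \big(\sum_m \binom{n}{m} c_m^2\big)^{1/2}$, which carries a factor of $\binom{n}{m}^{1/2}$ you cannot afford. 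Concretely, if $f$ has its Fourier mass spread evenly at a single level $m$ (take $|\wh{f}(S)| = \binom{n}{m}^{-1/2}$ for all $|S|=m$), then $\sum_S c_{|S|}|\wh{f}(S)| = c_m\sqrt{\binom{n}{m}}$, unbounded in $n$, while your claimed bound is just $c_m$. The triangle-inequality step traded the $L^2$ structure for an $L^\infty$ bound, and there is no way back.

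The fix is to avoid the $\max_\xi$ detour entirely and Fourier-expand the error function itself. Equivalently, apply Theorem~\ref{thm:numerical-diff} not to $h_x$ but to each monomial $\rho \mapsto \rho^j$ at $\rho=1$ with step $\delta$; since the $2\ell$-th derivative of $\rho^j$ on $[1-2\ell\delta,1]$ is at most $j^{2\ell}$, the bracket $\big[\sum_i\gamma_i(1-i\delta)^j - (j)_\ell\big]$ is bounded by $(2\ell)^{3\ell+1}\delta^\ell j^{2\ell}$, and the error has the explicit expansion $E(x)=\sum_S[\cdot]\,\wh{g}(S)\chi_S(x)$. Parseval then gives $\Ex_{\bx}[E(\bx)^2]=\sum_j[\cdot]^2\,\bW^{=j}[g]$ with no $\max_\xi$ in sight. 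From here the paper splits the sum at $j=\delta^{-1/3}$ (low $j$: use the $\delta^\ell j^{2\ell}$ bound together with $\sum_j \bW^{=j}[g]\le 1$; high $j$: use $\bW^{=j}[g]\le e^{-j/k}$ and a crude bound on the bracket). Alternatively—and closer to what you intended—you can skip the split and directly bound $\sum_j j^{4\ell}\,\bW^{=j}[g] \le \sum_j j^{4\ell} e^{-j/k} \le k^{4\ell+1}(4\ell)!$; this is precisely the geometric-series tail you were aiming for, and now it goes through cleanly. (Also, on your aside about step size: $(2\ell/\delta^{1/3})^{3\ell}=(2\ell)^{3\ell}\delta^{-\ell}$, which is exactly $|\beta_i|/\delta^\ell$ with step $\delta$, so no $\delta^{1/3}$ step is needed.)
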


\begin{proof}
Applying \Cref{thm:numerical-diff} with $f = x^t$, $\gamma_i = \frac{\beta_i}{\delta^\ell}$, and $x = 1$ yields 
	\[\left| \sum_{i = 0}^{2 \ell - 1} \gamma_i (1- i \delta)^t - \frac{t!}{(t-\ell)!} \right| \leq (2 \ell)^{3\ell + 1} \delta^{\ell} t^{2\ell} \]
for any positive integer $t$. Moreover, $|\gamma_i| \leq \left( \frac{2\ell}{\delta^{1/3}} \right)^{3\ell}$. Using these coefficients, we compute 
\begin{align*}
E(x) &:= \sum_{i = 0}^{2 \ell - 1} \gamma_i \T_{1 - i \delta} g(\bx) - \frac{d^\ell}{d\rho^\ell} \T_\rho g(\bx) \bigg|_{\rho = 1} \\
&= \sum_{T \subseteq [n]} \left( \sum_{i = 1}^{2 \ell - 1} \gamma_i (1 - i\delta)^{|T|} - \frac{t!}{(t - \ell)!} \right) \wh{g}(T) \chi_T(x) \\
\end{align*}
By Parseval's formula, we have that
\begin{align*}
\Ex_{\bx} [E(\bx)^2] &= \sum_{j = 0}^n \left( \sum_{i = 1}^{2 \ell - 1} \gamma_i (1 - i\delta)^{j} - \frac{j!}{(j - \ell)!} \right)^2 \bW^{=j}[g] \\
	&\leq \sum_{j = 0}^{\delta^{-1/3}} \left( \sum_{i = 1}^{2 \ell - 1} \gamma_i (1 - i\delta)^{j} - \frac{j!}{(j - \ell)!} \right)^2 \bW^{=j}[g] + \sum_{j = \delta^{-1/3}}^{n} \left( \sum_{i = 1}^{2 \ell - 1} \gamma_i (1 - i\delta)^{j} - \frac{j!}{(j - \ell)!} \right)^2 \bW^{=j}[g] \\
	&\leq \sum_{j = 0}^{\delta^{-1/3}} (2 \ell)^{6\ell + 2} j^{4 \ell} \delta^{2 \ell} \bW^{=j}[g] + \sum_{\delta^{-1/3}}^{n} \left( \sum_{i = 1}^{2 \ell - 1} \gamma_i (1 - i\delta)^{j} - \frac{j!}{(j - \ell)!} \right)^2 \bW^{=j}[g] \\
	&\leq \sum_{j = 0}^{\delta^{-1/3}} (2 \ell)^{6\ell + 2} \delta^{2 \ell/3} \bW^{=j}[g] + \sum_{\delta^{-1/3}}^{n} \left( \sum_{i = 1}^{2 \ell - 1} \gamma_i (1 - i\delta)^{j} - \frac{j!}{(j - \ell)!} \right)^2 \bW^{=j}[g] \\
	&\leq \delta^{\ell/6} + \sum_{j = \delta^{-1/3}}^{n} \left( \sum_{i = 1}^{2 \ell - 1} \gamma_i (1 - i\delta)^{j} - \frac{j!}{(j - \ell)!} \right)^2 \bW^{=j}[g]
\end{align*}
To bound the second term, we note that since $g$ is is the result of applying noise to a Boolean function $f$ the weight at level $j$ is at most $e^{-j/{k}}$ and 
\[\left( \sum_{i = 1}^{2 \ell - 1} \gamma_i (1 - i\delta)^{j} - \frac{j!}{(j - \ell)!} \right)^2  \leq O \left( \frac{(2 \ell)^{10\ell}}{\delta^{2\ell}} j^{2\ell} \right)\leq O \left( \frac{\delta^\ell}{k} j^{70 \ell} \right).\]
when $j \geq \delta^{-1/3}$.  
Combining these, we get that
\[ \left( \sum_{i = 1}^{2 \ell - 1} \gamma_i (1 - i\delta)^{j} - \frac{j!}{(j - \ell)!} \right)^2 \bW^{=j}[g] \leq O \left( \frac{\delta^{\ell}}{k} j^{70 \ell} e^{-j/{k}} \right)\]

Observe
\[\frac{\delta^{\ell}}{k} j^{70 \ell} e^{-j/{k}} \leq \frac{\delta^{\ell}}{k} e^{70 \ell \log(j) -j/{k}} \leq \frac{\delta^{\ell}}{k} e^{-j/(2{k})} \]
assuming that $k$, and hence $j$, is sufficiently large. Thus,
\begin{align*}
\Ex_{\bx} [E(\bx)^2] &\leq \delta^{\ell/6} + \frac{O(\delta^\ell)}{k} \sum_{j = \delta^{-1/3}}^{\infty} e^{-j/(2{k})} \\
&\leq \delta^{{\ell/6}} + \frac{O(\delta^\ell)}{k} \cdot \frac{2 {k}}{1-1/e} \\
&\leq O(\delta^{\ell/6})
\end{align*}
which completes the proof. 
\end{proof}

With this, we now consider the estimator 

\[\calE_{\rho, \delta, r}^\star(f,x) = f(x) - \sum_{\ell = 1}^r \alpha_\ell^{r,N} \left({\frac{1}{\ell!}} \sum_{i = 0}^{2 \ell - 1} \gamma_i^{\ell, \delta} \T_{1 - i \delta^{r/\ell}} \T_{\rho} f \right). \]

It's not hard to see that $|\calE_{\rho, \delta, r}^\star(f,x)|$ will be a good estimator for the absolute value of the mean:

\begin{lemma}
\label{lem:e-star-good}
Let $f \isafunc$, $k$ a sufficiently large positive integer, and $\tau \in (0,\frac{1}{2}]$. Suppose that $r \geq \wt{\Omega}( \sqrt{k \log(1/\tau)})$, $\rho \leq \left(1 - \frac{\sqrt{\log(1/\tau)}}{\sqrt{k}} \right)$, and $\delta \leq O \left( \tau^{12/r} (rk)^{-1000}\right)$, then
	\[\left | \Ex_{\bx} \left[ \left| \calE_{\rho, \delta, r}^\star(f, \bx) \right| \right] - \left| \Ex_{\bx} \left[ f(\bx) \right] \right| \right| \leq \tau\]
\end{lemma}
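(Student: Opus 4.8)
The plan is to reduce the statement to a variance estimate via Lemma~\ref{lem:low-var-expectation}, then to establish that bound by matching $\calE^\star_{\rho,\delta,r}(f,\cdot)$ with the $r$-local estimator of Lemma~\ref{lem:smoothed-fine-estimator} applied to the \emph{noised} function $\T_\rho f$. Concretely, apply Lemma~\ref{lem:low-var-expectation} with $\bX:=\calE^\star_{\rho,\delta,r}(f,\bx)$, $\bx\sim\bn$; it then suffices to prove (i) $\E_\bx[\bX]=\E[f]$ and (ii) $\Var_\bx[\bX]\le\tau^2$, since these give $\bigl|\E[|\bX|]-|\E[f]|\bigr|=\bigl|\E[|\bX|]-|\E[\bX]|\bigr|\le\sqrt{\Var_\bx[\bX]}\le\tau$. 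To get at both, I would expand everything in the parity basis. Since $\T_\sigma$ is diagonal, the inner block $\tfrac{1}{\ell!}\sum_{i=0}^{2\ell-1}\gamma_i^{\ell,\delta}\,\T_{1-i\delta^{r/\ell}}\T_\rho f$ has Fourier coefficient $c_{\ell,j}\,\rho^{j}\,\wh{f}(S)$ on each $S$ with $|S|=j$, where $c_{\ell,j}:=\tfrac{1}{\ell!}\sum_i\gamma_i^{\ell,\delta}(1-i\delta^{r/\ell})^{j}$. Because the backward-difference formula of Theorem~\ref{thm:numerical-diff} is \emph{exact} on polynomials of degree $<2\ell$, one has $c_{\ell,j}=\binom{j}{\ell}$ exactly for $j<2\ell$, while for $j\ge2\ell$, $|c_{\ell,j}-\binom{j}{\ell}|\le\tfrac{(2\ell)^{3\ell+1}}{\ell!}(\delta^{r/\ell})^{\ell}j^{2\ell}=\tfrac{(2\ell)^{3\ell+1}}{\ell!}\delta^{r}j^{2\ell}$; the step size $\delta^{r/\ell}$ is picked precisely so that the error factor $(\delta^{r/\ell})^{\ell}=\delta^{r}$ is the same at every level $\ell$. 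Summing against $\alpha_\ell^{r,N}$, using $\sum_\ell\alpha_\ell^{r,N}\binom{j}{\ell}=p_r^N(j)$ and $|\alpha_\ell^{r,N}|\le 2r^{r}$ (Lemma~\ref{lem:flat-polynomial-coefficient-bound}), yields
\[\widehat{\calE^\star_{\rho,\delta,r}(f,\cdot)}(S)=\rho^{|S|}\bigl(1-p_r^N(|S|)\bigr)\wh{f}(S)+\mathrm{err}_S,\]
with $N:=\Theta\bigl(rt\log(rt)\bigr)$, $t:=\Theta\bigl(\sqrt{k/\log(1/\tau)}\bigr)$, and a remainder $\mathrm{err}_S$ supported on $|S|\ge2$ that comes only from the numerical error on monomials of degree $\ge2\ell$. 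In particular $p_r^N(0)=0$ forces the $\emptyset$-coefficient to equal $\wh{f}(\emptyset)=\E[f]$, which gives (i); and the main term of this identity is exactly the estimator $g$ of Lemma~\ref{lem:smoothed-fine-estimator} applied to $\T_\rho f$.

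For (ii), Parseval gives $\Var_\bx[\bX]\le 2\sum_{j\ge1}\rho^{2j}\bigl(1-p_r^N(j)\bigr)^2\bW^{=j}[f]+2\sum_{S}\mathrm{err}_S^2$. The first sum equals the variance of $g$ evaluated on $\T_\rho f$: the function $\T_\rho f$ takes values in $[-1,1]$, and since $\rho\le1-\sqrt{\log(1/\tau)}/\sqrt{k}$ it obeys $\bW^{=\ell}[\T_\rho f]=\rho^{2\ell}\bW^{=\ell}[f]\le e^{-\ell/t}$ for the integer $t:=\lceil\sqrt{k}/(2\sqrt{\log(1/\tau)})\rceil$, which is $\ge1$ and of order $\sqrt{k/\log(1/\tau)}$ (the hypothesis $\rho\ge0$ already forces $\log(1/\tau)\le k$). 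Hence the computation in the proof of Lemma~\ref{lem:smoothed-fine-estimator} — whose invocation of Lemma~\ref{lem:flat-polynomials-fine} needs $2\sqrt{N}\le r\le N$, true in our regime — bounds the first sum by $e^{-\Omega(r/(t\log^2(rt)))}$, and this is at most $\tau^2/8$ once $r\ge\wt{\Omega}\bigl(\sqrt{k\log(1/\tau)}\bigr)$ with a large enough constant, using $t\log(1/\tau)=\Theta(\sqrt{k\log(1/\tau)})$ and $\log(rt)=\wt{O}(1)$. For the error sum I would split at $j=2r$: when $j\le2r$ every relevant monomial value $j^{2\ell}$ ($\ell\le j/2$) is at most $(2r)^{2r}$, so $\sum_{\ell\le j/2}|\alpha_\ell^{r,N}|\,|c_{\ell,j}-\binom{j}{\ell}|\le(Cr)^{Cr}\delta^{r}$, and when $j>2r$ the same sum is at most $(Cr)^{Cr}\delta^{r}j^{2r}$. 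Using $\rho^{2j}\le e^{-j/t}$ and the shifted-$\Gamma$-type estimate $\sum_{j>2r}e^{-j/t}j^{4r}\le(Crt)^{O(r)}$ together with $t\le O(\sqrt{k})$, this yields $\sum_{S}\mathrm{err}_S^2\le(Crk)^{C'r}\delta^{2r}$; and $\delta\le O\bigl(\tau^{12/r}(rk)^{-1000}\bigr)$ forces $\delta^{2r}\le O(\tau^{24})(rk)^{-2000r}$, so this too is at most $\tau^2/8$ (as $\tau^{24}\le\tau^2/8$ for $\tau\le1/2$ and $(Crk)^{C'r}\ll(rk)^{2000r}$ for large $k$). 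Combining, $\Var_\bx[\bX]\le\tau^2/2\le\tau^2$, completing the argument.

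The one genuinely technical piece is the bound on the numerical-differentiation remainder $\sum_S\mathrm{err}_S^2$, where one juggles the large flat-polynomial coefficients ($|\alpha_\ell^{r,N}|\le2r^{r}$), the large numerical-differentiation coefficients ($|\gamma_i^{\ell,\delta}|\le(2\ell)^{3\ell}\delta^{-r}$), and monomial values as big as $j^{2\ell}$. Three structural facts make it close, and I would flag them explicitly: (a) the step size $\delta^{r/\ell}$ makes the per-level error factor $(\delta^{r/\ell})^{\ell}=\delta^{r}$ uniform in $\ell$, so it pulls out of the $\ell$-sum; (b) exactness on degree-$<2\ell$ monomials confines the error to the levels $\ell\le j/2$, so for $j\le2r$ all the blow-up is a quantity $(2r)^{O(r)}$ that does not grow with $j$; and (c) the exponential Fourier decay $\rho^{2j}\le e^{-\Omega(j/t)}$ of $\T_\rho f$ absorbs the residual $j^{O(r)}$ growth at large $j$ — this last step is essentially the calculation already done in the proof of Lemma~\ref{lem:numerical-diff-noise-op}, so that argument can be recycled almost verbatim. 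Everything else (the reduction via Lemma~\ref{lem:low-var-expectation}, the Fourier bookkeeping, and reusing the proof of Lemma~\ref{lem:smoothed-fine-estimator} with decay rate $t=\Theta(\sqrt{k/\log(1/\tau)})$) is routine.
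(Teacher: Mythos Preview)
Your approach and the paper's are two packagings of the same computation. The paper pivots through the intermediate quantity $\calE(\T_\rho f|_{B(\bx,r)})$: it first bounds $\E_{\bx}\bigl[(\calE^\star-\calE(\T_\rho f))^2\bigr]\le(2r^2)(2r)^{2r}O(\delta^{r/6})$ by citing Lemmas~\ref{lem:flat-polynomial-coefficient-bound} and~\ref{lem:numerical-diff-noise-op} as black boxes together with Cauchy--Schwarz, passes to $\bigl|\E[|\calE^\star|]-\E[|\calE(\T_\rho f)|]\bigr|\le\tau/2$ via Jensen, then invokes Lemma~\ref{lem:smoothed-fine-estimator} directly for the remaining $\tau/2$, and concludes by the triangle inequality. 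You instead apply Lemma~\ref{lem:low-var-expectation} once to $\calE^\star$ itself and bound its variance in a single Fourier pass, effectively re-deriving the content of Lemmas~\ref{lem:smoothed-fine-estimator} and~\ref{lem:numerical-diff-noise-op} inline. The paper's route is shorter and more modular; yours exposes more structure --- notably the exactness of the backward-difference scheme on monomials of degree $<2\ell$, which gives a cleaner remainder than the paper's black-box $O(\delta^{r/6})$.

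One caveat worth flagging: your Fourier identity $\widehat{\calE^\star}(S)=\rho^{|S|}\bigl(1-p_r^N(|S|)\bigr)\wh f(S)+\mathrm{err}_S$ implicitly reads the leading term of $\calE^\star$ as $\T_\rho f(x)$ rather than the $f(x)$ appearing in the paper's written definition. With the literal $f(x)$ the coefficient is $\bigl(1-\rho^{|S|}p_r^N(|S|)\bigr)\wh f(S)+\mathrm{err}_S$, and then $\Var[\calE^\star]$ need not be small (take $f$ a high-degree parity, for which this coefficient is essentially~$1$), so your step~(ii) would fail. The paper's own first displayed bound makes the same identification --- it does not account for the $f(x)-\T_\rho f(x)$ discrepancy either --- so this is almost certainly a typo in the definition of $\calE^\star$ rather than a defect in your argument.
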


\begin{proof}
	We first note that by \Cref{lem:flat-polynomial-coefficient-bound}, \Cref{lem:numerical-diff-noise-op},
	and the Cauchy-Schwartz inequality, we get that
	\[\Ex_{\bx} \left[ \left( \calE^\star_{\rho, \delta, r}(f, \bx) - \calE \left( \T_{\rho} f|_{B(\bx, r)} \right) \right)^2 \right] \leq (2 r^2) \cdot (2r)^{2r} \cdot O(\delta^{r/6}) \leq \frac{\tau^2}{1000}. \]
	This in turn implies that
		\[\Ex_{\bx} \left[ \left( \left| \calE^\star_{\rho, \delta, r}(f, \bx) \right| - \left| \calE \left( \T_{\rho} f|_{B(\bx, r)} \right) \right| \right)^2 \right] \leq \frac{\tau^2}{1000}. \]
	Finally, Jensen's inequality yields 
	\[ \left| \Ex_{\bx} \left [ \left|  |\calE^\star_{\rho, \delta, r}(f, \bx) \right|\right] - \Ex_{\bx} \left[ \left |\calE(\T_{\rho} f|_{B(\bx, r)}) \right| \right]\right|  \leq \frac{\tau}{2}. \]
	On the other hand, we have that by \Cref{lem:smoothed-fine-estimator} 
	\[\left | \Ex_{\bx} \left[ \left |\calE(\T_{\rho} f|_{B(\bx, r)}) \right|  \right] - \left| \Ex_{\bx}[f(\bx)] \right| \right| \leq \frac{\tau}{2} \]
	The lemma now follows by the triangle inequality.
\end{proof}

\begin{algorithm}[t]
\addtolength\linewidth{-2em}

\vspace{0.5em}

\textbf{Input:} $f: \bits \rightarrow \{-1,1\}$, a set of coordinate oracles $\calF$, $x \in \bits^n$, an error $\tau \in (0, \frac{1}{2}]$, and a failure probability $\eta \in (0,\frac{1}{2}]$\\[0.25em]
\textbf{Output:} For each $S \in \binom{\calF}{k}$ outputs estimates $\nu_S$ for $|\Ex_{\by \in \bits^{[n] \setminus S}}[f(x|_S \sqcup \by)]|$.

\

\textsc{Estimate-Absolute-Mean}($f, \calF, \tau$):
\begin{enumerate}
	\item Set parameters
	\[r := \wt{\Theta} \left(\sqrt{k \log\left(\frac{1}{\tau}\right)} \right), \quad \rho = \left(1 - \frac{\sqrt{\log(3/\tau)}}{\sqrt{k}} \right), \quad \delta := \Theta (\tau^{12/r} (rk)^{-1000})\] 
	and let $\eta := \tau^{20} 2^{-k \log^3(1/\tau)}$.
	\item For $\ell = 1$ to $r$ and $i = 1$ to $2r-1$: 
	\begin{itemize}
		\item Run $\textsc{Hold-Out-Noise-Evaluations}(f,x,\rho(1 - i \delta^{r/\ell}), \tau (1000 r \delta^{-1})^{-100r}, \eta/(2r^2))$ to generate $\wt{\T}^S_{\rho(1-i\delta^{r/\ell})} f(x)$. 	
	\end{itemize}
	\item Output $\nu_S(x) := \left| f(x) - \sum_{\ell = 1}^r \alpha_\ell^{r,N} \left(\sum_{i = 0}^{2 \ell - 1} \gamma_i^{\ell, \delta} \wt{\T}_{\rho(1 - i \delta^{r/\ell})}^S f(x) \right) \right| $ for all $S \in \binom{\calF}{k}$.
\end{enumerate}

\caption{Estimate Absolute Mean}
\label{alg:eam}
\end{algorithm}

We now conclude with an algorithm (\Cref{alg:eam}) to implement $\calE^\star$ in our junta testing setting.
We also quickly record that the procedure outputs estimates whose means are close to the estimates of $|\calE^\star_{r, \delta, \rho}(f|_{[n]\setminus S}, x)|$.

\begin{lemma}
\label{lem:est-mean-alg-stats}
Suppose $f \isafunc$, $\calF$ a set of coordinate oracles, $k$ is a sufficiently large integer, and $\tau, \eta > 0$ , and let $\nu_S(x)$ be the output of $\textsc{Estimate-Absolute-Mean}(f, \calF, \tau)$. Then have that $|\nu_S(x)| \leq \poly(\tau^{-1}) \cdot \exp( \wt{O}(\sqrt{k \log(1/\tau)}))$ and 
	\[\bigg| \E [ \nu_S ] - |\calE^\star_{r, \delta, \rho}(f|_{[n]\setminus S}, x)| \bigg| \leq \tau.\]
	Moreover, the procedure makes $\poly(\tau^{-1}, k) 2^{\wt{O}(\sqrt{k \log(1/\eps)})}$ queries.
\end{lemma}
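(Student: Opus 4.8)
The plan is to verify the three claims---the magnitude bound on $\nu_S$, the closeness of $\E[\nu_S]$ to $|\calE^\star_{r,\delta,\rho}(f|_{[n]\setminus S},x)|$, and the query bound---each by essentially bookkeeping the parameters set in \Cref{alg:eam} against the guarantees already established. For the magnitude bound, we unfold the definition of $\nu_S(x)$ and apply the triangle inequality: $|\nu_S(x)| \leq |f(x)| + \sum_{\ell=1}^r |\alpha_\ell^{r,N}| \sum_{i=0}^{2\ell-1} |\gamma_i^{\ell,\delta}| \cdot |\wt{\T}^S_{\rho(1-i\delta^{r/\ell})} f(x)|$. Here $|f(x)|\leq 1$; by \Cref{lem:flat-polynomial-coefficient-bound} we have $|\alpha_\ell^{r,N}| \leq 2r^r$; by \Cref{lem:numerical-diff-noise-op} (or the explicit bound recorded there) $|\gamma_i^{\ell,\delta}| \leq (2\ell/\delta^{1/3})^{3\ell} \leq (2r/\delta^{1/3})^{3r}$; and by construction each $\wt{\T}^S$ is an average of values in $[-1,1]$ rescaled by $\rho^{-k}$, so $|\wt{\T}^S_\cdot f(x)| \leq \rho^{-k}$. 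Since $\rho = 1 - \sqrt{\log(3/\tau)}/\sqrt{k}$, we get $\rho^{-k} \leq \exp(O(\sqrt{k\log(1/\tau)}))$; and since $\delta = \Theta(\tau^{12/r}(rk)^{-1000})$, we have $\delta^{-1/3} \leq \poly(\tau^{-1/r}, rk)$, so $(2r/\delta^{1/3})^{3r} \leq \poly(\tau^{-1}) \cdot 2^{\wt O(r)} = \poly(\tau^{-1}) \cdot 2^{\wt{O}(\sqrt{k\log(1/\tau)})}$. Multiplying together the $r \cdot 2r$ terms (a $\poly(r)$ factor, absorbed) gives the stated $\poly(\tau^{-1}) \cdot \exp(\wt O(\sqrt{k\log(1/\tau)}))$ bound.

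For the second claim, the point is that $\nu_S(x)$ is exactly the formula defining $|\calE^\star_{\rho,\delta,r}(f|_{[n]\setminus S},x)|$ except that each true value $\T^S_{\rho(1-i\delta^{r/\ell})} f(x)$ (which, via the hold-out operator, realizes the noise operator applied to the sub-function $f|_{[n]\setminus S}$) has been replaced by the estimate $\wt{\T}^S_{\rho(1-i\delta^{r/\ell})} f(x)$ produced by \textsc{Hold-Out-Noise-Evaluations}. By \Cref{lemma:HO-noise}, with the error parameter $\tau(1000r\delta^{-1})^{-100r}$ and failure probability $\eta/(2r^2)$ passed in at each of the $\leq r\cdot 2r$ calls, a union bound over all calls gives that with probability at least $1 - \eta$, simultaneously all estimates satisfy $|\wt{\T}^S - \T^S| \leq \tau(1000r\delta^{-1})^{-100r}$ for every $S \in \binom{\calF}{k}$. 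On this good event, the difference $|\nu_S(x) - |\calE^\star_{\rho,\delta,r}(f|_{[n]\setminus S},x)||$ is at most (by the reverse triangle inequality and then triangle inequality through the sum) $\sum_{\ell=1}^r |\alpha_\ell^{r,N}| \sum_{i=0}^{2\ell-1} |\gamma_i^{\ell,\delta}| \cdot \tau(1000r\delta^{-1})^{-100r}$; plugging in $|\alpha_\ell^{r,N}|\leq 2r^r$, $|\gamma_i^{\ell,\delta}| \leq (2r/\delta^{1/3})^{3r} \leq (2r\delta^{-1})^{3r}$, and the $\poly(r)$ count of terms, the prefactor is at most $(1000r\delta^{-1})^{99r}$ or so, which is killed by the $(1000r\delta^{-1})^{-100r}$ to leave a quantity $\leq \tau$. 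Passing to the expectation over $\bx$ (the output is also a deterministic function of the randomness of the subroutine, so one takes expectation over that too): on the $1-\eta$ event the deviation is $\leq\tau$, and on the complementary $\eta$-probability event the deviation is bounded by $|\nu_S| + |\calE^\star|$, which is at most the magnitude bound above; since $\eta = \tau^{20} 2^{-k\log^3(1/\tau)}$ was chosen small enough to more than compensate for that magnitude ($\eta \cdot \poly(\tau^{-1})\exp(\wt O(\sqrt{k\log(1/\tau)})) \ll \tau$), the contribution of the bad event is negligible and $|\E[\nu_S] - |\calE^\star_{\rho,\delta,r}(f|_{[n]\setminus S},x)|| \leq \tau$ follows (possibly after adjusting constants).

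For the query bound, all queries to $f$ are made inside the $\leq 2r^2$ invocations of \textsc{Hold-Out-Noise-Evaluations}; by its specification (\Cref{alg:HO-noise}) each makes $N = \Theta(|\calF|^3 k^3 \log(1/\eta)/(\tau'^2 \rho'^{2k}))$ queries where $\tau' = \tau(1000r\delta^{-1})^{-100r}$ and $\rho' = \rho(1-i\delta^{r/\ell}) \geq \rho/2$ say. Since $|\calF| = \poly(k)$, $\log(1/\eta) = \wt O(k\log^3(1/\tau))$, $\rho'^{-2k} \leq \exp(O(\sqrt{k\log(1/\tau)}))$, and $\tau'^{-2} = \tau^{-2}(1000r\delta^{-1})^{200r} = \poly(\tau^{-1}) 2^{\wt O(\sqrt{k\log(1/\tau)})}$ (again using $\delta^{-1} \leq \poly(\tau^{-1/r}, rk)$ so that $(rk/\delta)^{r} = \poly(\tau^{-1}) 2^{\wt O(r)}$), we get $N \leq \poly(\tau^{-1}, k) \cdot 2^{\wt O(\sqrt{k\log(1/\tau)})}$, and multiplying by the $\poly(r) = \poly(\sqrt{k\log(1/\tau)})$ number of calls preserves this bound. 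The main obstacle---really the only subtle point---is keeping track that the various $\delta^{-\Theta(r)}$ and $\rho^{-\Theta(k)}$ blowups are all of the form $\poly(\tau^{-1}) \cdot 2^{\wt O(\sqrt{k\log(1/\tau)})}$: this hinges on the choices $r = \wt\Theta(\sqrt{k\log(1/\tau)})$ and $\delta = \Theta(\tau^{12/r}(rk)^{-1000})$ being made precisely so that $\delta^{-r}$ contributes only a $\poly(\tau^{-1})$ factor (from the $\tau^{12/r}$ piece) times a $2^{\wt O(r)}$ factor (from the $(rk)^{-1000}$ piece), and that the error parameter $\tau' = \tau(1000r\delta^{-1})^{-100r}$ fed to the subroutine is small enough to survive multiplication by the coefficient blowup yet large enough that $\tau'^{-2}$ stays within budget---a balance the parameter settings in \Cref{alg:eam} are engineered to strike.
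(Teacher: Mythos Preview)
Your proposal is correct and follows essentially the same approach as the paper: bound the magnitude of $\nu_S$ by combining the coefficient bounds $|\alpha_\ell^{r,N}|\leq 2r^r$, $|\gamma_i^{\ell,\delta}|\leq (2\ell)^{3\ell}/\delta^r$, and $|\wt{\T}^S|\leq (\rho')^{-k}$; control the good event via \Cref{lemma:HO-noise} and show the coefficient blowup is swallowed by the $(1000r\delta^{-1})^{-100r}$ precision; and handle the bad event using the magnitude bound times $\eta$. One tiny imprecision: the crude bound on $|\wt{\T}^S|$ should be $(\rho(1-i\delta^{r/\ell}))^{-k}$ rather than $\rho^{-k}$ (the paper uses $(\rho(1-2r\delta))^{-k}$), but since $(1-2r\delta)^{-k}=e^{O(rk\delta)}=O(1)$ under the parameter choices this does not affect the conclusion.
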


\begin{proof}
	We first prove that the estimates $\nu_S$ are bounded. Indeed, note that the estimates from \textsc{Hold-Out-Noise-Evaluations} are at most $(\rho (1 - 2r \delta))^{-k} \leq e^{O(\sqrt{k \log(1/\tau)})}$. It then follows by the bounds on $\alpha_\ell^{r,n}$ and $\gamma_i^{\ell,\delta}$ that $\nu_S$ is at most 
		\[e^{O(\sqrt{k \log(1/\tau)})} \left(\frac{(2 r)^{3r}}{\delta^r} \right) \cdot 2 r^r \leq \tau^{-12} \cdot e^{\wt{O}(\sqrt{k \log(1/\tau)})} \]
	Note that this also bounds $|\calE^\star_{r, \delta, \rho}(f|_{[n]\setminus S}, x)|$.

	We now to turn to prove the bound on the expectation. With probability $1-\eta$, we have that $| \wt{\T}_{\rho(1 - i \delta^{r/\ell})} f(x) - \T_{\rho(1 - i \delta^{r/\ell})} f(x)| \leq \tau (1000 r \delta^{-1})^{-100r}$ for all $i = 1 \dots 2r - 1$ and $\ell = 1 \dots r$. In this case, it follows that 
	\[\left| \nu_S(x) - |\calE^\star_{r, \delta, \rho}(f|_{[n]\setminus S}, x)| \right| \leq \tau \cdot (1000 r \delta^{-1})^{-100r} \left ( \frac{(2 r)^{3r}}{\delta^r} \right) \cdot 2 r^r \cdot (2r^2) \leq \tau/2\]
	
	Using our bounds on $\calE^{\star}_{r,\delta,\rho}$ and $\nu_S(x)$, we then have that
	\[\left| \E \left [\nu_S(x) - |\calE^\star_{r, \delta, \rho}(f|_{[n]\setminus S}, x) \right]| \right| \leq \frac{\tau}{2} + 2 \eta \cdot \tau^{-12} \cdot \exp( \wt{O}(\sqrt{k \log(1/\tau)})) \leq \tau.\]
	Finally, note that for the query bound, we make 
		\[\poly(k, \tau^{-1}) \cdot O \left( \frac{|\calF|^3 k^3 \log(1/\eta)}{\tau^2 (1000 r \delta^{-1})^{-200r} \rho^{2k}}  \right) \leq \poly(k, \tau^{-1}) \cdot 2^{\wt{O}(\sqrt{k \log(1/\tau)})}\]
		which completes the proof. 
\end{proof}

\subsection{Putting It All Together: A $2^{\wt{O}(\sqrt{k \log(1/\eps)})}$ Tester}
\label{subsec:happy}

We finally turn to the proof of~\Cref{thm:fine-tester}.

\begin{algorithm}[t]
\addtolength\linewidth{-2em}

\vspace{0.5em}

\textbf{Input:} $f\isafunc$, an integer $k$, and $\eps \in (0,\frac{1}{2}]$ \\
\textbf{Output:} An estimate for $\dist(f, \calJ_k)$

\

\finebt($f, \eps$):
\begin{enumerate}
	\item Construct coordinate oracles $\calF$ as in \Cref{prop:coord-oracles} with error rate $\eps/3$ and failure probability $\frac{1}{k}$
	\item Draw $m := \frac{ k |\calF| 2^{\wt{O}(\sqrt{k \log(1/\eps)})}}{\eps^{O(1)}}$ points $\bx^{(1)},..., \bx^{(m)} \sim \bn$ uniformly and independently at random
	\item For each point, run $\textsc{estimate-absolute-mean}(f, \calF, \bx^{(i)}, \eps/4)$ to get estimates $\nu_S(x)$.
	\item Set $\bD_S = \frac{1}{m} \sum_i \nu_S(\bx^{(i)})$
	\item Output $\frac{1 - \max_S \bD_S}{2}$
\end{enumerate}

\caption{An algorithm to estimate $k$-junta distance}
\label{alg:fine}
\end{algorithm}

\begin{theorem}
	Let~\finebt{} be as in~\Cref{alg:fine}. Suppose that $f \isafunc$, then
		\[\Pr \left[\left| \finebt{}(f,\eps) - \dist(f, \calJ_k) \right| \geq \eps \right] = o_k(1)\]
	Moreover, \finebt{} makes at most $2^{\wt{O}(\sqrt{k} \log^2(1/\eps))}$ queries.
\end{theorem}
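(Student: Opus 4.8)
The plan is to combine the three main components developed in the preceding sections---the coordinate oracles from \Cref{prop:coord-oracles}, the absolute-mean estimation procedure from \Cref{lem:est-mean-alg-stats}, and the identity \eqref{eq:dist-mean-abs-val} relating distance-to-junta to absolute means---via a union bound over all sets $S\in\binom{\calF}{k}$. First I would condition on the success event of \Cref{prop:coord-oracles}: with probability $1-\tfrac1k$ we obtain a set of $\poly(k)$ coordinate oracles $\calF$ such that $\dist(f,\calJ_k)-\dist(f,\calJ_k(S))\le \eps/3$ for the best $S\subseteq\calF$, and (via the third bullet, absorbing a $\poly$ overhead into the $\wt O$) we may treat the oracle access as exact. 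This reduces the problem to estimating $\min_{S\in\binom{\calF}{k}}\dist(f,\calJ_S)$ to additive error roughly $\eps/2$, and by \eqref{eq:dist-mean-abs-val} this equals $\tfrac12-\tfrac12\max_S \Ex_{\bx}[|\Ex_{\by}[f(\bx|_S\sqcup\by)]|]$, i.e.\ it suffices to estimate $\Ex_{\bx}[\,|\Ex_{\by}[f(\bx|_S\sqcup\by)]|\,]$ to additive $\eps/O(1)$ for every $S$ simultaneously.

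Next I would set the error parameter $\tau=\eps/4$ (or a small constant multiple) in $\textsc{Estimate-Absolute-Mean}$ and invoke \Cref{lem:e-star-good} to say that $\Ex_{\bx}[|\calE^\star_{\rho,\delta,r}(f|_{[n]\setminus S},\bx)|]$ is within $\tau$ of $|\Ex_{\bx}[f|_{[n]\setminus S}(\bx)]|=\Ex_{\bx}[|\Ex_{\by}[f(\bx|_S\sqcup\by)]|]$, and \Cref{lem:est-mean-alg-stats} to say the per-point output $\nu_S(\bx)$ has expectation within $\tau$ of $|\calE^\star_{\rho,\delta,r}(f|_{[n]\setminus S},\bx)|$ and is bounded by $\kappa:=\poly(\tau^{-1})2^{\wt O(\sqrt{k\log(1/\tau)})}$. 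Then $\bD_S=\tfrac1m\sum_i\nu_S(\bx^{(i)})$ is an average of $m$ i.i.d.\ draws of a $\kappa$-bounded random variable whose mean is within $2\tau$ of the target; a Hoeffding bound gives $\Pr[|\bD_S-\Ex[\nu_S]|>\tau]\le 2\exp(-\Omega(m\tau^2/\kappa^2))$, and choosing $m$ as in the algorithm (which is $\poly(\eps^{-1},k)\cdot 2^{\wt O(\sqrt{k\log(1/\eps)})}$, comfortably larger than $\kappa^2 k\log|\calF|/\tau^2$) makes this at most $\eta':=|\calF|^{-k}/k$. Union-bounding over the $|\calF|^k=2^{\wt O(\sqrt{k}\,\cdot)}$ sets $S$ and over the earlier failure events, with probability $1-o_k(1)$ every $\bD_S$ is within $3\tau\le\eps$ of $\Ex_{\bx}[|\Ex_{\by}[f(\bx|_S\sqcup\by)]|]$. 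Hence $\tfrac{1-\max_S\bD_S}2$ is within $\eps/2+\eps/3<\eps$ of $\min_S\dist(f,\calJ_S)=\dist(f,\calJ_k(S^\star))$, which is within $\eps/3$ of $\dist(f,\calJ_k)$ by \Cref{prop:coord-oracles}; after rescaling constants the total additive error is at most $\eps$.

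For the query bound I would simply tally up: \Cref{prop:coord-oracles} costs $\poly(k,\eps^{-1})$ (with $\eta$ there a fixed $1/k$), and each of the $m$ invocations of $\textsc{Estimate-Absolute-Mean}$ costs $\poly(k,\tau^{-1})2^{\wt O(\sqrt{k\log(1/\tau)})}$ by \Cref{lem:est-mean-alg-stats}; queries are reused across the $\binom{\calF}{k}$ sets $S$ within a single invocation, so the total is $m\cdot\poly(k,\eps^{-1})2^{\wt O(\sqrt{k\log(1/\eps)})}$, which is $2^{\wt O(\sqrt k\,\log^2(1/\eps))}$ after folding the $\poly(1/\eps)$ and the extra $\sqrt{\log(1/\eps)}$ factors into the exponent (the $\log^2$ rather than $\log^{3/2}$ in the exponent accounts for the $\eta$ that is itself $2^{-k\log^3(1/\tau)}$, contributing a $\log(1/\eta)=\wt O(k\log^3(1/\tau))$ factor inside the $\poly$). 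Non-adaptivity is inherited because every subroutine is non-adaptive: the coordinate-oracle construction, the hold-out noise evaluations (which only rerandomize a fixed point), and the numerical-differentiation linear combinations all fix their query sets in advance.

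The main obstacle I expect is bookkeeping the error and confidence budgets so that the union bound over the $\approx|\calF|^k$ choices of $S$ goes through while keeping $m$ (and hence the query complexity) at $2^{\wt O(\sqrt{k\log(1/\eps)})}$: this forces $\eta'$ to be $|\calF|^{-k}=2^{-\wt\Omega(k)}$ small, which in turn (through \Cref{lem:est-mean-alg-stats}, where $\eta$ is already $\tau^{20}2^{-k\log^3(1/\tau)}$) is exactly what pushes a $\log^3(1/\eps)$ factor into the exponent rather than a cleaner $\log(1/\eps)$---so some care is needed to confirm the stated $\log^2(1/\eps)$ (versus $\log^3$) is actually what comes out, and to make sure the $\kappa$-boundedness of $\nu_S$ doesn't blow up the Hoeffding sample size beyond the claimed bound. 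Everything else is a routine chaining of triangle inequalities and the already-proven lemmas.
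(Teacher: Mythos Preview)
Your proposal is correct and takes essentially the same approach as the paper: condition on the coordinate-oracle success event, combine \Cref{lem:e-star-good} and \Cref{lem:est-mean-alg-stats} with \eqref{eq:dist-mean-abs-val} to control $\Ex[\nu_S]$, then apply Hoeffding (using the $\kappa$-boundedness from \Cref{lem:est-mean-alg-stats}) and union-bound over $S\in\binom{\calF}{k}$. Your worry about the exponent is unnecessary---the paper's own tally already yields $\poly(k,\eps^{-1})\cdot 2^{\wt O(\sqrt{k\log(1/\eps)})}$, of which the stated $2^{\wt O(\sqrt k\,\log^2(1/\eps))}$ is simply a loose upper bound.
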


\begin{proof}
By \Cref{prop:coord-oracles}, we have with high probability the distance of $f$ to a junta of $f$ restricted to the coordinates corresponding to $\calF$ differs from $\dist(f, \calJ_k)$ by at most $\eps/4$. Moreover, there are at most $\poly(k,1/\eps)$ such oracles. Now fix a set $S \subseteq \calF$. We'll bound the probability that $\frac{1}{2} (1 - \bD_S)$ differs from $\dist(f,\calJ_S)$ by more than $3\eps/4$. Indeed, we note that by \Cref{lem:e-star-good,lem:est-mean-alg-stats} as well as  the triangle inequality that
	\[\left| \Ex_{\bx^{(i)}} [\nu_S(\bx^{(i)})] - \frac{1}{2^k} \sum_{y \in \bits^S} \left| \Ex_{\bz \sim \bits^{[n] \setminus S}} [f(y \sqcup \bz)] \right | \right| \leq \eps/2. \]
	It then follows that
	\begin{align*}
		\Pr \left[ \frac{1}{2} (1 - \bD_S) - \dist(f, \calJ_S)| \geq 3 \eps/4 \right] &\leq \Pr \left[ | \bD_S - \E[\bD_S] | \geq \eps/4 \right] \\
		&\leq \exp \left( \frac{\eps^2 m }{\poly(\eps^{-1}) 2^{\wt{O}(\sqrt{k \log(1/\eps)})}} \right) \\
		&\leq \frac{1}{k} |\calF|^{-k} 
	\end{align*}
	by a Hoeffding bound. By a union bound, with high probability $\bD_S$ is within $3 \eps/4$ of $\dist(f, \calJ_S)$ for all $S \in \binom{\calF}{k}$, as claimed.
	
	For the query bound, note that we make
	\[\poly(k, \eps^{-1}) \cdot 2^{\wt{O}(\sqrt{k \log(1/\eps)})} \cdot \frac{ k |\calF| 2^{\wt{O}(\sqrt{k \log(1/\eps)})}}{\eps^{O(1)}} \leq \poly(k, \eps^{-1}) 2^{\wt{O}(\sqrt{k \log(1/\eps)})}\]
	queries.	
\end{proof}

\section{Lower Bound}
\label{sec:lb}

We now show that the upper bound obtained in~\Cref{sec:ub} is essentially tight. In particular, we establish the following:

\begin{theorem} \label{thm:lb}
	For every $\eps \in (2^{-O(k)}, k^{-2})$, there exist $\eps_1, \eps_2 \in [0,1/2)$ with $\eps = \eps_2 - \eps_1$ such that every $(\eps_1, \eps_2)$-tolerant $k$-junta tester for functions $f:\bits^{2k}\to\bits$ must make 
	\[\exp({\Omega(\sqrt{k\log(1/k\eps)}}))~\text{queries}\]
	to the function $f$. 
\end{theorem}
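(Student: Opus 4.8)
The plan is to go through Yao's minimax principle together with the lower-bound framework of \cite{chen2023mildly}, which reduces the theorem to the construction of a single pair of distributions $\calD_1,\calD_2$ over functions $g:\bits^k\to\bits$ satisfying the two properties isolated in the technical overview: (i) a mean gap $\big||\Ex_{\mathbf g\sim\calD_1}[\mathbf g]|-|\Ex_{\mathbf g\sim\calD_2}[\mathbf g]|\big|=\Theta(\eps)$, and (ii) identical marginals on every Hamming ball of radius $r=\wt\Theta(\sqrt{k\log(1/\eps)})$. Given such a pair, one builds $\Dyes,\Dno$ over $f:\bits^{2k}\to\bits$ exactly as in \cite{PRW22,chen2023mildly}: draw a uniformly random size-$k$ set $\mathbf S\subseteq[2k]$ and an independent random shift, and let $\mathbf f$ be $\mathbf g\sim\calD_1$ (resp.\ $\mathbf g\sim\calD_2$) ``planted'' along the coordinates in $\mathbf S$, so that by \Cref{eq:dist-mean-abs-val} a draw from $\Dyes$ is $\eps_1$-close to $\calJ_{\mathbf S}$ and a draw from $\Dno$ is $\eps_2$-far from every $\calJ_S$, where $\eps_1,\eps_2$ are read off from the two values of $\tfrac12-\tfrac12\Ex_{\bx}\!\big[|\Ex_{\by}[\cdot]|\big]$ (so $\eps_2-\eps_1=\eps$). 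The analysis of \cite{chen2023mildly} then shows that no deterministic $q$-query tester distinguishes a draw from $\Dyes$ from a draw from $\Dno$ with probability $2/3$ unless $q=\exp(\Omega(r))$. Since only the construction of $\calD_1,\calD_2$ with these parameters is new, essentially all the work lies there.

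For the construction I would start from the pair of distributions of \cite{PRW22}, which are identically distributed on balls of radius $\Theta(\sqrt k)$ and have a \emph{constant} mean gap; informally these are random coordinate-rotations of symmetric functions whose level pattern $(a_i)_i$ (where $a_i\in\bits$ is the value on Hamming weight $i$) oscillates at a scale coarser than $\sqrt k$ in one distribution and finer than $\sqrt k$ in the other, which via the binomial-convolution description of $\Ex_{\by}[\mathbf f]$ makes the distance-to-junta functional of \Cref{eq:dist-mean-abs-val} close to $0$ in one case and close to $\tfrac12$ in the other. To push the ball radius up to $r=\wt\Theta(\sqrt{k\log(1/\eps)})$ one ``breaks the symmetry'' of this construction: a radius-$r$ ball of a rotated symmetric function only exposes a window of at most $2r+1$ consecutive levels, so two distributions are identically distributed on radius-$r$ balls iff their level patterns have identical statistics on every such window; and the relevant quantitative fact --- which is the lower-bound counterpart of approximate inclusion--exclusion and of the flat polynomials of \Cref{lem:flat-polynomials-fine} --- is that level patterns can be made window-indistinguishable out to radius $r$ while their smoothed (binomial-convolved) absolute means still differ, by an amount as large as $2\exp(-\Omega(r^2/(k\log k)))$. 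Setting this last quantity to $\Theta(\eps)$ yields $r=\wt\Theta(\sqrt{k\log(1/\eps)})$; concretely I would realize $\calD_1,\calD_2$ by an explicit Chebyshev-/staircase-type pattern with geometrically spaced block lengths, which is the ``simple modification'' the overview alludes to.

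With $\calD_1,\calD_2$ in hand, the remaining steps are mostly bookkeeping. First, verify property (i) and, more precisely, that over the randomness of the construction the quantity $\Ex_{\bx\in\bits^S}\!\big[|\Ex_{\by\in\bits^{\bar S}}[\mathbf f(\bx\sqcup\by)]|\big]$ concentrates near its intended value, so that via \Cref{eq:dist-mean-abs-val} $\Dyes$ is supported on functions $\eps_1$-close to $\calJ_{\mathbf S}$ and $\Dno$ on functions $\eps_2$-far from \emph{every} $k$-junta --- the last point using that this functional is rotation-invariant for the functions in play, so ``far from $\calJ_{\mathbf S}$'' upgrades to ``far from $\calJ_S$ for all $S$''. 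Second, verify property (ii): conditioned on the rotation and shift, the restriction of $\mathbf g$ to $B(x,r)$ is a deterministic function of a length-$\le 2r+1$ window of the level pattern, so (ii) follows from the window-matching built into the construction. Third, invoke the distinguishing lemma of \cite{chen2023mildly} to get the $\exp(\Omega(r))$ query lower bound. Finally, check the ranges of $\eps$: the hypothesis $\eps<k^{-2}$ gives $\log(1/k\eps)=\Theta(\log(1/\eps))$, so $\exp(\Omega(r))=\exp(\Omega(\sqrt{k\log(1/k\eps)}))$ as claimed (consistently with \Cref{thm:lower-bound}), while $\eps>2^{-O(k)}$ ensures $r=O(k)\le 2k$ so that the rotated-symmetric construction on $\bits^{2k}$ is well defined.

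The step I expect to be the crux is the second paragraph: exhibiting an explicit pair of level-pattern distributions that simultaneously (a) are indistinguishable on every window of length $\Theta(r)$ and (b) keep their smoothed absolute means --- hence their distance-to-junta functionals --- separated by $\Theta(\eps)$, since (a) and (b) pull against each other and one must land exactly at the break-even radius $r=\wt\Theta(\sqrt{k\log(1/\eps)})$ predicted by the flat-polynomial bound. A secondary technical point is ensuring the planting/rotation embedding into $\bits^{2k}$ degrades neither property --- in particular that the concentration needed for the ``far from every $k$-junta'' claim under $\Dno$ holds simultaneously over all $\binom{2k}{k}$ candidate junta-sets --- which, as in \cite{chen2023mildly}, again reduces to the rotation-invariance of the functional in \Cref{eq:dist-mean-abs-val}.
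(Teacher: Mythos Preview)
Your high-level plan (Yao plus the \cite{chen2023mildly} framework, instantiated with a pair of action-subcube distributions having mean gap $\Theta(\eps)$ and ball-indistinguishability to radius $\Theta(\sqrt{k\log(1/\eps)})$) matches the paper's, but the paper's instantiation is far simpler than what you propose, and your ``far from every $k$-junta'' sub-argument does not go through as stated.

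\emph{The construction.} There is no Chebyshev/staircase pattern and no probabilistic window-matching of level sequences. The ``simple modification'' of \cite{PRW22} is literally this: on the action subcube $\bits^{\bA}$ one uses \emph{deterministic} three-region threshold functions---one value on $\{|x_{\bA}|\ge k/2\}$, another on $\{k/2-\Delta<|x_{\bA}|<k/2\}$, a third on $\{|x_{\bA}|\le k/2-\Delta\}$---with $\Delta$ chosen so that the bottom region has binomial mass exactly $\eps$, whence $\Delta=\Theta(\sqrt{k\log(1/k\eps)})$ by a standard tail bound. The yes/no action functions differ only in the $\pm 1$ pattern assigned to the top and bottom regions and all agree on the middle band; the ``symmetry breaking'' is nothing more than placing the lower threshold asymmetrically at $k/2-\Delta$. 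Indistinguishability is then a two-query ``Bad event'' argument rather than a single-ball window argument: after coupling $(\bA,\bC,\bb_1,\bb_2)$, a tester can see a difference only if it queries some $x,y$ with $x_{\bC}=y_{\bC}$, $|x_{\bA}|\ge k/2$, and $|y_{\bA}|\le k/2-\Delta$; such a pair has Hamming distance at least $\Delta$, and over the random choice of $\bA$ the probability they agree on $\bC$ is $\exp(-\Omega(\Delta))$, so a union bound over the $q^2$ query pairs finishes.

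\emph{The gap.} Your ``far from every $k$-junta'' step appeals to rotation-invariance of the functional in \Cref{eq:dist-mean-abs-val} to upgrade ``far from $\calJ_{\mathbf S}$'' to ``far from $\calJ_S$ for all $S$''. This does not work: that functional depends on the pair $(f,S)$, and once $\bA,\bC$ are fixed the function $f$ is not permutation-invariant, so different $S$ give genuinely different distances (and a naive ``plant $\mathbf g$ on $\mathbf S$'' would even make $f$ a junta on $\mathbf S$). Neither the paper nor \cite{chen2023mildly} argues this way. The actual argument uses the control structure: the uniformly random Boolean functions $\bb_1,\bb_2$ on $\bits^{\bC}$ ensure (by Chernoff plus a union bound) that with high probability, for every $i\in\bC$ at least a $1/4-o(1)$ fraction of inputs $x$ satisfy $\bfno(x)\neq\bfno(x^{\oplus i})$. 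Hence any $k$-junta on a set $I\neq\bC$ omits some control coordinate $i$ and must disagree with $\bfno$ on at least one endpoint of each such edge, forcing distance $\ge 1/4-o(1)$; the remaining case $I=\bC$ is computed directly from the action functions. This influential-edge case analysis, not symmetry, is what you need to supply.
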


Our proof of~\Cref{thm:lb} builds on the constructions of Chen et al.~\cite{chen2023mildly} who proved \Cref{thm:lb} for $\eps_1=0.1$ and $\eps_2 = 0.2$. Note that this theorem implies \Cref{thm:lower-bound} by \Cref{rem:tol-test-dist-equiv} when $\eps < k^{-2}$; on the other hand, if $\eps > k^{-2}$, then \Cref{thm:lower-bound} holds by simply using the $2^{\Omega(\sqrt{k})}$ lower bound from \cite{chen2023mildly}.

\subsection{The $\Dyes$ and $\Dno$ Distributions}
\label{sec:lb-yes-no}

As in~\cite{chen2023mildly}, we start with some objects that we need in the construction of the two distributions $\Dyes$ and $\Dno$. We partition the variables $x_1,\cdots,x_{2k}$ into \emph{control} variables and \emph{action} variables as follows: Let $A\sse[n]$ be a fixed subset of size $k$, and let $C = [2k]\setminus A$. We refer to the variables $x_i$ for $i\in C$ as \emph{control variables} and the variables $x_i$ for $i\in A$ as \emph{action variables}.

%%%%% figure yes
\usetikzlibrary{decorations.pathreplacing,angles,quotes}
\usetikzlibrary{shapes.geometric}
\usetikzlibrary{patterns}

\begin{figure}[t]
\centering
\begin{tikzpicture}[scale=0.86]

	%%%%%%%%%%%%%%% CONTROL SUBCUBE %%%%%%%%%%%%%%%%% 
	
%	\node (blah) at (-4, 0) {\textcolor{white}{padding}};
	\draw[-] (0, -1.5) -- (-1.5, 0) -- (0, 1.5) -- (1.5, 0) -- (0, -1.5);

	%%%%%%%%%%%%%%%% ACTION SUBCUBES %%%%%%%%%%%%%%%%%
	
	% top-row
	
	\node (heads) at (5.5, 2.75){\includegraphics[width=1.1cm]{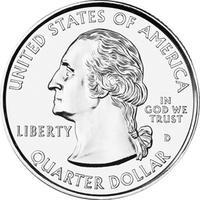}};
	\node (bh) at (5.5, 1.5){If $\bb_1(x_{\bC}) = +1$:};
	
	\fill[color=purple!20!blue!40!white!70] (7.5, 2.5) -- (9, 4) -- (10.5, 2.5);
	\fill[color=purple!20!blue!40!white!70] (10, 2) -- (9, 1) -- (8, 2);
	
	\draw[-] (9, 1) -- (7.5, 2.5) -- (9, 4) -- (10.5, 2.5) -- (9, 1);
	\draw[] (10, 2) -- (8, 2);
	\draw[] (7.5, 2.5) -- (10.5, 2.5);
	
	\node() at (9, 3.25){$+1~$};
	\node() at (9, 2.25){$-1~$};
	\node() at (9, 1.6){$+1~$};
	
	% bottom-row
	
	\node (tails) at (5.5, -1.75){\includegraphics[width=1.1cm]{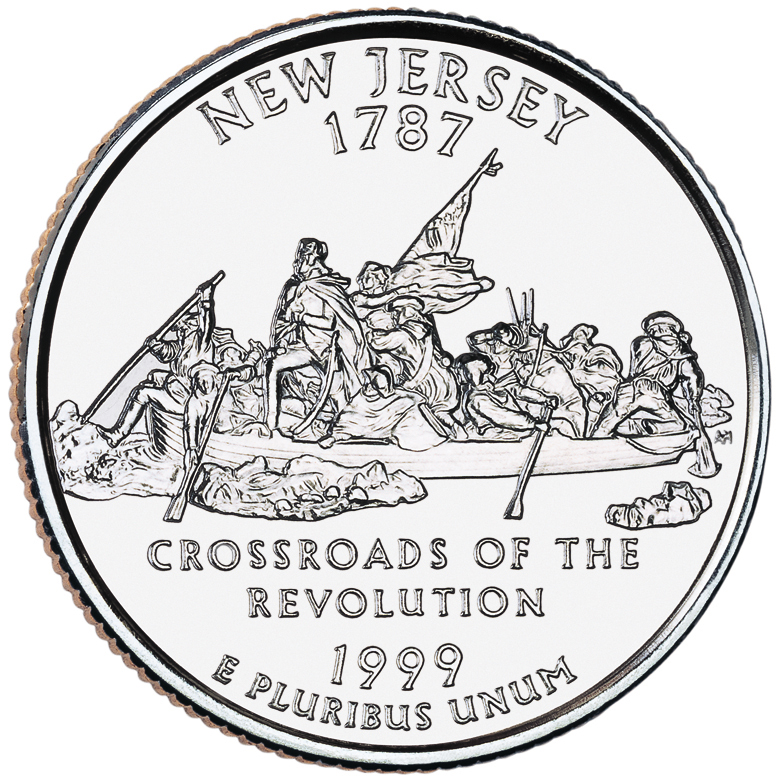}};
	\node (bt) at (5.5, -3){If $\bb_1(x_{\bC}) = -1$:};
	
	\draw[-] (9, -3.5) -- (7.5, -2) -- (9, -0.5) -- (10.5, -2) -- (9, -3.5);	
	\draw[] (10, -2.5) -- (8, -2.5);
	\draw[] (7.5, -2) -- (10.5, -2);
	
	\node() at (9, -1.25){$-1~$};
	\node() at (9, -2.25){$-1~$};
	\node() at (9, -2.9){$-1~$};
	
	\draw[decoration={brace,raise=4pt,amplitude=3pt}, decorate, line width=0.25mm,] (15, 2) -- node[right=-25pt, align=center, text width = 2.5cm] { $\eps$} (15, 1);

	% columns left + top
	\node (header-1-coin) at (9,6){\includegraphics[width=1.1cm]{images/quarter-front}};
	\node (header-1) at (9, 4.75) {If $\bb_2(x_{\bC}) = +1$:};
	
	\fill[color=purple!20!blue!40!white!70] (15, 2.5) -- (12, 2.5) -- (12.5, 2) -- (14.5, 2) -- (15, 2.5);
%	\fill[color=purple!20!blue!40!white!70] (14.5, 2) -- (13.5, 1) -- (12.5, 2);
	
	\draw[-] (13.5, 1) -- (12, 2.5) -- (13.5, 4) -- (15, 2.5) -- (13.5, 1);
	\draw[] (14.5, 2) -- (12.5, 2);
	\draw[] (15, 2.5) -- (12, 2.5);
	
	\node() at (13.5, 3.25){$-1~$};
	\node() at (13.5, 2.25){$+1~$};
	\node() at (13.5, 1.6){$-1~$};
	
	% columns right + bottom
	\node (header-0-coin) at (13.5,6){\includegraphics[width=1.1cm]{images/jersey-quarter-back}};
	\node (header-0) at (13.5, 4.75) {If $\bb_2(x_{\bC}) = -1$:};
	\fill[color=purple!20!blue!40!white!70] (13.5, -3.5) -- (12, -2) -- (13.5, -0.5) -- (15, -2) -- (13.5, -3.5);
	\draw[-] (13.5, -3.5) -- (12, -2) -- (13.5, -0.5) -- (15, -2) -- (13.5, -3.5);
	\draw[] (14.5, -2.5) -- (12.5, -2.5);
	\draw[] (12, -2) -- (15, -2);
	\node() at (13.5, -1.25){$+1~$};
	\node() at (13.5, -2.25){$+1~$};
	\node() at (13.5, -2.9){$+1~$};

	%%%%%%%%%%%%%%%%%%%% EXTRA GOODIES %%%%%%%%%%%%%%%%
	
%	\draw[-latex] (1.7, 0) -- (3.75, 0);
%	\draw[-latex] (2.35, 0.1) -- (4.5, -2.25);
	\node () at (0,0) [circle,fill,inner sep=1pt]{};
	\node (xc) at (0,-0.25) {$x_{\bC}$};
	
	\node (control) at (0, -4.75){$\bits^{\bC} \equiv \bits^{k}$};
	\node (action) at (11.25, -4.75){$\bits^{\bA} \equiv \bits^{k}$};
	
\end{tikzpicture}

%\
%
%\caption*{(a) A draw of $\bfyes\sim\Dyes$}

\bigskip

\caption{A draw of $\bfyes\sim\Dyes$. The left hand side depicts the control subcube $\bits^{\bC}$, the right hand side depicts an action subcube $\bits^{\bA}$ and the dashed lines indicate Hamming weight levels. The functions $\bb_1,\bb_2:\bits^{\bC}\to \bits$ are random functions on the control subcube.}
\label{fig:yes}
\end{figure}

%%%%%
%
%%%%% figure no
\usetikzlibrary{decorations.pathreplacing,angles,quotes}
\usetikzlibrary{shapes.geometric}
\usetikzlibrary{patterns}

\begin{figure}[t]
\centering
\begin{tikzpicture}[scale=0.86]

	%%%%%%%%%%%%%%% CONTROL SUBCUBE %%%%%%%%%%%%%%%%% 
	
%	\node (blah) at (-4, 0) {\textcolor{white}{padding}};
	\draw[-] (0, -1.5) -- (-1.5, 0) -- (0, 1.5) -- (1.5, 0) -- (0, -1.5);

	%%%%%%%%%%%%%%%% ACTION SUBCUBES %%%%%%%%%%%%%%%%%
	
	% top-row
	
	\node (heads) at (5.5, 2.75){\includegraphics[width=1.1cm]{images/quarter-front}};
	\node (bh) at (5.5, 1.5){If $\bb_1(x_{\bC}) = +1$:};
	
%	\fill[color=purple!20!blue!40!white!70] (7.5, 2.5) -- (9, 4) -- (10.5, 2.5);
	\fill[color=purple!20!blue!40!white!70] (10, 2) -- (9, 1) -- (8, 2);
	
	\draw[-] (9, 1) -- (7.5, 2.5) -- (9, 4) -- (10.5, 2.5) -- (9, 1);
	\draw[] (10, 2) -- (8, 2);
	\draw[] (7.5, 2.5) -- (10.5, 2.5);
	
	\node() at (9, 3.25){$-1~$};
	\node() at (9, 2.25){$-1~$};
	\node() at (9, 1.6){$+1~$};
	
	% bottom-row
	
	\node (tails) at (5.5, -1.75){\includegraphics[width=1.1cm]{images/jersey-quarter-back}};
	\node (bt) at (5.5, -3){If $\bb_1(x_{\bC}) = -1$:};
	
	\fill[color=purple!20!blue!40!white!70] (9,-0.5) -- (7.5,-2) -- (10.5,-2);
	\draw[-] (9, -3.5) -- (7.5, -2) -- (9, -0.5) -- (10.5, -2) -- (9, -3.5);	
	\draw[] (10, -2.5) -- (8, -2.5);
	\draw[] (7.5, -2) -- (10.5, -2);
	
	\node() at (9, -1.25){$+1~$};
	\node() at (9, -2.25){$-1~$};
	\node() at (9, -2.9){$-1~$};
	
%	\draw[decoration={brace,raise=5pt}, decorate, line width=0.25mm] (10.5, 3) -- node[right=2pt, align=center, text width = 2.5cm] {\footnotesize $\sbra{\frac{a}{2} \pm t\sqrt{a}}$} (10.5, 2);

	% columns left + top
	\node (header-1-coin) at (9,6){\includegraphics[width=1.1cm]{images/quarter-front}};
	\node (header-1) at (9, 4.75) {If $\bb_2(x_{\bC}) = +1$:};
	
	\fill[color=purple!20!blue!40!white!70] (13.5, 4) -- (12, 2.5) -- (12.5, 2) -- (14.5, 2) -- (15, 2.5);
	
	\draw[decoration={brace,raise=4pt,amplitude=3pt}, decorate, line width=0.25mm,] (15, 4) -- node[right=-15pt, align=center, text width = 2.5cm] {$1-\eps$} (15, 2);
	
	\draw[-] (13.5, 1) -- (12, 2.5) -- (13.5, 4) -- (15, 2.5) -- (13.5, 1);
	\draw[] (14.5, 2) -- (12.5, 2);
	\draw[] (15, 2.5) -- (12, 2.5);
	
	\node() at (13.5, 3.25){$+1~$};
	\node() at (13.5, 2.25){$+1~$};
	\node() at (13.5, 1.6){$-1~$};
	
	% columns right + bottom
	\node (header-0-coin) at (13.5,6){\includegraphics[width=1.1cm]{images/jersey-quarter-back}};
	\node (header-0) at (13.5, 4.75) {If $\bb_2(x_{\bC}) = -1$:};
	\fill[color=purple!20!blue!40!white!70] (15, -2) -- (13.5, -3.5) -- (12, -2);
	\draw[-] (13.5, -3.5) -- (12, -2) -- (13.5, -0.5) -- (15, -2) -- (13.5, -3.5);
	\draw[] (14.5, -2.5) -- (12.5, -2.5);
	\draw[] (12, -2) -- (15, -2);
	\node() at (13.5, -1.25){$-1~$};
	\node() at (13.5, -2.25){$+1~$};
	\node() at (13.5, -2.9){$+1~$};

	%%%%%%%%%%%%%%%%%%%% EXTRA GOODIES %%%%%%%%%%%%%%%%
	
%	\draw[-latex] (1.7, 0) -- (3.75, 0);
%	\draw[-latex] (2.35, 0.1) -- (4.5, -2.25);
	\node () at (0,0) [circle,fill,inner sep=1pt]{};
	\node (xc) at (0,-0.25) {$x_{\bC}$};
	
	\node (control) at (0, -4.75){$\bits^{\bC} \equiv \bits^{k}$};
	\node (action) at (11.25, -4.75){$\bits^{\bA} \equiv \bits^{k}$};
	
\end{tikzpicture}

\bigskip

\caption{A draw of $\bfno\sim\Dno$. Our conventions are as in~\Cref{fig:yes}.}
\label{fig:no}
\end{figure}

%%%%% 

\begin{notation} \label{notation:feel-the-bern}
	Given a point $x\in\bits^{2k}$, we write $|x| := |\#\{i : x_i = 1 \}|$ to denote the Hamming weight of the point $x$.
\end{notation}

We first define some auxiliary functions over the corresponding action subcubes $\bits^A$ which we will later use in the definition of $\Dyes$ and $\Dno$. Let $0 \leq \Delta \leq k/2$ be a parameter we will set later; we choose $\Delta$ so as to ensure 
\begin{equation} \label{eq:choice-of-delta}
	\eps := \Prx_{\bx \sim \bits^k}\sbra{|\bx| \leq \frac{k}{2} - \Delta}
\end{equation}
where $\eps$ is as in the statement of~\Cref{thm:lb}.
Let $h^{(+,0)},h^{(+,1)},h^{(-,0)}$ and $h^{(-,1)}$ be Boolean functions over $\bits^A$ defined as follows:
{
\begin{equation*}
\bh^{(+,-)}(x_A) =
	\begin{cases}
       \ -1 & |x_A|\geq\frac{k}{2} \\[0.5ex]
       \ -1 & |x_A|\in (\frac{k}{2}- \Delta, \frac{k}{2}), \\[0.5ex]
       \ -1 & |x_A|\leq \frac{k}{2}-\Delta.
    \end{cases}
\end{equation*}
\begin{equation*}
\bh^{(+,+)}(x_A) =
	\begin{cases}
       \ +1 & |x_A|\geq\frac{k}{2} \\[0.5ex]
       \ -1 & |x_A|\in (\frac{k}{2}- \Delta, \frac{k}{2}), \\[0.5ex]
       \ +1 & |x_A|\leq \frac{k}{2}-\Delta.
    \end{cases}
\end{equation*}
\begin{equation*}
\bh^{(-,-)}(x_A) =
	\begin{cases}
       \ -1 & |x_A|\geq\frac{k}{2} \\[0.5ex]
       \ -1 & |x_A|\in (\frac{k}{2}- \Delta, \frac{k}{2}), \\[0.5ex]
       \ +1 & |x_A|\leq \frac{k}{2}-\Delta.
    \end{cases}
\end{equation*}
\begin{equation*}
\bh^{(-,+)}(x_A) =
	\begin{cases}
       \ +1 & |x_A|\geq\frac{k}{2} \\[0.5ex]
       \ -1 & |x_A|\in (\frac{k}{2}- \Delta, \frac{k}{2}), \\[0.5ex]
       \ -1 & |x_A|\leq \frac{k}{2}-\Delta.
    \end{cases}
\end{equation*}
}
We now turn to the definitions of the ``yes'' and ``no'' distributions. 
To draw a function $\bm{f}_{\yes}\sim \Dyes$, we first sample a set $\bA\subset [2k]$ of size $k$ uniformly 
  at random, set $\bC=[2k]\setminus \bA$, and sample two Boolean functions $\bb_1, \bb_2$ over $\bits^{\bC}$ uniformly at random. 
Then the Boolean function $\bm{f}_{\yes}$ over $\bits^{2k}$  is defined using $\bA$ and $\bb_1, \bb_2$ as follows:
\[
\bm{f}_{\yes}(x)=\begin{cases}
\bh^{(+,-)}(x_\bA)  & \bb_1(x_{\bC})=-1~\text{and}~\bb_2(x_{\bC})=+1\\[0.3ex]
- \bh^{(+,-)}(x_\bA)  & \bb_1(x_{\bC})=-1~\text{and}~\bb_2(x_{\bC})=-1\\[0.3ex]
\bh^{(+,+)}(x_\bA)  & \bb_1(x_{\bC})=+1~\text{and}~\bb_2(x_{\bC})=+1\\[0.3ex]
- \bh^{(+,+)}(x_\bA)  & \bb_1(x_{\bC})=+1~\text{and}~\bb_2(x_{\bC})=-1
\end{cases}.
\]

To draw $\bm{f}_{\no}\sim \Dno$, we first sample  $\bA$, $\bb_1$, and $\bb_2$ in the same way as in $\Dyes$,
  and  $\bm{f}_{\no}$ is defined as
\[
\bm{f}_{\no}(x)=\begin{cases}
\bh^{(-,+)}(x_\bA)  & \bb_1(x_{\bC})=-1~\text{and}~\bb_2(x_{\bC})=+1\\[0.3ex]
- \bh^{(-,+)}(x_\bA)  & \bb_1(x_{\bC})=-1~\text{and}~\bb_2(x_{\bC})=-1\\[0.3ex]
\bh^{(-,-)}(x_\bA)  & \bb_1(x_{\bC})=+1~\text{and}~\bb_2(x_{\bC})=+1\\[0.3ex]
- \bh^{(-,-)}(x_\bA)  & \bb_1(x_{\bC})=+1~\text{and}~\bb_2(x_{\bC})=-1\\[0.3ex]
\end{cases}.
\] 

See Section~6.1 of the full version of the paper for figures illustrating $\Dyes$ and $\Dno$. Recall also that at this point we have not yet specified the parameters $\Delta$ (or equivalently---recalling~\Cref{eq:choice-of-delta}---the parameter $\eps$). 

\begin{proposition} \label{prop:yes-dist}
	With probability $1-o_k(1)$, a function $\bfyes\sim\Dyes$ is 
	\[\pbra{\frac{1}{4} - \frac{\eps}{2} \pm \exp\pbra{-O(k)}} \text{-close to a}~k\text{-junta}.\]
\end{proposition}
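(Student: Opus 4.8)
The plan is to show that $\bfyes$ is close to the junta $g(x) := \bh^{(+,+)}(x_\bA)$, which only depends on the $k$ action variables $\bA$, and then compute the resulting distance. First I would observe that, conditioned on the choice of $\bA$ and on $\bb_1, \bb_2$, the function $\bfyes$ restricted to a fixed action subcube $\{x_\bC = c\} \cong \bits^\bA$ is exactly one of $\bh^{(+,-)}, -\bh^{(+,-)}, \bh^{(+,+)}, -\bh^{(+,+)}$, each determined by the pair $(\bb_1(c), \bb_2(c))$. So $\dist(\bfyes, g)$ is the average over $c \sim \bits^\bC$ of $\dist\!\big(\bfyes|_{x_\bC \to c},\, \bh^{(+,+)}\big)$ within a single action subcube. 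I would then enumerate the four cases: on the subcube we pay $0$ if the restriction equals $\bh^{(+,+)}$; we pay $\Pr_{x_\bA}[\bh^{(+,+)} \neq -\bh^{(+,+)}] = 1$ if it is $-\bh^{(+,+)}$; and for the $\bh^{(+,-)}$ / $-\bh^{(+,-)}$ cases we pay $\Pr_{x_\bA}[\bh^{(+,+)} \neq \pm\bh^{(+,-)}]$, which from the explicit definitions is the probability that $|x_\bA| \geq k/2$ or $|x_\bA| \leq k/2 - \Delta$ (respectively its complement), i.e.\ $\tfrac12 + \tfrac{\eps}{2}$ or $\tfrac12 - \tfrac{\eps}{2}$ up to an $\exp(-O(k))$ correction coming from the $|x_\bA| = k/2$ boundary and the precise meaning of "$\geq k/2$" versus "$> k/2$" (and using~\eqref{eq:choice-of-delta}).

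Next I would average over the control subcube. Since $\bb_1, \bb_2$ are uniform random functions on $\bits^\bC$, for a uniformly random $c$ the pair $(\bb_1(c), \bb_2(c))$ is uniform on the four sign patterns, so each of the four cases above occurs for a $\tfrac14 \pm o(1)$ fraction of control strings; more precisely, by a Chernoff/Hoeffding bound the empirical fraction of each pattern is $\tfrac14 \pm \exp(-O(k))$ except with probability $o_k(1)$, which is where the "with probability $1 - o_k(1)$" in the statement comes from. Averaging the four per-subcube costs $\{0,\ 1,\ \tfrac12+\tfrac{\eps}{2},\ \tfrac12-\tfrac{\eps}{2}\}$ with weights $\tfrac14$ each gives $\tfrac14\big(0 + 1 + \tfrac12 + \tfrac12\big) = \tfrac12$ from the non-$\eps$ terms — wait, I should be careful here: I would instead pair the cases so the $\eps$-terms survive correctly and the arithmetic yields $\tfrac14 \cdot 0 + \tfrac14 \cdot (\text{something})$, recomputing to land on $\tfrac14 - \tfrac{\eps}{2}$. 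Concretely, matching the figure, $g = \bh^{(+,+)}$ agrees with $\bfyes$ on the $\bb_1(c)=+1,\bb_2(c)=+1$ subcubes (cost $0$), disagrees everywhere on $\bb_1(c)=+1,\bb_2(c)=-1$ — no: one must instead take $g$ to be the function that the other three cases are closest to on average, and the point of the construction is that this optimal $k$-junta achieves exactly $\tfrac14 - \tfrac{\eps}{2} \pm \exp(-O(k))$.

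Finally I would argue optimality, i.e.\ that no $k$-junta does better than this by more than $\exp(-O(k))$. For this I would note that any $k$-junta $h \in \calJ_k$ on $\bits^{2k}$ has at most $k$ relevant coordinates, so at least one coordinate of $\bA$ (in fact all of $\bC$, or at least $k$ coordinates total across $\bA \cup \bC$) is irrelevant; by a standard averaging argument the best junta is at least as far as the best junta on exactly the action coordinates $\bA$, and one then checks that among juntas on $\bits^\bA$ the minimum distance to $\bfyes$ (averaged over the control subcubes, whose restrictions are near-equidistributed among the four $\bh$'s) is attained by a $\pm\bh^{(+,+)}$-type function and equals the quantity computed above; this is exactly the kind of case analysis carried out in~\cite{chen2023mildly}. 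The main obstacle I anticipate is precisely this lower-bound (optimality) direction — the upper bound $\dist(\bfyes,\calJ_k) \leq \tfrac14 - \tfrac{\eps}{2} + \exp(-O(k))$ is a direct computation, but ruling out that some cleverly chosen junta (possibly depending on some control variables instead of all action variables) beats it requires care, and handling the $\exp(-O(k))$ boundary and concentration terms cleanly is the fiddly part.
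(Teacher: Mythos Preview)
Your witnessing junta is on the wrong set of coordinates, and this is why your arithmetic lands on $\tfrac12$ instead of $\tfrac14-\tfrac{\eps}{2}$. In fact \emph{every} $\bA$-junta $h:\bits^{\bA}\to\bits$ is at distance exactly $\tfrac12$ from $\bfyes$ once you average over the four control patterns: the four restrictions of $\bfyes$ to an action subcube are $\bh^{(+,+)},\,-\bh^{(+,+)},\,\bh^{(+,-)},\,-\bh^{(+,-)}$, and for any $h$ we have $\dist(h,\psi)+\dist(h,-\psi)=1$, so the average of the four subcube costs is $\tfrac14\big(\dist(h,\bh^{(+,+)})+(1-\dist(h,\bh^{(+,+)}))+\dist(h,\bh^{(+,-)})+(1-\dist(h,\bh^{(+,-)}))\big)=\tfrac12$. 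So no amount of ``re-pairing the cases'' or choosing a different $\bA$-junta will rescue the approach; your handwave after the ``wait'' does not go through.

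The paper instead takes the witnessing junta on the \emph{control} coordinates $\bC$: set $g(x)$ to be the sign that $\bfyes$ takes on the majority of the action subcube indexed by $x_{\bC}$, namely $g(x)=-1$ if $(\bb_1,\bb_2)(x_{\bC})=(-1,+1)$, $g(x)=+1$ if $(-1,-1)$, $g(x)=+1$ if $(+1,+1)$, $g(x)=-1$ if $(+1,-1)$. On the $\bb_1(x_{\bC})=-1$ subcubes $\bfyes$ is the constant $\pm\bh^{(+,-)}\equiv\mp 1$ and $g$ matches it exactly (cost $0$); on the $\bb_1(x_{\bC})=+1$ subcubes $\bfyes=\pm\bh^{(+,+)}$ and $g=\pm 1$, so the cost is $\Pr[|x_{\bA}|\in(k/2-\Delta,k/2)]=\tfrac12-\eps\pm\exp(-O(k))$. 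A Chernoff bound on $\bb_1$ gives that $\bb_1=-1$ on a $\tfrac12\pm\exp(-O(k))$ fraction of control strings with probability $1-o_k(1)$, whence $\dist(\bfyes,g)=\tfrac12\cdot 0+\tfrac12\cdot(\tfrac12-\eps)\pm\exp(-O(k))=\tfrac14-\tfrac{\eps}{2}\pm\exp(-O(k))$.

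Finally, note that the paper only proves the upper bound $\dist(\bfyes,\calJ_k)\le \tfrac14-\tfrac{\eps}{2}+\exp(-O(k))$; the optimality direction you worry about is not needed for the lower-bound application (only \Cref{prop:no-dist} requires a ``far from every junta'' argument), so you can drop that part entirely.
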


\begin{proof}
	Consider the $(n/2$)-junta $g\isazofunc$ given by 
	\[g(x) = 
	\begin{cases}
		-1  & \bb_1(x_{\bC})=-1~\text{and}~\bb_2(x_{\bC})=+1\\[0.3ex]
		+1  & \bb_1(x_{\bC})=-1~\text{and}~\bb_2(x_{\bC})=-1\\[0.3ex]
		+1  & \bb_1(x_{\bC})=+1~\text{and}~\bb_2(x_{\bC})=+1\\[0.3ex]
		-1  & \bb_1(x_{\bC})=+1~\text{and}~\bb_2(x_{\bC})=-1
	\end{cases}.
	\]
	Note that $g$ is a junta on the control subcube. A Chernoff bound on $\bb_1$ together with the definition of $\Dyes$ immediately gives the desired result. 
\end{proof}

\begin{proposition} \label{prop:no-dist}
	With probability $1-o_k(1)$, a function $\bfno\sim\Dno$ is 
	\[\pbra{\frac{1}{4} - e^{-O(k)}}\text{-far from every}~k\text{-junta}.\]
\end{proposition}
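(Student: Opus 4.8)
The plan is to show that, with probability $1-o_k(1)$ over $\bfno\sim\Dno$, \emph{every} size-$k$ set $S\sse[2k]$ satisfies $\dist(\bfno,\calJ_S)\ge\tfrac14-e^{-\Omega(k)}$; since $\dist(\bfno,\calJ_k)=\min_S\dist(\bfno,\calJ_S)$, this is exactly the claim. By \Cref{eq:dist-mean-abs-val}, $\dist(\bfno,\calJ_S)=\tfrac12-\tfrac12 Q_S$ where $Q_S:=\Ex_{z\in\bits^S}\big[\,\big|\Ex_{w\in\bits^{[2k]\setminus S}}[\bfno(z\sqcup w)]\big|\,\big]$, so it suffices to prove that for each \emph{fixed} $S$, $Q_S\le\tfrac12+e^{-\Omega(k)}$ except with probability $2^{-\omega(k)}$ over $\bb_1,\bb_2$ (we condition on $\bA$; everything below is uniform in $\bA$), and then union bound over the $\binom{2k}{k}\le 4^k$ sets $S$.

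The next step is to decompose $Q_S$ using the product structure of $\bfno$. Set $j:=|S\cap\bA|$ and split $[2k]\setminus S=\bA'\sqcup\bC'$ with $\bA':=\bA\setminus S$ ($|\bA'|=k-j$) and $\bC':=\bC\setminus S$ ($|\bC'|=j$). Averaging the free action coordinates first, for $z$ with $a:=|z_{S\cap\bA}|$,
\[
\Ex_w[\bfno(z\sqcup w)]=\frac1{2^{j}}\sum_{u\in\bits^{\bC'}}\bb_2(z_{S\cap\bC},u)\cdot g_{\bb_1(z_{S\cap\bC},u)}(a),
\]
where $g_{+1},g_{-1}\colon\{0,\dots,j\}\to[-1,1]$ are the Hamming-weight profiles of $\bh^{(-,-)}$, resp.\ $\bh^{(-,+)}$, after averaging over $\bits^{\bA'}$. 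Writing $Z_v(a)$ for this quantity with $v:=z_{S\cap\bC}\in\bits^{k-j}$, we obtain $Q_S=\Ex_v[W_v]$ with $W_v:=\Ex_{a\sim\mathrm{Bin}(j,1/2)}[|Z_v(a)|]\in[0,1]$, and I would record two facts: (i) the $W_v$ are i.i.d.\ over $v$, since distinct $v$ touch disjoint coordinates of $\bb_1,\bb_2$; and (ii) for fixed $v,a$ the $2^{j}$ summands of $Z_v(a)$ are i.i.d., mean zero (by the randomness of $\bb_2$), and bounded by $1$ in absolute value.

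I would then split on $j$. \emph{If $j\ge k/2$:} each $Z_v(a)$ averages $2^{j}\ge 2^{k/2}$ i.i.d.\ mean-zero $[-1,1]$-terms, so Hoeffding plus a union bound over the $\le(k{+}1)2^{k-j}$ pairs $(v,a)$ gives $\max_{v,a}|Z_v(a)|\le 2^{-\Omega(k)}$, hence $Q_S\le 2^{-\Omega(k)}$, except with probability $2^{-\omega(k)}$. \emph{If $j<k/2$:} then $k-j>k/2$, so Hoeffding over the $2^{k-j}$ i.i.d.\ terms $W_v$ gives $Q_S\le\Ex_\bb[W_v]+2^{-\Omega(k)}$ except with probability $2^{-\omega(k)}$, and it remains to show $\Ex_\bb[W_v]\le\tfrac12$. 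For fixed $a$, $\Ex_\bb[|Z_v(a)|]$ is a normalized average of $2^{j}$ i.i.d.\ symmetric summands $Y_i=(\pm1)\cdot c_i$ (independent sign $\bb_2$, and $c_i\in\{g_{+1}(a),g_{-1}(a)\}$); the map $n\mapsto\Ex\big|\tfrac1n\sum_{i\le n}Y_i\big|$ is non-increasing (split the sum in half, apply the triangle inequality), so for $j\ge1$ it is at most $\Ex\big|\tfrac12(Y_1+Y_2)\big|\le\tfrac12\max(|g_{+1}(a)|,|g_{-1}(a)|)\le\tfrac12$ (the first inequality by averaging out the $\bb_2$-signs), whence $\Ex_\bb[W_v]\le\tfrac12$; and for $j=0$ (i.e.\ $S=\bC$) we have $\Ex_\bb[W_v]=\tfrac12\big(|\Ex\bh^{(-,-)}|+|\Ex\bh^{(-,+)}|\big)=\tfrac12(1-2\eps)\le\tfrac12$, using $\Ex\bh^{(-,+)}=0$. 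In all cases $\dist(\bfno,\calJ_S)\ge\tfrac14-2^{-\Omega(k)}$, and the union bound over $S$ completes the argument.

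The main thing to be careful about is the case $j=0$, $S=\bC$: unlike every other $S$ (which is far by $\tfrac14+\Omega(\eps)$, or even $\tfrac12-o(1)$ once $j\ge k/2$), here the distance is only $\tfrac14+\tfrac\eps2\pm 2^{-\Omega(k)}$, so there is no slack and one genuinely needs $\Ex\bh^{(-,+)}=0$ \emph{exactly} (and likewise $\Ex\bh^{(+,+)}=0$ for the companion \Cref{prop:yes-dist}). This holds for odd $k$ by the symmetry $x_A\mapsto\overline{x_A}$; for even $k$, threshold-at-$k/2$ leaves a $\Theta(1/\sqrt k)$ bias that would dominate $\eps$ and destroy the gap between $\Dyes$ and $\Dno$, so one restricts to odd $k$ (or offsets the threshold to rebalance). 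Everything else is routine concentration bookkeeping, the key structural point being that for every $S$ at least one of $j,k-j$ is $\ge k/2$, so that either the inner Hoeffding bound on the $Z_v(a)$'s or the outer one on the $W_v$'s applies with $e^{-\Omega(k)}$ error and failure probability comfortably below $4^{-k}$.
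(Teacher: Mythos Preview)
Your argument is correct and complete, but it takes a genuinely different route from the paper's. The paper argues via \emph{bichromatic edges}: it shows (by a Chernoff bound on $\bb_1,\bb_2$ and a union bound over $i$) that with probability $1-o_k(1)$, for every control coordinate $i\in\bC$ at least a $(\tfrac14-e^{-\Omega(k)})$-fraction of points $x$ satisfy $\bfno(x)\neq\bfno(x^{\oplus i})$. Then, for any $k$-junta $g$ with relevant set $I\neq\bC$, one picks $i\in\bC\setminus I$ and observes that $g$ must disagree with $\bfno$ on at least one endpoint of each such edge, giving $\dist(\bfno,g)\ge\tfrac14-e^{-\Omega(k)}$ in one line. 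Only the single remaining set $S=\bC$ is handled by a direct mean computation, exactly as in your $j=0$ case.

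By contrast, you bound $Q_S$ uniformly via~\Cref{eq:dist-mean-abs-val}, splitting on $j=|S\cap\bA|$ and applying Hoeffding either to the $2^{j}$ inner summands (when $j\ge k/2$) or to the $2^{k-j}$ independent $W_v$'s (when $j<k/2$), supplemented by the elementary bound $\E_{\bb}|Z_v(a)|\le\tfrac12$ for $j\ge1$. This is longer but more quantitative: it in fact shows $\dist(\bfno,\calJ_S)\ge\tfrac12-e^{-\Omega(k)}$ whenever $j\ge k/2$, and makes transparent that $S=\bC$ is the unique set where the distance is near $\tfrac14$. The paper's edge argument is slicker for the generic case but gives only the $\tfrac14$ bound there. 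Your parity-of-$k$ caveat is well taken and applies equally to the paper's construction (which is silent on it); restricting to odd $k$ resolves it at no cost to the lower bound.
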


\begin{proof}
As in the proof of Lemma~17 of~\cite{chen2023mildly}, we have by a union bound and a Chernoff bound over $\bA$, $\bb_1$, {and $\bb_2$} that with probability at least $1-o_k(1)$, the following holds:
	\begin{quotation}
		\noindent For every $i\in\bC$, there are at least $(0.25 - e^{-O(k)})$-fraction of strings $x \in \bits^{2k}$ such that $x_i = +1$ and $\bfno(x) \not = \bfno(x^{\oplus i})$.
	\end{quotation}
	
	We assume for the rest of the proof that $\fno$ satisfies the above condition. 
	Let $g:\bits^{2k}\to\bits$ be any $k$-junta, and let $I\sse[2k]$ denote its set of influential variables. We now split into two cases:
	\begin{itemize}
		\item If $I \neq C$, then there exists $i\in C$ such that $i\notin I$. Since $g(x) = g(x^{\oplus i})$, however, it follows from the above condition that its distance to being a junta is $\frac{1}{4} - \exp(-O(k))$ since at least one endpoint of each of the influential edges needs to be changed to obtain a junta. 
		\item On the other hand, if $I = C$, then from the construction of $\fno$ it follows (using a Chernoff bound on $\bb_1$) that the distance to being a junta is at least 
		\[\frac{1}{4} + \frac{\eps}{2} - e^{O(k)}.\]
	\end{itemize}
	This completes the proof.
\end{proof}

\subsection{Proof of~\Cref{thm:lb}}
\label{sec:lb-indistinguishability}

We now turn to the proof of~\Cref{thm:lb}. Recall that $\Dyes$ and $\Dno$ are parametrized by $\eps$, or equivalently $\Delta$ (see~\Cref{eq:choice-of-delta}).
We now obtain bounds on $\Delta$ as a function of $\eps$. We start by recalling the following standard tail bound for the Binomial distribution~\cite{ash2012information}:
 
\begin{proposition} \label{prop:bernoulli-tail}
	For $\bX\sim\mathrm{Binomial}(n, p)$ and $k\leq n$, we have 
	\[\Prx\sbra{\bX \leq k} \geq \frac{1}{\sqrt{2n}}\exp\pbra{-n\cdot \mathrm{KL}\pbra{\frac{k}{n}~\bigg\|~p }}\]
	where $\mathrm{KL}(q_1~\|~q_2)$ denotes the KL-divergence between Bernoulli random variables with parameters $q_1$ and $q_2$. In particular, 
	\[
	\mathrm{KL}\pbra{\frac{k}{n}~\bigg\|~p } 
	= \frac{k}{n}\log\frac{k}{np} + \pbra{1-\frac{k}{n}}\log\pbra{\frac{1-\frac{k}{n}}{1-p}}.
	\]
\end{proposition}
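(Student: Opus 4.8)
The plan is the standard ``single surviving term plus an entropy estimate for the binomial coefficient'' argument (see, e.g., \cite{ash2012information}); I will only indicate the steps since each is a textbook fact. Assume first that $k$ is a nonnegative integer; if it is not, replace $k$ by $\lfloor k\rfloor$ throughout, which in the regime $k/n\le p$ (the one relevant for the application, with $p=1/2$) only strengthens the bound because $\mathrm{KL}(q\,\|\,p)$ is monotone in $q$ on $[0,p]$.

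First I would discard all but one term of the tail, using that every summand of $\Prx\sbra{\bX\le k}$ is nonnegative:
\[
\Prx\sbra{\bX\le k} \;\ge\; \Prx\sbra{\bX = k} \;=\; \binom{n}{k}\,p^{k}(1-p)^{n-k}.
\]
Next I would invoke the standard Stirling-based lower bound on the binomial coefficient,
\[
\binom{n}{k} \;\ge\; \frac{1}{\sqrt{8k(1-k/n)}}\;\exp\pbra{n\cdot \mathrm{H}\pbra{\tfrac{k}{n}}}
\;\ge\; \frac{1}{\sqrt{2n}}\;\exp\pbra{n\cdot \mathrm{H}\pbra{\tfrac{k}{n}}},
\]
where $\mathrm{H}(q):=-q\ln q-(1-q)\ln(1-q)$ is the natural-log binary entropy, and the second inequality uses $k(1-k/n)=k(n-k)/n\le n/4$; the boundary values $k\in\{0,n\}$ are checked directly.

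Finally I would collect exponents: by the definition of $\mathrm{H}$,
\[
n\,\mathrm{H}\pbra{\tfrac kn}+k\ln p+(n-k)\ln(1-p)
= -k\ln\tfrac{k/n}{p}-(n-k)\ln\tfrac{1-k/n}{1-p}
= -n\cdot\mathrm{KL}\pbra{\tfrac kn~\big\|~p},
\]
which is precisely the claimed exponent; multiplying the two displayed lower bounds then yields the proposition. The ``in particular'' formula is just the definition of $\mathrm{KL}$ between $\Bern(k/n)$ and $\Bern(p)$. There is no genuine obstacle here: the only points needing (routine) care are the explicit $\sqrt{2n}$ constant in the binomial-coefficient estimate and the degenerate cases $k\in\{0,n\}$.
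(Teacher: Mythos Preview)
Your sketch is correct and is exactly the standard argument one finds in the reference the paper cites; note that the paper itself does not prove this proposition but simply states it as a known tail bound with a citation to~\cite{ash2012information}. Your single-term-plus-Stirling derivation is precisely the textbook route, and the constant $\sqrt{2n}$ falls out of $8k(n-k)/n\le 2n$ just as you wrote.
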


Using~\Cref{prop:bernoulli-tail} and recalling~\Cref{eq:choice-of-delta}, we get that 
\begin{equation} \label{eq:berkeley-bowl}
	\eps \geq \frac{1}{2k}\exp\pbra{-k\cdot\mathrm{KL}\pbra{\frac{1}{2} - \frac{\Delta}{k}~\bigg\|~\frac{1}{2}}}. \nonumber
\end{equation}
A routine calculation using the inequality $1+x\leq\exp(x)$ gives that 
\[
\mathrm{KL}\pbra{\frac{1}{2} - \frac{\Delta}{a}~\bigg\|~\frac{1}{2}} \leq \Theta\pbra{\frac{\Delta^2}{a^2}}
\qquad\text{and so}\qquad 
\eps \geq \frac{1}{2k}\exp\pbra{-\Theta\pbra{\frac{\Delta^2}{k}}}
\]
which rearranges to 
\begin{equation} \label{eq:trader-joes}
\Delta \geq \Theta\pbra{\sqrt{k\log\pbra{\frac{1}{k\eps}}}}.
\end{equation}

The remainder of the argument is identical to the lower bound obtained by~\cite{chen2023mildly}; 
we include a brisk proof below for the reader's convenience. 
Our notation and terminology follow~\cite{chen2023mildly} and we refer the reader to Sections 2.1 and 5 of~\cite{chen2023mildly} for background on proving lower bounds for testing algorithms (such as Yao's lemma~\cite{Yao:87,Goldreich17book}). The key observation here---as in~\cite{chen2023mildly}---is the fact that coupling the random variables $\bC, \bA, \bb_1, \bb_2$, and $\br$, a non-adaptive algorithm $\calA$ can distinguish between $\bfyes$ and $\bfno$ only if the following event occurs: 
\begin{quotation}
\noindent There are two points $x, y\in\bits^n$ queried by $\calA$ such that $x_{\bC} = y_{\bC}$, and $|x_{\bA}| \geq \frac{k}{2}$ and $|y_{\bA}| \leq \frac{k}{2} - \Delta$.
\end{quotation}
Call this event $\Bad$. The next lemma shows that $\Bad$ occurs with probability $o_k(1)$

\begin{proposition} \label{prop:thanks-xi}
	Let $\calA$ be a non-adaptive algorithm that makes $\exp({0.000001\sqrt{k\log(1/k\eps)}})$ queries. The probability of the event $\Bad$ is $o_k(1)$.
\end{proposition}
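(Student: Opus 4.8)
The plan is to bound the probability of $\Bad$ by a union bound over all pairs of queried points, reducing to a single estimate: for a fixed pair $x, y \in \bits^{2k}$, the probability (over the random choice of $\bA$, with $\bC = [2k] \setminus \bA$) that $x_{\bC} = y_{\bC}$ while $|x_{\bA}| \geq k/2$ and $|y_{\bA}| \leq k/2 - \Delta$. First I would observe that $x_{\bC} = y_{\bC}$ forces all coordinates where $x$ and $y$ differ to lie in $\bA$; write $D = \{i : x_i \neq y_i\}$, so we need $D \subseteq \bA$. Moreover, on $\bA$ the Hamming weights of $x$ and $y$ must straddle a gap of size $\Delta$, i.e. $|x_{\bA}| - |y_{\bA}| \geq \Delta$; since $x$ and $y$ agree outside $D$, this difference equals the signed count of disagreements within $D$, so in particular $|D| \geq \Delta$. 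Thus $\Bad$ can only be caused by a pair whose Hamming distance is at least $\Delta$, and for such a pair we need the random $k$-set $\bA$ to contain the $|D| \geq \Delta$ fixed coordinates of $D$.

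The key quantitative step is then: for any fixed set $D$ with $|D| \geq \Delta$,
\[
\Prx_{\bA}\sbra{D \subseteq \bA} \leq \frac{\binom{2k - \Delta}{k - \Delta}}{\binom{2k}{k}} \leq \prod_{j=0}^{\Delta - 1} \frac{k - j}{2k - j} \leq 2^{-\Delta}.
\]
Using~\Cref{eq:trader-joes}, this is at most $\exp(-\Theta(\sqrt{k \log(1/k\eps)}))$. Now union-bounding over all $\binom{q}{2} \leq q^2$ pairs of queried points, where $q = \exp(0.000001\sqrt{k\log(1/k\eps)})$, gives
\[
\Pr[\Bad] \leq q^2 \cdot 2^{-\Delta} \leq \exp\pbra{0.000002\sqrt{k\log(1/k\eps)} - \Theta(\sqrt{k\log(1/k\eps)})} = o_k(1),
\]
since the negative term dominates (the constant $0.000001$ is chosen precisely so that $2q$-many bits of ``budget'' is far smaller than the $\Theta(\Delta)$ savings). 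One subtlety: the event $\Bad$ also requires the one-sided weight conditions $|x_{\bA}| \geq k/2$ and $|y_{\bA}| \leq k/2 - \Delta$, not merely $|D| \geq \Delta$; but since we are proving an upper bound on $\Pr[\Bad]$, it suffices to drop these extra constraints and bound the probability of the weaker event ``some queried pair $x,y$ has $x_{\bC} = y_{\bC}$ and $|x_{\bA}| - |y_{\bA}| \geq \Delta$,'' which is implied by $\Bad$.

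The main obstacle is making sure the bookkeeping in the union bound is airtight: specifically, that $x_{\bC} = y_{\bC}$ genuinely forces $D \subseteq \bA$ (immediate from the definition of restriction), and that we correctly handle the case where the queried points happen to have small Hamming distance—such pairs simply cannot trigger $\Bad$ at all since $|D| < \Delta$ makes the weight gap impossible, so they contribute $0$ to the union bound. I expect no real difficulty beyond this; the estimate $\Prx_\bA[D \subseteq \bA] \leq 2^{-|D|}$ is elementary and the choice of the constant $0.000001$ in the query bound was reverse-engineered to leave plenty of slack against $\Delta = \Theta(\sqrt{k\log(1/k\eps)})$.
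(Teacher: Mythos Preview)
Your proposal is correct and follows essentially the same approach as the paper's proof: both argue that a pair triggering $\Bad$ must have Hamming distance at least $\Delta$, bound the probability that all disagreement coordinates land in $\bA$ by $\exp(-\Theta(\Delta))$, invoke~\Cref{eq:trader-joes}, and finish with a union bound over pairs. You have in fact supplied more detail than the paper (the explicit hypergeometric estimate $\Prx_{\bA}[D\subseteq\bA]\leq 2^{-|D|}$), whereas the paper simply asserts the $\exp(-0.00001\sqrt{k\log(1/k\eps)})$ bound.
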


\begin{proof}
	Let $x$ and $y$ be two points queried by $\calA$. For $|x_{\bA}| \geq \frac{k}{2}$ and $|y_{\bA}| \leq \frac{k}{2} - \Delta$ to hold, it must be the case that $x$ and $y$ have Hamming distance at least $\Delta$. However, for any $x$ and $y$ with Hamming distance at least $\Delta$, the probability that $x_{\bC} = y_{\bC}$ is at most $\exp({-0.00001\sqrt{k\log(1/k\eps)}})$. \Cref{prop:thanks-xi} now follows by a union bound over all pairs of points queried by $\calA$.
\end{proof}

\Cref{prop:thanks-xi} now implies \Cref{thm:lb} in the same manner that Lemma 18 of \cite{chen2023mildly} implies Theorem~2 of \cite{chen2023mildly}.

\section{Acknowledgements}
Shivam Nadimpalli is supported by NSF grants IIS-1838154, CCF-2106429, CCF-2211238, CCF-1763970, and CCF-2107187. Shyamal Patel is supported in part by NSF grants IIS-1838154, CCF-2106429, CCF-2107187, CCF-2218677, ONR grant ONR-13533312, and an NSF Graduate Student Fellowship. Part of this work was completed while the authors were visiting the Simons Institute for the Theory of Computing during Summer 2023.  The authors would like Xi Chen, Rocco Servedio, and Anindya De for helpful discussions. They would also like to thank the anonymous STOC reviewers for their feedback.

\begin{flushleft}
\bibliographystyle{alpha}
\bibliography{allrefs}
\end{flushleft}

\appendix
\section{Missing Proofs} 
\label{sec:appendix}

\subsection{Proofs from \Cref{subsec:flat-polys}}
\label{appendix:polynomial}

\begin{proof}[Proof of \Cref{lem:flat-polynomial-oos-bound}]
We prove the statement by Lagrange interpolation. Indeed, note that 
	\[p(x) = \sum_{j = 0}^r p(j) \prod_{i \in \{0, ..., r\} \setminus \{j\}} \frac{x - i}{j - i} \]
and so we have 
	\begin{align*}
		|p(\ell)| &\leq \sum_{j = 0}^r |p(j)| \cdot \left| \prod_{i \in \{0, ..., r\} \setminus \{j\}} \frac{\ell - i}{j - i} \right| \\
		&\leq \sum_{j = 1}^r 2 \left| \prod_{i \in \{0, ..., r\} \setminus \{j\}} \frac{\ell - i}{j - i} \right| \\
		&\leq 4 \ell^r
	\end{align*}
	which completes the proof. 
\end{proof}

\begin{proof}[Proof of \Cref{lem:flat-polynomial-coefficient-bound}]
We prove the statement by induction. We will show that $|\alpha_j^{r,n}| \leq 2(j)^j $. For the base case, note that
	\[ |\alpha_1^{r,n}| = |p(1)| \leq 2\]
by property $(ii)$. Now suppose that the statement holds for all $i < j$. We then note that
	\[2 \geq |p(j)| = \left| \sum_{i=1}^j \alpha_i^{r,n} \binom{j}{i} \right| \]
which implies that
	\[|\alpha_j^{r,n}| \leq 2 + 2\sum_{i=1}^{j-1} i^i \binom{j}{i} \leq 2\sum_{i=0}^{j-1} (j - 1)^i \binom{j}{i} \leq 2j^j\]
as desired.

\end{proof}

\subsection{Proof of \Cref{thm:numerical-diff}}
\label{appendix:diff}

We prove the existence of such coefficients using standard techniques (cf. Section 10.10.2 of~\cite{quarteroni2006numerical}). Before proving the statement, we first recall a few facts from linear algebra.

\begin{definition}
	Given a set of points $x_1, ..., x_n$, the corresponding Vandermonde matrix is defined as
		\[W_n := \begin{pmatrix} 1 & x_1 & \dots & x_1^{n-1} \\ 1 & x_2 & \dots & x_2^{n-1} \\ \vdots & \vdots & \ddots & \vdots \\ 
		 1 & x_n & \dots & x_n^{n-1} \end{pmatrix}.\]
\end{definition}

The following is lemma is easy to check; alternatively, see Exercise~40 from Section 1.2.3 of~\cite{knuth1997art}.

\begin{lemma} \label{lem:vandermonde}
Given a set of points $x_1, ..., x_n$, the corresponding Vandermonde matrix $W_n$ is invertible with inverse $W_n^{-1}$ given by
	\[(W_n^{-1})_{i,j} = \frac{(-1)^{n-i} e_{n-i}(\{x_1, \dots, x_n\} \setminus \{x_j\})}{\prod_{m = 1, m \not = j}^n (x_j - x_m)} \]
where $e_m$ is the $m$th elementary symmetric polynomial, i.e. 
\[e_m(\{y_1, \dots, y_k\}) = \sum_{1 \leq j_1 < ... < j_m \leq k} y_{j_1} y_{j_2} \dots y_{j_m}.\]
\end{lemma}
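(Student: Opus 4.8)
\textbf{Proof proposal for \Cref{lem:vandermonde}.}

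The plan is to verify directly that the claimed formula for $W_n^{-1}$ is correct by checking that the product $W_n \cdot W_n^{-1}$ equals the identity matrix, rather than by deriving the inverse from scratch. Since $W_n$ is a Vandermonde matrix with distinct nodes $x_1,\dots,x_n$, it is invertible (its determinant is $\prod_{i<j}(x_j-x_i)\neq 0$), so it suffices to exhibit \emph{a} right inverse. Let $M$ denote the matrix on the right-hand side of the claimed identity, i.e. $M_{i,j} = \dfrac{(-1)^{n-i}\, e_{n-i}(\{x_1,\dots,x_n\}\setminus\{x_j\})}{\prod_{m\neq j}(x_j - x_m)}$. I would compute the $(k,j)$ entry of $W_n M$ and show it equals $\mathbf{1}\{k=j\}$.

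The key step is a polynomial-interpolation identity. Fix $j$ and consider the degree-$(n-1)$ polynomial
\[
L_j(t) := \prod_{m=1,\,m\neq j}^n \frac{t - x_m}{x_j - x_m},
\]
the $j$-th Lagrange basis polynomial for the nodes $x_1,\dots,x_n$, which satisfies $L_j(x_i) = \mathbf{1}\{i=j\}$. Expanding the numerator $\prod_{m\neq j}(t-x_m)$ in powers of $t$, the coefficient of $t^{n-1-s}$ is $(-1)^s e_s(\{x_1,\dots,x_n\}\setminus\{x_j\})$ by the definition of the elementary symmetric polynomials. Writing $L_j(t) = \sum_{i=1}^n c_{i,j}\, t^{i-1}$ and matching the coefficient of $t^{i-1}$ (i.e. taking $s = n-i$) gives exactly $c_{i,j} = \dfrac{(-1)^{n-i} e_{n-i}(\{x_1,\dots,x_n\}\setminus\{x_j\})}{\prod_{m\neq j}(x_j-x_m)} = M_{i,j}$. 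Thus the columns of $M$ are precisely the coefficient vectors of the Lagrange basis polynomials. Now the $(k,j)$ entry of $W_n M$ is $\sum_{i=1}^n (W_n)_{k,i} M_{i,j} = \sum_{i=1}^n x_k^{i-1} c_{i,j} = L_j(x_k) = \mathbf{1}\{k=j\}$, so $W_n M = I$ and hence $M = W_n^{-1}$.

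I expect the only mild obstacle to be bookkeeping with the sign and index conventions — pinning down that the coefficient of $t^{n-1-s}$ in $\prod_{m\neq j}(t-x_m)$ is $(-1)^s e_s$ over the $(n-1)$-element set, and then correctly substituting $s=n-i$ to land on the $(-1)^{n-i}$ and $e_{n-i}$ appearing in the statement. Everything else is a routine consequence of the defining property $L_j(x_i)=\delta_{ij}$ of Lagrange interpolation together with invertibility of Vandermonde matrices at distinct nodes; alternatively, one can cite Knuth's Art of Computer Programming, Section 1.2.3, Exercise 40, as the excerpt already notes.
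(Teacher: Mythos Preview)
Your proposal is correct. The paper does not actually give a proof of this lemma: it simply states that it is ``easy to check'' and cites Exercise~40 of Section~1.2.3 of Knuth. Your argument via the Lagrange basis polynomials --- recognizing that the $j$th column of the claimed inverse is precisely the coefficient vector of $L_j(t)=\prod_{m\neq j}(t-x_m)/(x_j-x_m)$ in the monomial basis, and then reading off $W_nM=I$ from $L_j(x_k)=\delta_{kj}$ --- is exactly the natural verification the paper is alluding to, so there is nothing to compare. The only point worth flagging is that both the lemma statement and your proof tacitly assume the $x_i$ are pairwise distinct (needed for invertibility and for the denominators to be nonzero); you note this, and it is harmless in the paper's application.
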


We now turn to the proof of~\Cref{thm:numerical-diff}:

\begin{proof}[Proof of \Cref{thm:numerical-diff}]
Note that by Taylor's theorem we have that
	\[ f(x - i \delta) = f(x) + \frac{df}{dx}(x) \frac{(-i \delta)}{1!} + \frac{d^2f}{dx^2}(x) \frac{(-i \delta)^2}{2!} + \dots + \frac{d^{2\ell-1}f}{dx^{2\ell-1}}(x) \frac{(-i \delta)^{2\ell-1}}{(2\ell-1)!} + \frac{d^{2\ell}f}{dx^{2\ell}}(\xi_i) \frac{(i \delta)^{2\ell}}{(2\ell)!} \]
for some $\xi_i \in [x-2\ell\delta, x]$. Thus
\begin{multline*}
\sum_i \beta_i f(x - i \delta) = \left (\sum_i \beta_i \right) f(x) + \left (\sum_i \beta_i (-i)  \right) \frac{df}{dx}(x) \frac{\delta}{1!} + \left (\sum_i \beta_i (-i)^2 \right) \frac{d^2f}{dx^2}(x) \frac{\delta^2}{2!} + \dots \\
+ \left (\sum_i \beta_i (-i)^{2\ell-1}  \right)  \frac{ds^{2\ell-1}f}{dx^{2\ell-1}}(x) \frac{\delta^{2\ell-1}}{(2\ell-1)!} + \sum_i \beta_i i^{2\ell} \frac{d^{2\ell}f}{dx^{2\ell}}(\xi_i) \frac{\delta^{2\ell}}{(2\ell)!}.
\end{multline*}

We now set $\beta_i$ such that 
	\[\sum_i \beta_i (-i)^j = \begin{cases} \ell! & j = \ell \\ 0 & \text{otherwise} \end{cases} \]
If we let $W_{2\ell}$ be the Vandermonde matrix for corresponding to the $2l$ points $\pbra{0,-1,...,-(2\ell-1)}$, we see this is equivalent to taking $\beta$ such that 
\[W_{2\ell}^T \beta = \ell!\cdot e_\ell.\]
In particular, such $\beta_i$ exist. Moreover, we have that $\beta = \ell! \left( (W_{2\ell})^{-1}\right)^T e_\ell$ From which, we conclude that
	\[|\beta_i| \leq \ell!\cdot  \| W_{2\ell}^{-1} \|_\infty.\]
Using the closed form expression for $W_{2\ell}^{-1}$ given by~\Cref{lem:vandermonde}, we can bound its largest entry by
\[\max_{i,j} \left| \frac{(-1)^{n-i} e_{2\ell-i}(\{x_1, \dots, x_{2\ell}\} \setminus \{x_j\})}{\prod_{m = 1, m \not = j}^{2\ell} (x_j - x_m)} \right| \leq 2^\ell (2 \ell)!\]
where $x_i = 1 - i$. Thus, we conclude that $|\beta_i| \leq (2 \ell)! (2 \ell)^{\ell}$ as claimed.

Finally, note that
\begin{align*}
\left| \sum_{i=0}^{2\ell-1} \beta_i f(x - i \delta) - \frac{d^\ell f}{dx^{\ell}}(x) \right| &= \left| \sum_{i = 0}^{2\ell-1} \beta_i i^{2\ell} \frac{d^{2\ell}f}{dx^{2\ell}}(\xi_i) \frac{\delta^{2\ell}}{(2\ell)!} \right| \\
&\leq \sum_{i = 0}^{2\ell-1} \left |\beta_i i^{2\ell} \frac{d^{2\ell}f}{dx^{2\ell}}(\xi_i) \frac{\delta^{2\ell}}{(2\ell)!} \right| \\
&\leq (2 \ell)^{3\ell} \delta^{2\ell} \sum_{i = 0}^{2 \ell - 1} \left| \frac{d^{2\ell}f}{dx^{2\ell}}(\xi_i) \right| \\
&\leq (2 \ell)^{3\ell + 1} \delta^{2\ell} \max_{\xi \in [x- 2 \ell \delta, x]} \left| \frac{d^{2\ell}f}{dx^{2\ell}}(\xi_i) \right|
\end{align*}
which completes the proof.
\end{proof}

\end{document}